\let\oldnl\nl
\newcommand{\nonl}{\renewcommand{\nl}{\let\nl\oldnl}}
\newcommand{\Otilde}{\widetilde{O}}
\newcommand{\CPref}{\mathrm{CommonPref}}
\newcommand{\CSuff}{\mathrm{CommonSuff}}
\newcommand{\Detour}{\mathrm{Detour}}
\def\sarel#1{}
\def\shiri#1{}
\newcommand{\removelatexerror}{\let\@latex@error\@gobble}
\title{Deterministic Combinatorial Replacement Paths and Distance Sensitivity Oracles} 
\author{Noga Alon}
{Department of Mathematics, Princeton University, Princeton, NJ 08544, USA and Schools of Mathematics and Computer Science, Tel Aviv University, Tel Aviv 69978, Israel.}
{nogaa@tau.ac.il}
{}
{Research supported in part by
NSF grant DMS-1855464, ISF grant 281/17 and GIF grant
G-1347-304.6/2016.}
\author{Shiri Chechik}
{Blavatnik School of Computer Science, Tel Aviv University, Tel Aviv 69978, Israel.}
{shiri.chechik@gmail.com}
{}
{Research supported in part by the Israel Science Foundation grant No. 1528/15 and the Blavatnik Fund.}
\author{Sarel Cohen}
{Blavatnik School of Computer Science, Tel Aviv University, Tel Aviv 69978, Israel.}
{sarelcoh@post.tau.ac.il}
{}
{Research supported in part by the Israel Science Foundation grant No. 1528/15 and the Blavatnik Fund.}
\authorrunning{N. Alon, S. Chechik and S. Cohen}
\keywords{Replacement Paths, Distance Sensitivity Oracles, Derandomization}
\begin{document}

\maketitle

\begin{abstract}
In this work we derandomize two central results in graph algorithms,
replacement paths and distance sensitivity oracles (DSOs) matching
in both cases the running time of the randomized algorithms.

For the replacement paths problem, let $G = (V,E)$ be a directed
unweighted graph with $n$ vertices and $m$ edges and let $P$ be a
shortest path  from $s$ to $t$ in $G$. The {\sl replacement paths}
problem is to find for every edge $e \in P$ the shortest path from
$s$ to $t$ avoiding $e$. Roditty and Zwick [ICALP 2005] obtained
 a randomized algorithm with running time of $\Otilde(m
\sqrt{n})$. Here we provide the first deterministic algorithm for
this problem, with the same $\Otilde(m \sqrt{n})$ time. Due to
matching conditional lower bounds of Williams {\sl et. al.} [FOCS
2010], our deterministic combinatorial algorithm for the replacement
paths problem is optimal up to polylogarithmic factors (unless the
long standing bound of $\Otilde(mn)$ for the combinatorial boolean
matrix multiplication can be improved). This also implies a
deterministic algorithm for the second simple shortest path problem
in $\Otilde(m \sqrt{n})$ time, and a deterministic algorithm for the
$k$-simple shortest paths problem in $\Otilde(k m \sqrt{n})$ time
(for any integer constant $k > 0$).


For the problem of distance sensitivity oracles, let $G = (V,E)$ be
a directed graph with real-edge weights. An $f$-Sensitivity Distance
Oracle ($f$-DSO) gets as input the graph $G=(V,E)$ and a parameter
$f$, preprocesses it into a data-structure, such that given a query
$(s,t,F)$ with $s,t \in V$ and $F \subseteq E \cup V, |F| \le f$
being a set of at most $f$ edges or vertices (failures), the query
algorithm efficiently computes the distance from $s$ to $t$ in the
graph $G \setminus F$ ({\sl i.e.}, the distance from $s$ to $t$ in
the graph $G$ after removing from it the failing edges and vertices
$F$).

For weighted graphs with real edge weights, Weimann and Yuster [FOCS 2010] presented several randomized $f$-DSOs.
In particular, they presented a combinatorial $f$-DSO with $\Otilde(mn^{4-\alpha})$ preprocessing time and subquadratic $\Otilde(n^{2-2(1-\alpha)/f})$ query time, giving a tradeoff between preprocessing and query time for every value of $0 < \alpha < 1$.
We derandomize this result and present a combinatorial deterministic $f$-DSO with the same asymptotic preprocessing and query time.
\end{abstract}

\section{Introduction}


In many algorithms used in computing environments such as massive
storage devices, large scale parallel computation, and communication
networks, recovering from failures must be an integral part.
Therefore, designing algorithms and data structures whose running
time is efficient even in the presence of failures is an important
task. In this paper we study variants of shortest path queries in
setting with failures.

The computation of shortest paths and distances in the presence of failures was extensively studied.
Two central problems researched in this field are the Replacement Paths problem and Distance Sensitivity Oracles, we define these problems hereinafter.

{\bf The Replacement Paths problem} (See, {\sl e.g.},
\cite{Roditty2005, WilliamsRP11, GoLe09,EmPeRo10, Klein10, WY13,
Bernstein10, Yen71, Law72, LeeL14, MaMiGu89, NaPrWi01, WilliamsW10,
Epp98}).
  Let $G=(V,E)$ be a graph (directed or undirected, weighted or unweighted) with $n$ vertices and
$m$ edges and let $P_G(s,t)$ be a shortest path from $s$ to $t$.
For every edge $e \in P_G(s,t)$ a replacement path $P_G(s,t,e)$ is a shortest path from $s$ to $t$
in the graph $G \setminus \{e\}$ (which is the graph $G$ after removing the edge $e$).
Let $d_G(s,t,e)$ be the length of the path $P_G(s,t,e)$.
The replacement paths
problem is as follows: given a shortest path $P_G(s,t)$ from $s$ to $t$ in $G$,
compute $d_G(s,t,e)$ (or an approximation of it) for every $e \in P_G(s,t)$.

{\bf Distance Sensitivity Oracles} (See, {\sl e.g.}, \cite{ChCoFiKa17, GW12, BeKa08, BK09, CLPR10, DT02, DeThChRa08, DP09, KB10}).
  An $f$-Sensitivity Distance Oracle ($f$-DSO) gets as input a graph $G=(V,E)$ and a
parameter $f$, preprocesses it into a data-structure, such that given
a query $(s,t,F)$ with $s,t \in V$ and $F \subseteq E \cup V, |F| \le f$ being a set of at most $f$
edges or vertices (failures), the query algorithm efficiently computes (exactly or approximately) $d_G(s,t,F)$ which is the distance from $s$ to $t$
in the graph $G \setminus F$ ({\sl i.e.}, in the graph $G$ after removing from it the failing edges and vertices $F$).
Here we would like to optimize several parameters of the data-structure: minimize the size of the oracle, support many failures $f$, have efficient preprocessing and query algorithms, and if the output is an approximation of the distance then optimize the approximation-ratio.

An important line of research in the theory of computer science is derandomization.
In many algorithms and data-structures there exists a gap between the best known randomized algorithms and the best known deterministic algorithms.
There has been extensive research on closing the gaps between the best known randomized and deterministic algorithms in many problems or proving that no deterministic algorithm can perform as good as its randomized counterpart. There also has been a long line of work on developing derandomization techniques, in order to obtain deterministic versions of randomized algorithms ({\sl e.g.}, Chapter 16 in \cite{alon2011probabilistic}).

In this paper we derandomize algorithms and data-structures for computing distances and shortest paths in the presence of failures.
Many randomized algorithms for computing shortest paths and distances use variants of the following sampling lemma (see Lemma 1 in Roditty and Zwick \cite{Roditty2005}).

\shiri{I think that at this point you want to mention that the following two lemmas are not the main result of the paper and also mention in short  what is the main contribution - I added a stentense pls take a look}
\sarel{Ok, I think the sentence you added is good. Later there is a complete Section \ref{sec:framework} where we address it in more details.}

\begin{lemma} [Lemma 1 in \cite{Roditty2005}] \label{lem:sampling-roditty}
Let $D_1, D_2, \ldots, D_q \subseteq V$ satisfy $|D_i| > L$ for $1 \le i \le q$ and $|V|=n$. If $R \subseteq V$ is a random subset obtained by selecting each vertex, independently, with probability $(c \ln n)/L$, for some $c>0$, then with probability of at least $1 - q \cdot n^{-c}$ we have $D_i \cap R \ne \emptyset$ for every $1 \le i \le q$.
\end{lemma}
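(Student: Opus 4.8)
The plan is to prove this by a direct first-moment (union bound) argument, which is the standard way to establish such hitting-set statements. The only probabilistic tool needed is the elementary inequality $1 - x \le e^{-x}$, valid for all real $x$.

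First I would fix an index $i$ and bound the probability that $R$ misses $D_i$. Since each vertex is placed in $R$ independently with probability $p := (c \ln n)/L$, and the events ``$v \notin R$'' for $v \in D_i$ are independent, we get
\[
\Pr[D_i \cap R = \emptyset] = (1-p)^{|D_i|}.
\]
Using $|D_i| > L$ together with $1 - p \le e^{-p}$ (and noting $p \le 1$, so that raising to a larger power only decreases the quantity), this is at most $(1-p)^{L} \le e^{-pL} = e^{-c \ln n} = n^{-c}$.

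Next I would take a union bound over the $q$ sets: the probability that there exists some $i$ with $D_i \cap R = \emptyset$ is at most $\sum_{i=1}^{q} \Pr[D_i \cap R = \emptyset] \le q \cdot n^{-c}$. Complementing, with probability at least $1 - q \cdot n^{-c}$ we have $D_i \cap R \ne \emptyset$ simultaneously for every $1 \le i \le q$, which is exactly the claim.

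There is essentially no hard step here; the lemma is a routine application of independence and the union bound. The only minor point worth stating carefully is the monotonicity step $(1-p)^{|D_i|} \le (1-p)^{L}$, which uses both $|D_i| > L$ and $0 \le 1-p \le 1$; the latter holds whenever $c \ln n \le L$, and in the regime where the lemma is applied (large $L$, or equivalently $R$ is a genuine sub-sampling with $p \le 1$) this is automatic, so I would simply note it in passing.
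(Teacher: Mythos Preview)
Your argument is correct and is exactly the standard proof of this sampling lemma: bound $\Pr[D_i\cap R=\emptyset]=(1-p)^{|D_i|}\le e^{-pL}=n^{-c}$ and take a union bound over the $q$ sets. Note that the paper does not actually prove this lemma; it merely quotes it as Lemma~1 of Roditty and Zwick, so there is no alternative approach to compare against, and your proof is precisely the intended one.
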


Our derandomization step of Lemma \ref{lem:sampling-roditty} is very simple, as described in Section \ref{sec:framework},
we use the folklore greedy approach to prove the following lemma, which is a deterministic version of Lemma \ref{lem:sampling-roditty}.

\begin{lemma} \label{lemma:greedy} [See also Section \ref{sec:framework}]
Let $D_1, D_2, \ldots, D_q \subseteq V$ satisfy $|D_i| > L$ for $1 \le i \le q$ and $|V|=n$.
One can deterministically find in $\Otilde(qL)$ time a set $R \subset V$
such that $|R| = \Otilde(n/L)$ and $D_i \cap R \ne \emptyset$ for every $1 \le i \le q$.
\end{lemma}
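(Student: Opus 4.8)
The plan is to recast the statement as a hitting-set problem and run the classical greedy algorithm on it; the only real work is to implement the greedy selection within the claimed $\Otilde(qL)$ budget. First I would reduce to the case $|D_i| = L+1$ for every $i$: since $|D_i| > L$ we have $L < n$, and since any $R$ meeting an arbitrary $(L{+}1)$-element subset $D_i' \subseteq D_i$ also meets $D_i$, we may replace each $D_i$ by such a $D_i'$. Reading off these prefixes costs $O(qL)$ time, and now $\sum_{i=1}^{q} |D_i'| = q(L+1) = O(qL)$.

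Greedy algorithm: maintain the set $U \subseteq \{1,\dots,q\}$ of indices not yet hit, and for each vertex $v$ the count $c_v = |\{\, i \in U : v \in D_i' \,\}|$. Repeatedly select a vertex $v^\star$ with $c_{v^\star}$ maximum, insert it into $R$, delete from $U$ every index $i$ with $v^\star \in D_i'$, and decrement the affected counts; stop when $U = \emptyset$. For the size bound I would invoke the standard set-cover argument: as long as $|U| = q' > 0$, we have $\sum_v c_v = \sum_{i\in U} |D_i'| > q' L$, so averaging over the at most $n$ vertices shows some $c_v > q'L/n$; hence one step shrinks $|U|$ to below $q'(1 - L/n)$. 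Iterating, after $k \ge 1$ steps $|U| < q\,(1-L/n)^k \le q\,e^{-kL/n}$, which is below $1$ — hence equal to $0$ — once $k \ge (n/L)\ln q$. Thus the algorithm halts with $|R| \le (n/L)\ln q + 1 = O\!\big((n/L)\log q\big)$, which is $\Otilde(n/L)$ since $q = n^{O(1)}$ in all of our applications. (Note that this bound on $|R|$ did not use the truncation; truncation is only needed for the running time.)

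The hard part will be making each greedy step cheap, since recomputing $\arg\max_v c_v$ from scratch over the $\Theta((n/L)\log q)$ rounds would cost roughly $\Theta(nq)$. To avoid this I would, in a single $O(qL)$-time pass, build (i) the counts $c_v$ and (ii) an inverted index storing, for each vertex $v$, the list of indices $i$ with $v \in D_i'$; then keep the occurring vertices in an array of buckets indexed by their current value $c_v \in \{0,\dots,q\}$ (using hashing so that only the $O(qL)$ vertices that actually appear are stored). Because counts never increase, a single top pointer into the bucket array moves only downward, for $O(q)$ total movement; deleting an index $i$ from $U$ scans $D_i'$ and moves $|D_i'|$ vertices down one bucket, and since each index is deleted once this is $\sum_i |D_i'| = O(qL)$ bucket operations in total; and scanning $v^\star$'s inverted list when it is picked is charged to that list, with $\sum_v(\text{list length}) = O(qL)$. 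Hence the whole procedure runs in $O(qL)$ time (or $\Otilde(qL)$ if one prefers a comparison-based priority queue), which completes the proof. This greedy routine is precisely the derandomization of Lemma~\ref{lem:sampling-roditty} used throughout Section~\ref{sec:framework}.
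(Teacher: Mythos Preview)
Your proposal is correct and follows essentially the same approach as the paper: truncate each $D_i$ to size $\Theta(L)$, run the greedy max-coverage selection, bound $|R|$ by the standard $(1-L/n)$-shrinkage argument, and implement the selection with counts plus inverted lists for $\Otilde(qL)$ total work. The only cosmetic difference is that the paper uses a balanced BST keyed on $c(v)$ where you use a bucket array with a descending top pointer; both give the claimed bound (and you already note the BST alternative).
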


We emphasize that the use of Lemma \ref{lemma:greedy} is very standard and is not our main contribution.
The main technical challenge is how to efficiently and deterministically compute a small number of sets
$D_1, D_2, \ldots, D_q \subseteq V$ so that the invocation of Lemma \ref{lemma:greedy} is fast.

\subsection{Derandomizing the Replacment Paths Algorithm of Roditty
and Zwick \cite{Roditty2005}}

We derandomize the algorithm of Roditty and Zwick \cite{Roditty2005}
and obtain a near optimal deterministic algorithm for the
replacement paths problem in directed unweighed graphs (a problem
which was open for more than a decade since the randomized algorithm
was published) as stated in the following theorem.

\begin{theorem} \label{thm:replacement}
There exists a deterministic algorithm for the replacement paths problem in unweighted directed graphs whose runtime is $\Otilde(m\sqrt{n})$.
This algorithm is near optimal assuming the conditional lower bound of combinatorial boolean matrix multiplication of \cite{WilliamsW10}.
\end{theorem}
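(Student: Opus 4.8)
The plan is to derandomize the $\Otilde(m\sqrt{n})$ randomized algorithm of Roditty and Zwick~\cite{Roditty2005} for replacement paths in unweighted directed graphs; its only use of randomness is the random hitting set of Lemma~\ref{lem:sampling-roditty}, which we replace by the greedy transversal of Lemma~\ref{lemma:greedy}. Recall the structure that~\cite{Roditty2005} exploit: writing the given shortest path as $P=(s=p_0,p_1,\dots,p_\ell=t)$ with $\ell\le n$, for every edge $e_i=(p_i,p_{i+1})\in P$ some shortest replacement path consists of a prefix $p_0\cdots p_j$ of $P$ (with $j\le i$), a \emph{detour} from $p_j$ to some $p_k$ with $k>i$ (or directly to $t$) whose internal vertices avoid $V(P)$, and the suffix $p_k\cdots p_\ell$ of $P$; hence $d_G(s,t,e_i)=j+(\text{detour length})+(\ell-k)$. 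We split according to the number of vertices on the detour. If a detour has at most $\sqrt{n}$ vertices (a \emph{short} detour) then also $k-j\le\sqrt{n}$, so it is local on $P$, and all such replacement-path lengths can be computed deterministically in $\Otilde(m\sqrt{n})$ time: run a single shortest-path computation in an $\Otilde(m)$-size auxiliary graph for each of the $O(\sqrt{n})$ length-$\sqrt{n}$ blocks of $P$ (using $\ell\le n$), then take suitable prefix/suffix minima along $P$. No randomness enters this part.

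Randomness is needed only for \emph{long} detours, those with more than $\sqrt{n}$ vertices. Following~\cite{Roditty2005}, given a set $R\subseteq V\setminus V(P)$ of size $\Otilde(\sqrt{n})$ that simultaneously meets every long detour, one computes for each $r\in R$ a BFS tree from $r$ and a reverse BFS tree to $r$ in the subgraph induced by $V\setminus V(P)$ — $\Otilde(m)$ per vertex, $\Otilde(m\sqrt{n})$ overall — and records, for every index $a$, the quantities $a+(\text{distance from }p_a\text{ into the off-}P\text{ region to }r)$ and $(\text{distance from }r\text{ through the off-}P\text{ region to }p_a)+(\ell-a)$. Taking prefix minima of the first over $a\le i$, suffix minima of the second over $a>i$, and then a minimum over $r\in R$ recovers $d_G(s,t,e_i)$ for every $e_i$ whose long detour meets $R$, in $\Otilde(n\sqrt{n})$ further time; the value reported for $e_i$ is the minimum of its short- and long-detour values. (Correctness: every term of the minimum is the length of a genuine $s$--$t$ walk avoiding $e_i$, and the term coming from $r\in R$ on the long detour of $e_i$ realizes $d_G(s,t,e_i)$ exactly.)

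It remains to build such an $R$ deterministically, which we do by applying Lemma~\ref{lemma:greedy} with $L=\sqrt{n}$ to a family $D_1,\dots,D_q$ of sets, each of size more than $\sqrt{n}$, whose common transversal (of size $\Otilde(n/L)=\Otilde(\sqrt{n})$) is forced to meet every long detour. The obvious choice — let $D_i$ be the vertex set of the long detour of $e_i$ — is unavailable: the detours are unknown, and recomputing a replacement path per edge would cost $\Omega(mn)$. The technical core of the proof is therefore to compute, using only $\Otilde(\sqrt{n})$ BFS/shortest-path-tree computations (hence $\Otilde(m\sqrt{n})$ time in total), for every $e_i$ a set $D_i$ of more than $\sqrt{n}$ vertices that provably lies on — equivalently, provably meets — the true long detour of $e_i$, so that any transversal of $\{D_i\}$ meets every long detour. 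I expect to obtain these surrogate sets by relating the long detours to the vertex sets of a small, explicitly computable collection of shortest-path trees (roughly one per length-$\sqrt{n}$ block of $P$) and arguing that a long detour, restricted to the relevant tree — or to the difference of two consecutive trees — still has more than $\sqrt{n}$ vertices. Running Lemma~\ref{lemma:greedy} on these $q=O(n)$ sets costs $\Otilde(qL)=\Otilde(n^{3/2})=\Otilde(m\sqrt{n})$ and yields the desired $R$.

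Assembling the pieces — computing $P$, the deterministic short-detour phase, the deterministic construction of $R$, the per-vertex BFS/reverse-BFS with the prefix/suffix-minima combination, and the pointwise minimum over the two cases — gives a deterministic replacement-paths algorithm running in $\Otilde(m\sqrt{n})$ time. Near-optimality is then immediate from the running time together with~\cite{WilliamsW10}: a combinatorial algorithm beating $\Otilde(m\sqrt{n})$ for replacement paths by a polynomial factor would yield a combinatorial boolean matrix multiplication algorithm beating the longstanding $\Otilde(mn)$ bound, so under that hardness assumption our running time is optimal up to polylogarithmic factors. The main obstacle, as indicated, is the deterministic $\Otilde(m\sqrt{n})$-time construction of the hitting-set instance $D_1,\dots,D_q$ from only a handful of shortest-path-tree computations; once that is in hand, the rest is a faithful, randomness-free rerun of~\cite{Roditty2005}.
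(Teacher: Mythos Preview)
Your framework is exactly right: the short-detour phase of \cite{Roditty2005} is already deterministic, only the long-detour phase needs a hitting set $R$, and Lemma~\ref{lemma:greedy} will produce $R$ once you have a small family of large sets to hit. You also correctly identify the bottleneck and explicitly flag it as unresolved (``I expect to obtain these surrogate sets\dots''; ``The main obstacle\dots\ is the deterministic $\Otilde(m\sqrt{n})$-time construction of the hitting-set instance''). But that \emph{is} the theorem: everything else is a faithful rerun of \cite{Roditty2005}, and your proposal supplies no construction of the $D_i$'s, only a hope that some block-indexed family of shortest-path trees will work. As stated, this is a plan, not a proof.

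Concretely, the missing idea is not to attach a set $D_i$ to each edge $e_i$ at all. The paper instead attaches one $\lceil\sqrt{n}\rceil$-edge path $D(x)$ to each \emph{vertex} $x$ that can occur as the $(\lceil\sqrt{n}\rceil{+}1)$-th vertex of some long detour. For such $x$ there is a well-defined minimal index $\rho(x)$ with $d_{G'}(v_{\rho(x)},x)=\lceil\sqrt{n}\rceil$ and $d_{G'}(v_i,x)>\lceil\sqrt{n}\rceil$ for all $i<\rho(x)$ (here $G'=G\setminus E(P)$), and $D(x)$ is any shortest $v_{\rho(x)}$--$x$ path in $G'$. The structural lemma is that for every long triple $(s,t,e)$, if $x$ is the $(\lceil\sqrt{n}\rceil{+}1)$-th vertex of the detour and $v_j$ its first vertex, then necessarily $j=\rho(x)$, and swapping the first $\lceil\sqrt{n}\rceil$ edges of the detour for $D(x)$ yields another replacement path of the same length; hence hitting $\{D(x):x\in V_{\sqrt{n}}\}$ suffices. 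This gives at most $n$ paths $\mathcal{D}_n$, and they are computable in $\Otilde(m\sqrt{n})$ time by a decremental trick: delete the vertices of $P$ one by one from $t$ towards $s$, maintain truncated single-source distances from $s$ (with $\epsilon$-weights on $P$), and observe that each vertex's rounded distance strictly increases whenever it is touched, so it is processed at most $\lceil\sqrt{n}\rceil{+}1$ times. Your tree-per-block suggestion is in the right spirit (the paper even gives an alternative algorithm reusing the $2\lceil\sqrt{n}\rceil$ auxiliary graphs from the short-detour phase), but it still rests on the $\rho(x)$ insight above; without it, you have not shown how to obtain any family that provably meets every long detour.
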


The term ``combinatorial
algorithms'' is not well-defined, and it is often interpreted as
non-Strassen-like algorithms \cite{BaDeHoSc12}, or more intuitively,
algorithms that do not use any matrix multiplication tricks.
Arguably, in practice, combinatorial algorithms are to some extent
considered more efficient since the constants hidden in the matrix
multiplication bounds are high. On the other hand, there has been
research done to make fast matrix multiplication practical, {\sl
e.g.}, \cite{HuRiMaGe17, AuGr15}.

Vassilevska Williams and Williams \cite{WilliamsW10} proved a
subcubic equivalence between $\sqrt{n}$ occurrences of the combinatorial replacement paths
problem in unweighted directed graphs and the combinatorial boolean
multiplication (BMM) problem.
More precisely, they proved that there exists some
fixed $\epsilon >0$ such that the combinatorial replacement paths problem
can be solved in $O(mn^{1/2-\epsilon})$ time if and only if there exists some
fixed $\delta > 0$ such that the combinatorial boolean matrix
multiplication (BMM) can be solved in subcubic $O(n^{3-\delta})$
time. Giving a subcubic combinatorial algorithm to the BMM problem,
or proving that no such algorithm exists, is a long standing open
problem. This implies that either both problems can be polynomially
improved, or neither of them does. Hence, assuming the conditional
lower bound of combinatorial BMM, our combinatorial $\Otilde(m \sqrt{n})$ algorithm for the
replacement paths problem in unweighted directed graphs is essentially optimal (up to $n^{o(1)}$ factors).

The replacement paths problem is related to the $k$ simple shortest
paths problem, where the goal is to find the $k$ simple shortest
paths between two vertices. Using known reductions from the
replacement paths problem to the $k$ simple shortest paths problem,
we close this gap as the following Corollary states.

\begin{corollary}
There exists a deterministic algorithm for computing $k$ simple shortest paths in unweighted directed graphs whose runtime is $\Otilde(k m\sqrt{n})$.
\end{corollary}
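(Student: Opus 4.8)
The plan is to combine the deterministic replacement paths algorithm of Theorem~\ref{thm:replacement} with the known reduction of Roditty and Zwick~\cite{Roditty2005} from the $k$ simple shortest paths problem to the replacement paths problem, and to check that this reduction uses no randomness beyond what is already encapsulated in the replacement paths subroutine. Concretely, Roditty and Zwick show that in an unweighted directed graph the $k$ simple shortest $s$-$t$ paths can be produced by $\Otilde(k)$ invocations of (a mild variant of) their replacement paths procedure, run on graphs obtained from $G$ by deleting some vertices and edges, together with near-linear bookkeeping overhead. Deleting vertices and edges only shrinks the graph, so each such invocation still runs in $\Otilde(m\sqrt n)$ deterministic time by Theorem~\ref{thm:replacement}; summing over the $\Otilde(k)$ invocations gives the claimed $\Otilde(km\sqrt n)$ bound.

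In a little more detail, I would follow the Lawler/Yen enumeration scheme: compute $P_1$, a shortest $s$-$t$ path, by BFS, and maintain a priority queue of candidate paths; having extracted $P_1,\dots,P_i$, the next path $P_{i+1}$ is a shortest candidate, and extracting it triggers, for each prefix of $P_i$ ending at a \emph{spur vertex}, the computation of a best detour that avoids the interior vertices of that prefix (to stay simple) and a small forbidden set of first edges (to avoid regenerating an earlier path). The point of the reduction is that \emph{all} detours associated with a single extracted path $P_i$ can be obtained from $O(1)$ replacement-paths computations with respect to (a suffix of) $P_i$ in the appropriately pruned graph — vertex deletions being simulated in the usual way by splitting each vertex into an in-copy and an out-copy joined by a unit edge. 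Replacing the randomized replacement-paths subroutine used inside this scheme by the deterministic algorithm of Theorem~\ref{thm:replacement} makes the whole procedure deterministic, at the same asymptotic cost. The special case $k=2$ recovers the deterministic $\Otilde(m\sqrt n)$ algorithm for the second simple shortest path stated in the abstract.

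The main point needing care — and the only real obstacle — is to verify that every use of randomization in the Roditty--Zwick $k$-SSP algorithm is confined to its replacement-paths subroutine, so that substituting Theorem~\ref{thm:replacement} suffices, and that the replacement-paths instances arising in the reduction (with a forbidden prefix of vertices and a forbidden set of first edges) are still instances to which Theorem~\ref{thm:replacement} applies. The latter holds because Theorem~\ref{thm:replacement} is stated for an arbitrary unweighted directed graph together with a given shortest path, and the pruned graphs are again unweighted directed graphs in which a shortest path is known (the residual suffix of $P_i$); the former follows by inspection of~\cite{Roditty2005}. Once these are checked, the running-time accounting above yields the deterministic $\Otilde(km\sqrt n)$ algorithm, completing the proof.
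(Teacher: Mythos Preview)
Your proposal is correct and matches the paper's approach: the paper does not give a standalone proof of this corollary but simply invokes the known (deterministic) reduction of Roditty and Zwick~\cite{Roditty2005} from the $k$ simple shortest paths problem to $k$ executions of the second-shortest-path/replacement-paths problem, and plugs in Theorem~\ref{thm:replacement} as the black-box subroutine. Your write-up is in fact more detailed than what the paper provides; the only minor point is that the vertex-splitting trick you mention is not strictly needed here, since in the Roditty--Zwick reduction the prefix vertices are simply deleted before the replacement-paths call, and the resulting instance is again an unweighted directed graph to which Theorem~\ref{thm:replacement} applies directly.
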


More related work can be found in Section \ref{appendix:more-related-work}.
As written in Section \ref{appendix:more-related-work}, the trivial $\Otilde(mn)$ time algorithm for solving the replacement paths problem in directed weighted graphs (simply, for every edge $e \in P_G(s,t)$ run Dijkstra in the graph $G\setminus \{e\}$) is deterministic and near optimal (according to a conditional lower bound by \cite{WilliamsW10}).
To the best of our knowledge the only deterministic combinatorial
algorithms known for directed unweighted graphs are the algorithms
for general directed weighted graphs whose runtime is $\Otilde(mn)$
leaving a significant gap between the randomized and deterministic
algorithms. As mentioned above, in this paper we derandomize the
$\Otilde(m \sqrt{n})$ algorithm of Roditty and Zwick
\cite{Roditty2005} and close this gap.


%
%

\subsection{Derandomizing the Combinatorial Distance Sensitivity Oracle of
Weimann and Yuster \cite{WY13}}

Our second result is derandomizing the combinatorial distance
sensitivity oracle of Weimann and Yuster \cite{WY13} and obtaining
the following theorem.

\begin{theorem} \label{thm:dso}
Let $G=(V,E)$ be a directed graph with real edge weights, let
$|V|=n$ and $|E|=m$. There exists a deterministic algorithm that
given $G$ and parameters $f = O(\frac{\log n}{\log \log n})$ and $0
< \alpha < 1$ constructs an $f$-sensitivity distance oracle in
$\Otilde(mn^{4-\alpha})$ time. Given a query $(s,t,F)$ with $s,t \in
V$ and $F \subseteq E \cup V, |F| \le f$ being a set of at most $f$
edges or vertices (failures), the deterministic query algorithm
computes in $\Otilde(n^{2-2(1-\alpha)/f})$  time the distance from
$s$ to $t$ in the graph $G \setminus F$.
\end{theorem}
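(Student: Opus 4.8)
\textbf{Proof plan for Theorem~\ref{thm:dso}.}
The plan is to take the randomized combinatorial $f$-DSO of Weimann and Yuster~\cite{WY13}, isolate its single probabilistic ingredient, and replace it by a deterministic construction through Lemma~\ref{lemma:greedy}, in the same spirit as our derandomization of the replacement-paths algorithm in Section~\ref{sec:framework}. Recall the shape of the randomized oracle. One fixes the length threshold $L = n^{(1-\alpha)/f}$ and a set of pivots $C \subseteq V$, obtained by sampling each vertex independently with probability $\Theta((\log n)/L)$; then $|C| = \Otilde(n/L) = \Otilde(n^{1-(1-\alpha)/f})$ and, by Lemma~\ref{lem:sampling-roditty}, $C$ meets every member of a certain family of critical paths, each having more than $L$ vertices. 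The preprocessing then computes, from each pivot, in- and out-shortest-path information in $G$ and in the ``few-failure'' subgraphs that the analysis of~\cite{WY13} charges against, together with the short (at most $L$-hop) replacement distances among pivots and failure endpoints that the query needs; this is the part costing $\Otilde(mn^{4-\alpha})$. On a query $(s,t,F)$, a shortest $s$-$t$ path $\pi$ in $G \setminus F$ is either short --- at most $L$ edges, hence read off from the precomputed short distances --- or long, in which case every window of $L$ consecutive vertices of $\pi$ contains a pivot of $C$ (which lies on $\pi$ and therefore avoids $F$ automatically), so $\pi$ is cut at its pivots into $\le L$-hop pieces and reassembled by a shortest-path computation on an auxiliary graph on $C$ together with the $O(f)$ endpoints of the failures, a graph of size $\Otilde(|C|)$, costing $\Otilde(|C|^2) = \Otilde(n^{2-2(1-\alpha)/f})$. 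The only use of randomness is the sampling of $C$; the query procedure of~\cite{WY13} is already deterministic once $C$ is fixed.

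To derandomize, I would first make explicit the family $\mathcal D = \{D_1,\dots,D_q\}$ of critical paths that $C$ is required to hit for correctness: for each vertex $v$, each relevant failure pattern $F'$ in the collection that the analysis of~\cite{WY13} charges against, and each vertex $u$ at hop-distance exactly $L-1$ from $v$ in $G \setminus F'$, one shortest $v$-$u$ path of $G \setminus F'$ (equivalently, the length-$L$ windows of the shortest-path trees in $G$ and in the charged subgraphs). Hitting all length-$L$ windows suffices to hit all longer relevant shortest paths, since any sub-path of a shortest path is again a shortest path between its endpoints. Viewing each $D_i$ as a subset of $V$ with $|D_i| > L$, one call to Lemma~\ref{lemma:greedy} produces, in $\Otilde(qL)$ time, a set $R \subseteq V$ with $|R| = \Otilde(n/L) = \Otilde(n^{1-(1-\alpha)/f})$ meeting every $D_i$. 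Using $R$ in place of $C$ changes nothing else, so Theorem~\ref{thm:dso} follows once we verify (a) that $R$ is a legitimate replacement --- which holds because $R$ has exactly the hitting property the correctness proof of~\cite{WY13} invokes --- and (b) that building $\mathcal D$ and running Lemma~\ref{lemma:greedy} both fit inside the $\Otilde(mn^{4-\alpha})$ budget.

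Part (b) is the crux and the step I expect to be the main obstacle: one must exhibit a \emph{polynomially} bounded collection of failure patterns that is nonetheless sufficient for correctness, even though $f$ may be as large as $\Theta(\log n / \log\log n)$ and the number of all size-$\le f$ failure sets is superpolynomial in $f$. This is where the structural insight of~\cite{WY13} enters --- the relevant patterns can be confined to failures lying on a bounded number of precomputed shortest paths, and ordered --- and where the hypothesis $f = O(\log n/\log\log n)$ is used: it keeps the count of patterns one must enumerate polynomial, being of order $f!$ (equivalently $|A|^{f}$ for an auxiliary landmark set $A$ of size $n^{O(1/f)}$), which is $n^{O(1)}$ precisely when $f \log f = O(\log n)$. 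Each such pattern then costs one APSP/SSSP computation of $\Otilde(mn)$ to realize its critical paths, and the accounting must be arranged so that this, together with the $\Otilde(qL)$ greedy step (with $L = n^{(1-\alpha)/f} < n$), is dominated by $\Otilde(mn \cdot n^{3-\alpha}) = \Otilde(mn^{4-\alpha})$. Pinning down which patterns are charged so that $\mathcal D$ is simultaneously sufficient, polynomial, and cheap to compute is the delicate point; once it is settled, re-running the preprocessing of~\cite{WY13} with $R$ and invoking its unchanged deterministic query algorithm yields the claimed $\Otilde(mn^{4-\alpha})$ preprocessing and $\Otilde(n^{2-2(1-\alpha)/f})$ query time, completing the proof.
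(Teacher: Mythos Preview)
Your high-level plan --- identify a small family of critical paths, then replace the random pivot sampling by the greedy hitting-set of Lemma~\ref{lemma:greedy} --- is the right framework, but you have misread the randomized algorithm you are trying to derandomize, and this causes a genuine gap.

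Weimann--Yuster's combinatorial oracle has \emph{two} independent random ingredients, not one. Besides the pivot set $B$, the preprocessing generates $r=\Otilde(n^{\epsilon})$ random subgraphs $G_1,\dots,G_r$ (each obtained by deleting every edge independently with probability $n^{-\epsilon/f}$, where $\epsilon=1-\alpha$) and runs APSP in each; at query time the weight of the edge $(u,v)$ in your auxiliary graph $H$ is set to $\min_i\{d_{G_i}(u,v): F\cap G_i=\emptyset\}$, which w.h.p.\ equals $d^L_G(u,v,F)$. Your sentence ``the only use of randomness is the sampling of $C$'' is therefore false, and your query description (``read off from the precomputed short distances'') hides exactly the step that depends on the random subgraphs. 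Derandomizing the pivots alone does not give a deterministic oracle: you still have no deterministic way to answer ``what is the shortest $\le L$-hop $u$--$v$ path in $G\setminus F$?'' at query time.

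The paper's solution is not to enumerate failure patterns at all. It introduces the fault-tolerant trees $FT^{L,f}(s,t)$: the root stores $P^L_G(s,t)$; for every $a_1$ on that path, a child stores $P^L_G(s,t,\{a_1\})$; and so on recursively to depth $f$. Each tree has $\le L^f=n^{\epsilon}$ nodes, hence there are $O(n^{2+\epsilon})$ paths in total --- this is both the critical family $\mathcal D_f$ fed to the greedy algorithm \emph{and} the data structure that replaces the random subgraphs, since a query $(u,v,F)$ walks down $FT^{L,f}(u,v)$ in $O(f^2\log L)$ time to recover $d^L_G(u,v,F)$ exactly. The construction is by a dedicated dynamic program (Section~\ref{sec:dynamic-programming}), not by running SSSP in $G\setminus F'$ for a collection of patterns $F'$. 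Your ``$f!$ or $|A|^f$ patterns'' heuristic is neither the paper's bound nor, as stated, a workable argument: you never say which sets $F'$ you enumerate, why they suffice for every query $F$, or how you compute $L$-hop (as opposed to unrestricted) shortest paths in each $G\setminus F'$. The missing idea is precisely the recursive tree structure that simultaneously bounds the number of intervals by $O(n^{2+\epsilon})$ and furnishes a deterministic query mechanism.
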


We remark that while our focus in this paper is in computing
distances, one may obtain the actual shortest path in time
proportional to the number of edges of the shortest paths, using the
same algorithm for obtaining the shortest paths in the replacement
paths problem \cite{Roditty2005}, and in the distance sensitivity
oracles case \cite{WY13}.

\subsection{Technical Contribution and Our Derandomization Framework} \label{sec:framework}
Let ${\cal A}$ be a random algorithm that uses Lemma
\ref{lem:sampling-roditty} for sampling a subset of vertices $R
\subseteq V$. We say that ${\cal P} = \{D_1, \ldots, D_q\}$ is a set
of {\sl critical} paths for the randomized algorithm ${\cal A}$ if
${\cal A}$ uses the sampling Lemma \ref{lem:sampling-roditty} and it
is sufficient for the correctness of algorithm ${\cal A}$ that $R$
is a hitting set for  ${\cal P}$ ({\sl i.e.}, every path in ${\cal
P}$ contains at least one vertex of $R$). According to Lemma
\ref{lemma:greedy} one can derandomize the random selection of the
hitting set $R$ in time that depends on the number of paths in
${\cal P}$. Therefore, in order to obtain an efficient
derandomization procedure, we want to find a small set ${ \cal P}$
of critical paths for the randomized algorithms.

Our main technical contribution is to show how to compute a small set of critical paths that is sufficient to be used as input for the greedy algorithm stated in Lemma \ref{lemma:greedy}.

Our framework for derandomizing algorithms and data-structures that use the sampling Lemma \ref{lem:sampling-roditty} is given in Figure \ref{fig:framework}.

\begin{figure} [h]
\begin{algorithm}[H]
\label{fig:framework}
    \DontPrintSemicolon
    {\bf Step 1}: Prove the existence of a small set of critical paths $\{D_1, \ldots, D_q \}$ such that $|D_i| > L$ and show that it is sufficient for the correctness of the randomized algorithm that the set $R$ obtained by Lemma \ref{lem:sampling-roditty} hits all the paths $D_1, \ldots, D_q $. \;
    {\bf Step 2}: Find an efficient algorithm to compute the paths $D_1, \ldots, D_q$. \;
    {\bf Step 3}: Use a deterministic algorithm to compute a small subset $R \subseteq V$ of vertices such that $D_i \cap R \ne \emptyset$ for every $1 \le i \le q$. For example, one can use the greedy algorithm of Lemma \ref{lemma:greedy} or the blocker set algorithm of \cite{King99} to find a subset $R \subset V$ of $\Otilde(n/L)$ vertices. \;
\end{algorithm}
\caption{Our derandomization framework to derandomize algorithms that use the sampling Lemma \ref{lem:sampling-roditty}.}
\end{figure}


Our first main technical contribution, denoted as Step 1 in Figure \ref{fig:framework}, is proving the existence of small sets of critical paths for the randomized replacement path algorithm of Roditty and Zwick \cite{Roditty2005}
and for the distance sensitivity oracles of Weimann and Yuster \cite{WY13}.
Our second main technical contribution, denoted as Step 2 in Figure \ref{fig:framework}, is developing algorithms to efficiently compute these small sets of critical paths.

For the replacement paths problem, Roditty and Zwick
\cite{Roditty2005} proved the existence of a critical set of
$O(n^2)$ paths, each path containing at least $\lceil \sqrt{n}
\rceil$ edges. Simply applying Lemma \ref{lemma:greedy} on this set
of paths requires $\Otilde(n^{2.5})$ time which is too much, and it
is also not clear from their algorithm how to efficiently compute
this set of critical paths. As for Step 1, we prove the existence of
a small set of $O(n)$ critical paths, each path contains $\lceil
\sqrt{n} \rceil$ edges, and for Step 2, we develop an efficient
algorithm that computes this set of critical paths in
$\Otilde(m\sqrt{n})$ time.

For the problem of distance sensitivity oracles, Weimann and Yuster
\cite{WY13} proved the existence of a critical set of $O(n^{2f+3})$
paths, each path containing $n^{(1-\alpha)/f}$ edges (where $0 <
\alpha < 1$). Simply applying Lemma \ref{lemma:greedy} on this set
of paths requires $\Otilde(n^{2f+3 + (1-\alpha)/f})$ time which is
too much, and here too, it is also not clear from their algorithm
how to efficiently and deterministically compute this set of
critical paths. As for Step 1, we prove the existence of a small set
of $O(n^{2+\epsilon})$ critical paths, each path contains
$n^{(1-\alpha)/f}$ edges, and for Step 2, we develop an efficient
deterministic algorithm that computes this set of critical paths in
$\Otilde(mn^{1+\epsilon})$ time.

For Step 3, we use the folklore greedy deterministic algorithm denoted here by
\newline GreedyPivotsSelection$(\{ D_1, \ldots, D_q \})$.
Given as input the paths $D_1, \ldots, D_q$,
each path contains at least $L$ vertices, the algorithm chooses a set of pivots $R \subseteq
V$ such that for every $1 \le i \le q$ it holds that $D_i \cap
R \ne \emptyset$. In addition, it holds that $|R| = \Otilde(\frac{n}{L})$ and the runtime of the algorithm is $\Otilde(qL)$.

The GreedyPivotsSelection algorithm works as follows.
Let $\mathcal{P} = \{ D_1, \ldots, D_q \}$.
Starting with $R \gets \emptyset$, find a vertex $v \in V$ which is contained in the
maximum number of sets of $\mathcal{P}$, add it to $R$
and remove all the sets that contain $v$ from $\mathcal{P}$. Repeat
this process until $\mathcal{P} = \emptyset$.


\begin{lemma} \label{lemma:greedy-correctness}
Let $1 \le L \le n$ and $1 \le q < poly(n)$ be two integers.
Let $D_1, \ldots, D_q \subseteq V$ be paths satisfying $|D_i| \ge L$ for every $1 \le i \le q$.
The algorithm GreedyPivotsSelection$( \{D_1, \ldots, D_q \})$ finds in $\Otilde(qL)$ time a set $R \subset V$
such that for every $1 \le i \le q$ it holds that $R \cap D_i \ne
\emptyset$ and $|R| = O(\frac{n \log q}{L}) = \Otilde(n/L)$.
\end{lemma}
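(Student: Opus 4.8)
The plan is to analyze the greedy set-cover-style procedure directly. The runtime bound is the easy part: we maintain the collection $\mathcal{P}$ together with, for each vertex $v \in V$, a counter $\mathrm{cnt}(v)$ of how many currently-surviving sets $D_i$ contain $v$, and inverted lists recording, for each vertex, which surviving sets contain it. Building these costs $O(\sum_i |D_i|) = \Otilde(qL)$ time if we trust the input is already trimmed to $L$ vertices per path (or $O(q L)$ by truncating each $D_i$ to its first $L$ vertices, which only makes the covering requirement harder and hence is without loss of generality). In each iteration we pick the vertex of maximum counter (a max over at most $n$ counters, or maintained in a priority queue), delete from $\mathcal{P}$ every set containing it, and decrement the affected counters; the total work of all deletions is again bounded by $\sum_i |D_i| = \Otilde(qL)$ since each set is removed exactly once and touching it costs $O(|D_i|) = \Otilde(L)$. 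Hence the total time is $\Otilde(qL)$.

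The combinatorial heart is the bound $|R| = O(\tfrac{n \log q}{L})$. First I would observe that the process certainly terminates with $R \cap D_i \neq \emptyset$ for all $i$, because a set is discarded only when a vertex it contains is added to $R$, and the loop runs until $\mathcal{P}$ is empty. For the size bound, the standard greedy-covering argument applies: at any stage where $q'$ sets survive, a simple averaging/pigeonhole argument shows some vertex lies in at least a $\tfrac{L}{n}$-fraction of the surviving sets — indeed $\sum_{v \in V} \mathrm{cnt}(v) = \sum_{i \text{ surviving}} |D_i| \ge q' L$, so the maximum counter is at least $q' L / n$. The greedy step therefore removes at least a $\tfrac{L}{n}$-fraction of the remaining sets, so after $k$ iterations at most $q(1 - L/n)^k \le q\, e^{-kL/n}$ sets remain. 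Setting this below $1$ gives $k \le \tfrac{n}{L}\ln q + 1 = O(\tfrac{n \log q}{L})$, which is the claimed bound on $|R|$; since $q < \mathrm{poly}(n)$ this is $\Otilde(n/L)$.

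I do not anticipate a serious obstacle here — this is the textbook analysis of greedy set cover specialized to the case where every set to be covered is large. The only points that need a little care are: (i) making sure the averaging step is valid even as sets shrink conceptually, which is why truncating each $D_i$ to exactly $L$ vertices up front is convenient (it keeps $|D_i| \ge L$ an invariant for surviving sets and makes $\sum_{v} \mathrm{cnt}(v) \ge q' L$ immediate); and (ii) implementing the ``find the max-coverage vertex and update'' step so that the amortized cost telescopes to $\Otilde(qL)$ rather than $\Otilde(qL + n \cdot \#\text{iterations})$ — using the inverted lists means we only ever visit vertex–set incidences that actually exist, of which there are $\sum_i |D_i| = \Otilde(qL)$, plus $O(n)$ bookkeeping per iteration over $O(\tfrac{n\log q}{L})$ iterations, which is subsumed once $q \ge n/L$ (and the regime $q < n/L$ is trivial). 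Putting the termination argument, the exponential-decay counting bound, and the amortized data-structure accounting together yields the lemma.
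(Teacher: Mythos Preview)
Your proposal is correct and follows essentially the same approach as the paper: truncate each $D_i$ to exactly $L$ vertices, use the standard greedy set-cover averaging argument (a $(1-L/n)$-factor decay per step, giving $\lceil (n/L)\ln q\rceil$ iterations) for the size bound, and maintain per-vertex counters plus inverted lists together with a priority structure for the $\Otilde(qL)$ runtime. One small slip: your claim that naive $O(n)$-per-iteration max-finding is ``subsumed once $q \ge n/L$'' is not right (take $L = q = \sqrt{n}$, where $qL = n$ but $n\cdot\#\text{iterations} = \Theta(n^{1.5}\log n)$), but this is moot since you already flagged the priority-queue option, which is exactly what the paper does (a BST keyed by the counters, with $O(\log n)$ extract-max and key-updates).
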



\begin{proof}
We first prove that for every $1 \le i \le q$ it holds that $R \cap D_i \ne
\emptyset$ and $|R| = O(\frac{n \log q}{L}) = \Otilde(n/L)$.

When the algorithm terminates then every set $D \in
\mathcal{D}$ contains at least one of the vertices of $R$, as otherwise
$\mathcal{D}$ would have contained the sets which are disjoint from
$R$ and the algorithm should have not finished since $\mathcal{D}
\ne \emptyset$.

For every vertex $v \in V$, let $c(v)$ be a variable which denotes, at every moment of the algorithm, the number of sets in $\mathcal{D}$ which contain $v$.

Denote by $\mathcal{D}_i$ the set $\mathcal{D}$ after $i$ iterations.
Let $\mathcal{D}_0 = \{ D_1, \ldots, D_q\}$ be the initial set $\mathcal{D}$ given as input to
the algorithm, then $|\mathcal{D}_0| = q$.  We claim that the process
terminates after at most $\Otilde(n/L)$ iterations, and since at
every iteration we add one vertex $v$ to $R$, it follows that $|R| =
\Otilde(n/L)$.  Recall that $\mathcal{D}$ contains sets of size
at least $L$.  Hence, $\Sigma_{v \in V} c(v) \ge |\mathcal{D}| L$.
It follows that the average number of sets that a vertex $v \in V$
belongs to is:
$avg = \frac{\Sigma_{v \in V} c(v)}{n} \ge \frac{|\mathcal{D}| L }{n}$.
By the pigeonhole principle, the vertex $v_i = \arg \max_{v \in V}
\{ c(v) \}$  belongs to at least $\frac{|\mathcal{D}|L}{n}$ sets of
$\mathcal{D}$. Therefore, $|\mathcal{D}_i| = |\{ D \in \mathcal{D}
| v_i \in D \}| \ge \frac{|\mathcal{D}|L}{n}$. At iteration $i$
we remove from $\mathcal{D}$ the sets $\mathcal{D}_i$, so in each
iteration we decrease the size of $\mathcal{D}$ by at least a factor of
$(1-L/n)$. After the $i^{\text{th}}$ iteration, the size of $\mathcal{D}$
is at most $(1-L/n)^i |\mathcal{D}_0|$. Therefore, after the $i =
(n/L)\ln q + 1$ iteration, the size of $\mathcal{D}$ is at most $(1-L/n)^i
|\mathcal{D}_0| < 1/q |\mathcal{D}_0| \le 1$, where the last inequality
holds since $|\mathcal{D}_0|=q$.  It follows that
after $(n/L) \ln q + 1$ iterations we have $\mathcal{D} = \emptyset$.

At each iteration we add one vertex $v_i$ to the set $R$, thus the size
of the set $R$ is $\Otilde(n/L)$.

Next we describe an implementation of the GreedyPivotsSelection algorithm (see Figure
\ref{fig:greedy-pivots-selection} for pseudo-code).
The first thing we do is keep only an arbitrary subset of $L$ vertices from every
$D \in \mathcal{D}$ so that every set $D \in \mathcal{D}$ contains exactly
$L$ vertices.

We implement the algorithm GreedyPivotsSelection as follows.
During the runtime of the algorithm we
maintain a counter $c(v)$ for every vertex $v \in V$ which equals the
number of sets in $\mathcal{D}$ that contain $v$.
During the initialization
of the algorithm, we construct a subset of vertices $V' \subseteq V$ which contains all the vertices in all the
paths $\mathcal{D}$, and compute
we compute $c(v)$ directly, first by setting $\forall_{v
\in V'} c(v) \gets 0$ and then we scan all the sets $D \in \mathcal{D}$
and every vertex $v \in D$ and increase the counter $c(v)
\gets c(v) + 1$. After this initialization we have $c(v) = |\{ D \in
\mathcal{D} | v \in D \}|$ which is the number of sets of $\mathcal{D}$
that contain $v$. We further initialize a binary search tree $BST$ and insert
every vertex $v \in V'$ into $BST$ with the key $c(v)$, and initialize $R \gets \emptyset$.
We also create a list $L(v)$ for every vertex $v \in V'$ which contains pointers
to the sets $D \in \mathcal{D}$ that contain $v$. Hence, $L(v) = \{
D \in \mathcal{D} | v \in D \}$ and $c(v) = |L(v)|$.

To obtain the set $R$ we run the following loop. While $\mathcal{D}
\ne \emptyset$ we find the vertex $v \in V'$ which is contained in the
maximum number of paths of $\mathcal{D}$ and add $v$ to $R$. The vertex
$v$ is computed in $O(\log n)$ time by extracting the element in $BST$ whose key is maximal.
Then we remove from $\mathcal{D}$ all the sets which contain $v$ (these are exactly the sets
$L(v)$) and we update the counters $c(v)$ by scanning every set $D \in
L(v)$ and every vertex $u \in D$ and decreasing the counter $c(u)$ by one (we also update the key of $u$ in $BST$ to the counter $c(u)$).

We analyse the runtime of this greedy algorithm. Computing the subset of vertices $V' \subseteq V$
and setting all the values $c(v) \gets 0$ at the beginning for every $v \in V'$  takes
$\Otilde(qL)$ time. Computing the values $c(v) = |\{ D \in \mathcal{D} |
v \in D \}|$ takes $O(qL)$ time as we loop over all the $q$
sets $D \in \mathcal{D}$ and for every $D$ we loop over the exactly $L$
vertices $v \in D$ and increase the counter $c(v)$ by one. Initializing the binary search tree $BST$
and inserting to it every vertex $v \in V'$ with key $c(v)$ takes $\Otilde(|V'|) = \Otilde(qL)$ time, and all the extract-max
operations on $BST$ take additional $O(|V'|) = \Otilde(qL)$ time.
The total time of operations of
the form $c(v) \gets c(v)-1$ is $O(qL)$ as this is the sum of all
values $c(v)$ at the beginning and each such operation is handled in $O(\log n)$ time
by updating the key of the vertex $v$ in $BST$ to $c(v)-1$. The total time for checking the
lists $L(v)$ of all vertices chosen to $R$ is at most $O(qL)$, as
this is the sum of sizes of all sets $L(v)$. Therefore, the total
running time is $\Otilde(qL)$.
\end{proof}

\begin{figure}
\removelatexerror
\begin{algorithm}[H]
\label{fig:greedy-pivots-selection}
    \nonl \TitleOfAlgo{GreedyPivotsSelection$(\{ D_1, \ldots, D_q \})$}
    \DontPrintSemicolon
    \tcc*[l]{Initialization}
    $V' \gets \emptyset$ \;
    \For {$D \in \mathcal{D}$}
    {\For {$v \in D$}
    {$V' \gets V' \cup \{ v \}$ \; }}

    \For {$v \in V'$}
    { $c(v) \gets 0, L(v) \gets \emptyset$ \;
    }

    \For {$D \in \mathcal{D}$}
    {\For {$v \in D$}
    { $c(v) \gets c(v) + 1$ \;
    $L(v)$.append($D$) \tcc*[r]{Append the list $L(v)$ with a pointer to the set $D$}
    }}

    $BST \gets $ Empty-Binary-Search-Tree() \;
    \For {$v \in V'$}
    { Insert $v$ to the binary-search $BST$ with the key $c(v)$. \;
    }

    $R \gets \emptyset$. \;
    \tcc*[l]{Loop Invariant: $c(v) = |\{D \in \mathcal{D} | v \in D \}|$}
    \While{$\mathcal{D} \ne \emptyset$}
    {
        $v = BST.Extract-Max()$ \tcc*[r]{$v$ is the vertex in $BST$ whose key $c(v)$ is maximal.}
        $R \gets R \cup \{ v \}$ \;
        \For {$D \in L(v)$}
        {\If {$D \in \mathcal{D}$}
        {\For {$u \in D$}
        {
        $BST$.remove($u$) \;
        $c(u) \gets c(u) - 1$ \;
        Insert $u$ to the binary-search $BST$ with the key $c(u)$. \;
        }
        $\mathcal{D}$.delete($D$)
        }
    }
    }
\end{algorithm}
\caption{Algorithm GreedyPivotsSelection}
\end{figure}

\subsection{Related Work - the Blocker Set Algorithm of King}
We remark
that the GreedyPivotsSelection algorithm is similar to the blocker
set algorithm described in \cite{King99} for finding a hitting set
for a set of paths. The blocker set algorithm was used in
\cite{King99} to develop sequential dynamic algorithms for the APSP
problem. Additional related work is that of Agarwal {\sl et. al. }
\cite{AgarwalRKP18}. They presented a deterministic distributed
algorithm to compute APSP in an edge-weighted directed or undirected
graph in $\Otilde(n^{3/2})$ rounds in the Congest model by
incorporating a deterministic distributed version of the blocker set
algorithm.

While our derandomization framework uses the greedy algorithm (or
the blocker set algorithm) to find a hitting set of vertices for a
critical set of paths $D_1, \ldots, D_q$, we stress
that our main contribution are the techniques to reduce the number of sets $q$ the greedy algorithm must hit (Step 1), and the algorithms to efficiently compute the sets $D_1, \ldots, D_q$ (Step 2).
 These techniques are our main contribution, which enable us to use the greedy algorithm (or the blocker set algorithm) for a wider range of problems.
Specifically, these techniques allow us to derandomize the best known random algorithms for the replacement paths problem and distance sensitivity oracles.
We believe that our techniques can also be leveraged for additional related problems which use a sampling lemma similar
to Lemma \ref{lem:sampling-roditty}.

\subsection{More Related Work} \label{appendix:more-related-work}
We survey related work for the replacement paths problem and distance sensitivity oracles.

{\bf The replacement paths problem.}
The replacement paths problem is motivated by several different applications and has been extensively studied in the last few decades (see e.g. \cite{MaMiGu89,HeSu01,HeSu02,NaPrWi01,WilliamsW10,GoLe09,Roditty2005,EmPeRo10,Klein10,Bernstein10}).
It is well motivated by its own right from the fault-tolerance perspective.
In many applications it is desired to find algorithms and data-structures that are resilient to failures.
Since links in a network can fail, it is important to find backup shortest paths between important
vertices of the graph.

Furthermore, the replacement paths problem is also motivated by
several applications. First, the fastest algorithms to compute the
$k$ simple shortest paths between $s$ and $t$  in directed graphs
executes $k$ iterations of the replacement
paths between $s$ and $t$ in total $\Otilde(mnk)$ time
(see \cite{Yen71,Law72}). Second, considering path auctions,
suppose we would like to find the shortest path from $s$ to $t$ in a
directed graph $G$, where links are owned by selfish agents. Nisan
and Ronen \cite{NiRo01} showed that Vickrey Pricing is an
incentive compatible mechanism, and in order to compute the Vickery
Pricing of the edges one has to solve the replacement paths problem.
It  was raised as an open problem by Nisan and Ronen \cite{NiRo01}
whether there exists an efficient algorithm for solving the
replacement paths problem. In biological sequence alignment
\cite{ByWa84} replacement paths can be used to compute which pieces
of an alignment are most important.

The replacement paths problem has been studied extensively, and by now near optimal algorithms
are known for many cases of the problem.
For instance, the case of undirected graphs admits deterministic near linear solutions (see \cite{MaMiGu89,HeSu01, HeSu02,NaPrWi01}).
In fact, Lee and Lu present linear $O(n+m)$-time algorithms for the replacement-paths problem in on the following classes of $n$-node $m$-edge graphs: (1) undirected graphs in the word-RAM model of computation, (2) undirected planar graphs, (3) undirected minor-closed graphs, and (4) directed acyclic graphs.

A natural question is whether a near linear time algorithm is also possible for the directed case.
Vassilevska Williams and Williams \cite{WilliamsW10} showed that such an algorithm is essentially not possible by presenting
conditional lower bounds.
More precisely, Vassilevska Williams and Williams \cite{WilliamsW10} showed a subcubic equivalence between the combinatorial all pairs shortest paths (APSP) problem and the combinatorial replacement paths problem.
They proved that there exists a fixed $\epsilon >0$ and an
$O(n^{3-\epsilon})$ time combinatorial algorithm for the replacement paths problem if and only if
there exists a fixed $\delta>0$ and an $O(n^{3-\delta})$ time combinatorial algorithm for the APSP problem.
This implies that either both problems admit truly subcubic algorithms, or neither of them does.
Assuming the conditional lower bound that no subcubic APSP algorithm exists, then the trivial algorithm of computing Dijkstra from $s$ in every graph $G \setminus \{e\}$ for every edge $e \in P_G(s,t)$, which takes $O(mn + n^2\log n)$ time, is essentially near optimal.

The near optimal algorithms for the undirected case and the conditional lower bounds for the directed case seem to close the problem.
However, it turned out that if we consider the directed case with bounded edge weights then the picture is not yet complete.

For instance, if we assume that the graph is directed with integer weights in the
range $[-M, M]$ and allow algebraic solutions (rather than combinatorial ones), then Vassilevska Williams presented \cite{WilliamsRP11} an $\Otilde(Mn^\omega)$ time algebraic randomized algorithm
for the replacement paths problem, where $2 \le \omega < 2.373$ is
the matrix multiplication exponent, whose current best known upper
bound is $2.3728639$ (\cite{LeGall14,Williams12,CoppersmithW90}).



Bernstein presented in \cite{Bernstein10} a $(1+\epsilon)$-approximate deterministic replacement paths algorithm which is near optimal (whose runtime is $\Otilde((m \log(nC /c) / \epsilon)$, where $C$ is the largest edge weight in the graph and $c$ is the smallest edge weight).

For unweighted directed graphs the gap between randomized and deterministic solutions is even larger for sparse graphs.
Roditty and Zwick \cite{Roditty2005} presented a randomized
algorithm whose runtime is $\Otilde(m\sqrt{n})$ time for the
replacement paths problem for unweighted directed graphs.
Vassilevska Williams and Williams \cite{WilliamsW10} proved a
subcubic equivalence between the combinatorial replacement paths
problem in unweighted directed graphs and the combinatorial boolean
multiplication (BMM) problem. They proved that there exists some
fixed $\epsilon >0$ such that the combinatorial replacement paths problem
can be solved in $O(mn^{1/2-\epsilon})$ time if and only if there exists some
fixed $\delta > 0$ such that the combinatorial boolean matrix
multiplication (BMM) can be solved in subcubic $O(n^{3-\delta})$
time. Giving a subcubic combinatorial algorithm to the BMM problem,
or proving that no such algorithm exists, is a long standing open
problem. This implies that either both problems can be polynomially
improved, or neither of them does. Hence, assuming the conditional
lower bound of combinatorial BMM, the randomized algorithm of
Roditty and Zwick \cite{Roditty2005} is near optimal.

In the deterministic regime no algorithm for the directed case is known that is asymptotically better (up to ploylog) than invoking APSP algorithm.
Interestingly, in the fault-tolerant and the dynamic settings many of the existing algorithms are randomized,
and for many of the problems there is a polynomial gap between
the best randomized and deterministic algorithms (see e.g.
sensitive distance oracles \cite{GW12}, dynamic shortest paths \cite{HenzingerKNFOCS14,BernsteinC16}, dynamic strongly connected components \cite{HenzingerKN14,HenzingerKN15,CDILP16}, dynamic matching \cite{solomon2016fully,ArChCoStWa17}, and many more).
Randomization is a powerful tool in the classic setting of graph algorithms with full knowledge and is often used to simplify the algorithm and to speed-up its running time.
However, physical computers are deterministic machines, and obtaining true randomness can be a hard task to achieve.
A central line of research is focused on the derandomization of algorithms that relies on randomness.

Our main contribution is a derandomization of the replacement paths algorithm of \cite{Roditty2005} for the case of unweighted directed graphs.
After more than a decade we give the first deterministic algorithm for the replacement paths problem, whose runtime is $\Otilde(m\sqrt{n})$. Our deterministic algorithm matches the runtime of the randomized algorithm, which is near optimal assuming the conditional lower bound of combinatorial boolean matrix multiplication \cite{WilliamsW10}.
In addition, to the best of our knowledge this is the first deterministic solution for the directed case that is asymptotically better than the APSP bound.

The replacement paths problem is related to the $k$ shortest paths problem, where the goal is to find the $k$ shortest paths between two vertices.
Eppstein \cite{Epp98} solved the $k$ shortest paths problem for directed graphs with nonnegative edge weights in $O(m + n\log n + k)$ time. However, the $k$ shortest paths may not be simple, {\sl i.e.}, contain cycles. The problem of $k$ simple shortest paths (loopless) is more difficult. The deterministic algorithm by Yen \cite{Yen71} (which was generalized by Lawler \cite{Law72}) for finding $k$ simple shortest paths in weighted directed graphs can be implemented in $O(kn(m + n\log n))$ time. This algorithm essentially uses in each iteration a replacement paths algorithm. Roditty and Zwick \cite{Roditty2005} described how to reduce the problem of $k$ simple shortest paths into $k$ executions of the second shortest path problem. For directed unweighted graphs, the randomized replacement paths algorithm of Roditty and Zwick \cite{Roditty2005} implies that the $k$ simple shortest paths has a randomized $\Otilde(k m \sqrt{n})$ time algorithm. To the best of our knowledge no better deterministic algorithm is known than the algorithms for general directed weighted graphs, yielding a significant gap between randomized and the deterministic $k$ simple shortest paths for directed unweighted graphs. Our deterministic replacement paths algorithm closes this gap and gives the first deterministic $k$ simple shortest paths algorithm for directed unweighted graphs whose runtime is $\Otilde(k m \sqrt{n})$.

The best known randomized algorithm for the $k$ simple shortest
paths problem in directed unweighted graphs takes $\Otilde(k
m\sqrt{n})$ time (\cite{Roditty2005}), leaving a significant gap
compared to the best known deterministic algorithm which takes
$\Otilde(k m n)$ time ({\sl e.g.}, \cite{Yen71}, \cite{Law72}). We
close this gap by proving the existence of a deterministic algorithm
for computing $k$ simple shortest paths in unweighted directed
graphs whose runtime is $\Otilde(k m\sqrt{n})$.


\subsection{Outline}
The structure of the paper is as follows.
In Section \ref{sec:preliminaries} we describe some preliminaries and notations.
In Section \ref{sec:replacement-short} we apply our framework to the
replacement paths algorithm of Roditty and Zwick \cite{Roditty2005}.
In Section \ref{sec:dso-short} we apply our framework to the DSO of
Weimann and Yuster for graphs with real-edge weights \cite{WY13}.


In order for this paper to be self-contained, a full description of
the combinatorial deterministic replacement paths algorithm is given
in Section \ref{appendix:sec:replacement} and a full description of
the deterministic distance sensitivity oracles is given in Section
\ref{sec:dso1}.

\section{Preliminaries} \label{sec:preliminaries}
Let $G=(V,E)$ be a directed weighted graph with $n$ vertices and $m$
edges with real edge weights $\omega(\cdot)$. Given a path $P$ in
$G$ we define its weight $\omega(P) = \Sigma_{e \in E(P)}
\omega(e)$.

Given $s,t \in V$, let $P_G(s,t)$ be a shortest path from $s$ to
$t$ in $G$ and let $d_G(s,t) = \omega(P_G(s,t))$ be its length, which is the sum of its edge weights.
Let $|P_G(s,t)|$ denote the number of edges along $P_G(s,t)$.
Note that for unweighted graphs we have $|P_G(s,t)| = d_G(s,t)$.
When $G$ is known from the context we sometimes abbreviate $P_G(s,t), d_G(s,t)$ with $P(s,t), d(s,t)$ respectively.

We define the path concatenation operator $\circ$ as follows. Let
$P_1= (x_1,x_2, \ldots, x_r)$ and $P_2= (y_1,y_2, \ldots, y_t)$ be
two paths. Then $P = P_1 \circ P_2$ is defined as the path $P =
(x_1, x_2, \ldots, x_r, y_1, y_2, \ldots, y_t)$, and it is well
defined if either $x_r=y_1$ or $(x_r, y_1) \in E$.

For a graph $H$ we denote by $V(H)$ the set of its vertices, and by
$E(H)$ the set of its edges. When it is clear from the context, we
abbreviate $e \in E(H)$ by $e \in H$ and $v \in V(H)$ by $v \in H$.

Let $P$ be a path which contains the vertices $u,v \in V(P)$ such
that $u$ appears before $v$ along $P$. We denote by $P[u..v]$ the
subpath of $P$ from $u$ to $v$.

For every edge $e \in P_G(s,t)$ a replacement path $P_G(s,t,e)$ for the
triple $(s,t,e)$ is a shortest path from $s$ to $t$ avoiding $e$.
Let $d_G(s,t,e) = \omega(P_G(s,t,e))$ be the length of the replacement
path $P_G(s,t,e)$.

We will assume, without loss of generality, that every replacement
path $P_G(s,t,e)$ can be decomposed into a common prefix
$\CPref_{s,t,e}$ with the shortest path $P_G(s,t)$, a detour
$\Detour_{s,t,e}$ which is disjoint from the shortest path
$P_G(s,t)$ (except for its first vertex and last vertex), and finally a common suffix $\CSuff_{s,t,e}$ which is
common with the shortest path $P_G(s,t)$. Therefore, for every edge $e
\in P_G(s,t)$ it holds that $P_G(s,t,e) = \CPref_{s,t,e} \circ
\Detour_{s,t,e} \circ \CSuff_{s,t,e}$ (the prefix and/or suffix may
be empty).

Let $F \subseteq V \cup E$ be a set of vertices and edges. We define
the graph $G \setminus F = (V \setminus F, E \setminus F)$ as the
graph obtained from $G$ by removing the vertices and edges $F$. We
define a replacement path $P_G(s,t,F)$ as a
shortest path from $s$ to $t$ in the graph $G \setminus F$, and let
$d_G(s,t,F) = w(P_G(s,t,e))$ be its length.


\section{Deterministic Replacement Paths in $\Otilde(m \sqrt{n})$ Time - an Overview} \label{sec:replacement-short}
In this section we apply our framework from Section \ref{sec:framework} to the replacement paths algorithm of Roditty and Zwick \cite{Roditty2005}.
A full description of the deterministic replacement paths algorithm is given in Section \ref{appendix:sec:replacement}.

\shiri{I think this entire section is a bit detached... maybe write in few sentences what is the part in Roditty-Zwick that need to be derandomized and explain that this is what is being done in this section}
\sarel{I've added a few paragraphs here to address this remark}

The randomized algorithm by Roddity and Zwick as described in
\cite{Roditty2005} takes $\Otilde(m \sqrt{n})$ expected time.
They handle
separately the case that a replacement path has a short detour
containing at most $\lceil \sqrt{n} \rceil$ edges, and the case that a
replacement path has a long detour containing more than $\lceil \sqrt{n} \rceil$
edges. The first case is solved deterministically. The second
case is solved by first sampling a subset of vertices $R$ according to Lemma \ref{lem:sampling-roditty},
where each vertex is sampled uniformly independently at random with
probability $c \ln n/ \sqrt{n}$ for large enough constant $c > 0$.
Using this uniform sampling, it holds with high probability (of at least $1-n^{-c+2}$)
that for every long triple $(s,t,e)$ (as defined hereinafter), the detour $\Detour_{s,t,e}$ of the replacement path $P_G(s,t,e)$ contains at least one
vertex of $R$.

\begin{definition} \label{def:long-triple}
Let $s,t \in V, e \in P_G(s,t)$. The triple $(s,t,e)$ is a {\sl long} triple if every replacement path from $s$ to $t$ avoiding $e$ has its detour part containing more than $\lceil \sqrt{n} \rceil$ edges.
\end{definition}

Note that in Definition \ref{def:long-triple} we defined $(s,t,e)$ to be a long triple if {\bf every} replacement path from $s$ to $t$ avoiding $e$ has a long detour (containing more than $\lceil \sqrt{n} \rceil$ edges).
We could have defined $(s,t,e)$ to be a long triple even if at least one replacement path from $s$ to $t$ avoiding $e$ has a long detour (perhaps more similar to the definitions in \cite{Roditty2005}), however we find Definition \ref{def:long-triple} more convenient for the following reason.
If $(s,t,e)$ has a replacement path whose detour part contains at most $\lceil \sqrt{n} \rceil$ edges, then the algorithm of \cite{Roditty2005} for handling short detours finds deterministically a replacement path for $(s,t,e)$.
Hence, we only need to find the replacement paths for triples $(s,t,e)$ for which every replacement path from $s$ to $t$ avoiding $e$ has a long detour, and this is the case for which we define $(s,t,e)$ as a long triple.

It is sufficient for the correctness of the replacement paths
algorithm that the following condition holds; For every long triple
$(s,t,e)$ the detour $\Detour_{s,t,e}$ of the replacement path
$P_G(s,t,e)$ contains at least one vertex of $R$. As the authors of
\cite{Roditty2005} write, the choice of the random set $R$ is the
only randomization used in their algorithm. To obtain a
deterministic algorithm for the replacement paths problem and to
prove Theorem \ref{thm:replacement}, we prove the following
deterministic alternative of Lemma \ref{lemma:greedy}.

\begin{lemma} [Our derandomized version of Lemma \ref{lemma:greedy} for the replacement paths algorithm] \label{lemma:the-set-R-deterministic}
There exists an $\Otilde(m \sqrt{n})$ time deterministic algorithm that computes
a set $R \subseteq V$ of $\Otilde(\sqrt{n})$ vertices,
such that for every long triple $(s,t,e)$ there exists a replacement
path $P_G(s,t,e)$ whose detour part contains at least one of the
vertices of $R$.
\end{lemma}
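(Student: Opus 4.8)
The plan is to instantiate the three–step framework of Figure~\ref{fig:framework} with threshold $L=\lceil\sqrt{n}\rceil$, so that Step~3 is literally a call to \textsc{GreedyPivotsSelection}. Concretely, it suffices to produce a collection $\mathcal{P}=\{D_1,\dots,D_q\}$ of $q=\Otilde(n)$ paths, each with at least $\lceil\sqrt{n}\rceil$ vertices, such that hitting every $D_i$ forces, for every long triple $(s,t,e)$, the existence of a replacement path $P_G(s,t,e)$ whose detour meets $R$. Given such a $\mathcal{P}$, Lemma~\ref{lemma:greedy-correctness} returns in time $\Otilde(qL)=\Otilde(n\sqrt{n})$ a set $R$ with $|R|=\Otilde(n/\sqrt{n})=\Otilde(\sqrt{n})$; since $m\ge n-1$ (otherwise the instance is a forest and trivial), this is within the $\Otilde(m\sqrt{n})$ budget, and adding the cost of building $\mathcal{P}$ (Step~2 below) keeps the total at $\Otilde(m\sqrt{n})$. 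So the whole statement reduces to Steps~1 and~2.

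\textbf{Step 1 (an $\Otilde(n)$-size critical set exists).} The shortest path $P_G(s,t)$ has at most $n-1$ edges, hence there are at most $n-1$ long triples $(s,t,e)$ that matter. The observation that lets us descend from the $O(n^2)$ candidate detours considered in \cite{Roditty2005} to $O(n)$ is that we do not need $R$ to hit \emph{every} possible detour between path vertices, only \emph{one} optimal detour per edge. So for each edge $e\in P_G(s,t)$ with $(s,t,e)$ long, fix a canonical shortest replacement path $P^{*}_{s,t,e}$ — say, among all shortest replacement paths avoiding $e$, the one that leaves $P_G(s,t)$ as late as possible, breaking ties by rejoining as early as possible and then by a fixed vertex ordering — and let $D_e$ consist of the first $\lceil\sqrt{n}\rceil$ edges of its detour $\Detour_{s,t,e}$. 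Because $(s,t,e)$ is long we have $|\Detour_{s,t,e}|>\lceil\sqrt{n}\rceil$, so $D_e$ is well defined and has at least $\lceil\sqrt{n}\rceil$ vertices. If $R$ meets $D_e$ then it meets $\Detour_{s,t,e}$, so $P^{*}_{s,t,e}$ witnesses the conclusion of the lemma for $(s,t,e)$; moreover this is exactly the hitting property the long-detour branch of Roditty–Zwick needs, since a vertex $r\in R$ on $\Detour_{s,t,e}$ produces, via the algorithm's $\min$-over-$R$ combination, a walk from $s$ to $t$ through $r$ that avoids $e$ and has length at most $d_G(s,t,e)$, hence exactly $d_G(s,t,e)$.

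\textbf{Step 2 (computing the critical set in $\Otilde(m\sqrt{n})$ time) — the main obstacle.} We cannot afford to first compute the replacement distances $d_G(s,t,e)$ and then read off $P^{*}_{s,t,e}$: doing that deterministically in $\Otilde(m\sqrt{n})$ time is precisely the whole problem. Instead the plan is to extract, for every long triple, an initial $\lceil\sqrt{n}\rceil$-edge segment of \emph{some} optimal detour directly from a constant number of BFS/shortest-path forests. Recall that for a long triple the detour is internally disjoint from $P_G(s,t)$: it leaves $P_G(s,t)$ at a vertex $p_a$ lying before $e$, takes one edge into the graph $H:=G\setminus V(P_G(s,t))$, follows a shortest path inside $H$, and takes one edge back to a vertex $p_b$ lying after $e$. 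Building the auxiliary digraph $G^{\to}$ that forces a walk to ``travel along $P_G(s,t)$, then leave it and never return'' together with its mirror $G^{\leftarrow}$, a single BFS in each already yields, for every off-path vertex, a canonical shortest prefix (resp.\ suffix) of such a walk, together with the $H$-shortest-path forest it induces; the task is then to stitch these so that for each long edge $e_i$ the recovered length-$\lceil\sqrt{n}\rceil$ segment is genuinely the start of an optimal detour that leaves before $e_i$ and rejoins after $e_i$. The delicate point — the crux of the argument — is the directed bookkeeping: one must show that the canonical choice of $P^{*}_{s,t,e}$ in Step~1 makes the implicit ``leave index'' and ``rejoin index'' of the BFS-forest paths fall inside the required window for $e_i$ (the leave side is the longest-common-prefix optimality of BFS trees; the rejoin side needs a separate exchange argument), and that extracting all $\le n$ prefixes costs only $\Otilde(n\sqrt{n})$ while the BFS computations cost $\Otilde(m\sqrt{n})$. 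The full construction and correctness proof are carried out in Section~\ref{appendix:sec:replacement}.

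Putting Steps~1–3 together yields a deterministic $\Otilde(m\sqrt{n})$-time algorithm that outputs $R\subseteq V$ with $|R|=\Otilde(\sqrt{n})$ and the stated property, proving the lemma.
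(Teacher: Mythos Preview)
Your Step~1 is a valid existence argument, but it is \emph{not} the paper's Step~1, and this difference is exactly what creates the gap in your Step~2. You index the critical paths by the failed edge~$e$: one path $D_e$ per long triple, defined as the first $\lceil\sqrt{n}\rceil$ edges of a canonical replacement path $P^{*}_{s,t,e}$. The paper instead indexes by the \emph{vertex} reached after $\lceil\sqrt{n}\rceil$ detour steps: for each $x\in V_{\sqrt{n}}$ it sets $D(x)=P_{G'}(v_{\rho(x)},x)$ where $\rho(x)$ is the minimum index with $d_{G'}(v_{\rho(x)},x)=\lceil\sqrt{n}\rceil$ and $G'=G\setminus E(P_G(s,t))$. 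The point of this ``reusing common subpaths'' trick is that $D(x)$ depends only on~$x$, not on any particular failed edge; Lemma~\ref{thm:p-sqrt} then shows, via an exchange argument, that for every long triple one can \emph{splice} $D(x)$ into some replacement path. Because $D(x)$ is defined purely through distances in the fixed graph~$G'$, it can be computed without ever computing any replacement path --- which is precisely the circularity you flag and then fail to break.

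Your Step~2 sketch does not close this gap. A single BFS in a ``walk along $P$, then leave and never return'' graph gives, for each off-path vertex $x$, one leaving index $j(x)$; but for different edges $e_i$ the optimal detour through $x$ may need to leave at different indices $j\le i$, and your single BFS may return a $j(x)>i$ that is useless for $e_i$. The paper resolves this either by a decremental SSSP that deletes $v_k,v_{k-1},\dots$ in order (Section~\ref{sec:p-sqrt-computation}) or by $2\lceil\sqrt{n}\rceil$ shifted auxiliary graphs (Section~\ref{sec:p-sqrt-computation2}); both cost $\Otilde(m\sqrt{n})$, not a constant number of BFS runs, and both work only because the target set $\mathcal{D}_n=\{D(x)\}$ is edge-independent. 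Your forward reference to Section~\ref{appendix:sec:replacement} does not help, since that section carries out the vertex-indexed construction, not the edge-indexed one you set up in Step~1.
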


Following the above description, in order to prove Theorem
\ref{thm:replacement}, that there exists an $\Otilde(m\sqrt{n})$
deterministic replacement paths algorithm, it is sufficient to prove
the derandomization Lemma \ref{lemma:the-set-R-deterministic}, we do
so in the following sections.

\subsection{Step 1: the Method of Reusing Common Subpaths - Defining the Set ${\cal D}_n$} \label{sec:reusing-common-subpaths}

In this section we prove the following lemma.

\begin{lemma} \label{thm:p-sqrt}
There exists a set ${\cal D}_{n}$ of at most $n$ paths, each path of
length exactly $\lceil \sqrt{n} \rceil$ with the following property;
for every long triple $(s,t,e)$ there exists a path $D \in {\cal
D}_{n}$ and a replacement path $P_G(s,t,e)$ such that $D$ is
contained in the detour part of $P_G(s,t,e)$.
\end{lemma}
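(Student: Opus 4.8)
The plan is to construct the set ${\cal D}_n$ by a single breadth-first-search-like sweep that, for each vertex $v \in V$, records one canonical "outgoing" path of length exactly $\lceil \sqrt{n} \rceil$ starting at $v$. Concretely, I would fix a consistent tie-breaking rule on edges (say, a total order on $V$, and always extend a path by the smallest-indexed admissible successor) and, for every $v$ that can reach at least $\lceil\sqrt n\rceil$ further vertices along a suitable tree, let $D_v$ be the path of length $\lceil \sqrt{n} \rceil$ obtained by this canonical extension. This yields at most $n$ paths, one per vertex, each of length exactly $\lceil \sqrt{n} \rceil$, matching the cardinality and length bounds in the statement. The real content is then to argue that this family already "covers" every long triple in the required sense.

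For the covering claim, fix a long triple $(s,t,e)$ and any replacement path $P_G(s,t,e) = \CPref_{s,t,e} \circ \Detour_{s,t,e} \circ \CSuff_{s,t,e}$; by Definition \ref{def:long-triple} the detour $\Detour_{s,t,e}$ has more than $\lceil \sqrt{n} \rceil$ edges. Let $v$ be the first vertex of the detour. The key structural observation — the "reusing common subpaths" idea — is that shortest(-avoiding) paths are themselves locally shortest paths in an appropriate residual graph, so the first $\lceil \sqrt{n} \rceil$ edges of $\Detour_{s,t,e}$ form a shortest path from $v$ of that length that is internally disjoint from $P_G(s,t)$. The plan is to show that we may, without loss of generality, reroute this initial segment to coincide with the canonical path $D_v$: since $D_v$ is also a shortest path of length $\lceil\sqrt n\rceil$ out of $v$ (under the fixed tie-breaking, among paths that stay in the relevant residual graph), swapping it in changes neither the length nor the validity of the replacement path, provided we also verify that $D_v$ stays off $P_G(s,t)$ and off $e$. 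After this exchange, $D_v \in {\cal D}_n$ is contained in the detour part of a (possibly different, but equally short) replacement path $P_G(s,t,e)$, which is exactly what the lemma asserts.

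The main obstacle I anticipate is precisely this exchange argument: one must be careful that the canonical path $D_v$ — which is defined purely locally from $v$, without reference to $s$, $t$, or $e$ — does not collide with the forbidden edge $e$ or with the shortest path $P_G(s,t)$ in a way that breaks the decomposition into prefix/detour/suffix. I expect this is handled by choosing $v$ to be not the very first detour vertex but rather the first detour vertex from which a canonical length-$\lceil\sqrt n\rceil$ path exists that avoids $P_G(s,t)$ (such a vertex exists because the actual detour provides one witness of length $>\lceil\sqrt n\rceil$, and any prefix of it is a candidate), together with the observation that the relevant residual graph — $G$ with the vertices of $P_G(s,t)$ already used, minus $e$ — depends on the triple only through data that is monotone enough to make the canonical choice globally consistent. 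A secondary, more routine point is to confirm that the tie-breaking rule can be made to respect "shortest length $\lceil\sqrt n\rceil$ from $v$" simultaneously for all $v$, which follows from building a single shortest-path forest (e.g. a BFS forest, since the graph is unweighted) and reading off the canonical branch of length $\lceil\sqrt n\rceil$ beneath each $v$.
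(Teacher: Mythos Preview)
Your proposal has a genuine gap in the exchange argument, and it is exactly the point you flag as ``the main obstacle'' --- but your suggested fix does not close it.

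You index the family by the \emph{start} vertex of the detour: for each $v$ you pick one canonical length-$\lceil\sqrt n\rceil$ path $D_v$ out of $v$, chosen by a tie-breaking rule. Now fix a long triple $(s,t,e)$ with detour starting at $v_j$, and let $x$ be the vertex reached after the first $\lceil\sqrt n\rceil$ detour edges. You are right that this initial segment is a shortest $v_j$-to-$x$ path in $G' = G\setminus E(P_G(s,t))$. But your canonical $D_{v_j}$ is merely \emph{some} shortest path of length $\lceil\sqrt n\rceil$ out of $v_j$; nothing forces it to end at $x$. If $D_{v_j}$ ends at some $y\ne x$, you cannot simply splice it in and continue with the tail $P_2$ of the original replacement path, since $P_2$ starts at $x$. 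You would need a short $y$-to-$t$ path avoiding $e$, and you have no control over that. Choosing a later vertex of the detour as your start, or tweaking the tie-breaking rule, does not help: a single fixed outgoing path per start vertex simply cannot be guaranteed to terminate at the particular vertex the current triple needs.

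The paper's construction avoids this by indexing by the \emph{endpoint} rather than the start. For each vertex $x$ it looks at the smallest index $\rho(x)$ with $d_{G'}(v_{\rho(x)},x)=\lceil\sqrt n\rceil$, keeps only those $x$ for which all earlier $v_i$ are strictly farther (the set $V_{\sqrt n}$), and sets $D(x)$ to be any shortest $v_{\rho(x)}$-to-$x$ path in $G'$. The crucial lemma is that for every long triple the $(\lceil\sqrt n\rceil+1)^{\text{th}}$ detour vertex $x$ lies in $V_{\sqrt n}$ and the detour in fact starts at $v_{\rho(x)}$; this is proved by a short contradiction (if some earlier $v_i$ were within $\lceil\sqrt n\rceil$ of $x$ in $G'$, one could build a strictly shorter $s$-to-$t$ path avoiding $e$). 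Now the swap is safe: $D(x)$ has the \emph{same} endpoints $v_{\rho(x)}$ and $x$ as the original initial segment, the same length, and lives in $G'$ so it automatically avoids $e\in E(P_G(s,t))$. Hence $\langle v_0,\ldots,v_{\rho(x)}\rangle \circ D(x)\circ P_2$ is a bona fide replacement path whose detour contains $D(x)$. The ``one path per vertex'' count is preserved because the family is $\{D(x):x\in V_{\sqrt n}\}$, still at most $n$ paths.

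In short: your instinct to reuse a common subpath is exactly right, but the anchor must be the vertex at depth $\lceil\sqrt n\rceil$ along the detour, not the vertex at depth $0$; otherwise the splice does not close up.
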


In order to define the set of paths ${\cal D}_n$ and prove Lemma
\ref{thm:p-sqrt} we need the following definitions. Let  $G' = G
\setminus E(P_G(s,t))$ be the graph obtained by removing the edges
of the path $P_G(s,t)$ from $G$. For two vertices $u$ and $v$, let
$d_{G'}(u,v)$ be the distance from $u$ to $v$ in $G'$.

We use the following definitions of the index $\rho(x)$, the set of vertices $V_{\sqrt{n}}$ and the set of paths $\mathcal{D}_{n}$.

\begin{definition}[The index $\rho(x)$]\label{def:x-tag}
Let $P_G(s,t) = <v_0, \ldots, v_k>$ and let $X = \{ x \in V \ | \
\exists_{0 \le i \le k} \ d_{G'}(v_i, x) = \lceil \sqrt{n} \rceil
\}$ be the subset of all the vertices $x \in V$ such that there
exists at least one index $0 \le i \le k$ with $d_{G'}(v_i, x) =
\lceil \sqrt{n} \rceil$.

For every vertex $x \in X$ we define the index $0 \le \rho(x) \le k$ to be the minimum index such that $d_{G'}(v_{\rho(x)}, x) = \lceil \sqrt{n} \rceil$.
\end{definition}

\begin{definition} [The set of vertices $V_{\sqrt{n}}$] \label{def:v-sqrtn}
We define the set of vertices $V_{\sqrt{n}} = \{ x \in X | \forall_{i < \rho(x)} d_{G'}(v_i, x) > \lceil \sqrt{n} \rceil \}$.
In other words, $V_{\sqrt{n}}$ is the set of all vertices $x \in X$
such that for all
the vertices $v_i$ before $v_{\rho(x)}$ along $P_G(s,t)$ it holds that
$d_{G'}(v_i, x) > \lceil \sqrt{n} \rceil$.
\end{definition}

\begin{definition} [A set of paths $\mathcal{D}_{n}$] \label{def:p-sqrtn}
For every vertex $x \in
V_{\sqrt{n}}$, let $D(x)$ be an arbitrary shortest path from $v_{\rho(x)}$ to $x$ in
$G'$ (whose length is $\lceil \sqrt{n} \rceil$ as $d_{G'}(v_{\rho(x)}, x) = \lceil \sqrt{n} \rceil$).
We define ${\cal D}_{n} = \{ D(x) | x \in V_{\sqrt{n}} \}$.
\end{definition}

Note that while $V_{\sqrt{n}}$ is uniquely defined (as it is defined
according to distances between vertices) the set of paths
$\mathcal{D}_{n}$ is not unique, as there may be many shortest paths
from $v_{\rho(x)}$ to $x$ in $G'$, and we take $D(x) =
P_{G'}(v_{\rho(x)}, x)$ to be an arbitrary such shortest path.

The basic intuition for the method of reusing common subpaths is
as follows. Let $P_G(s,t,e_1), \ldots, P_G(s,t,e_r)$ be arbitrary replacement paths such that $x$
is the $(\lceil \sqrt{n} \rceil + 1)^{\text{th}}$
vertex along the detours of all the replacement path  $P_G(s,t,e_1), \ldots, P_G(s,t,e_r)$.
Then one can construct replacement paths $P'_G(s,t,e_1), \ldots, P'_G(s,t,e_r)$ such that the subpath $D(x) \in {\cal D}_n$ is contained in all these replacement paths.
Therefore, the subpath $D(x)$ is reused as a common subpath in many replacement paths.
We utilize this observation in the following proof
of Lemma \ref{thm:p-sqrt}.


\begin{proof} [Proof of Lemma \ref{thm:p-sqrt}]
Obviously, the set ${\cal D}_{n}$ described in Definition \ref{def:p-sqrtn} contains at most
$n$ paths, each path is of length exactly $\lceil \sqrt{n} \rceil$.

We prove that for every long triple $(s,t,e)$ there exists a path $D \in \mathcal{D}_{n}$
and a replacement path $P'(s,t,e)$ {\sl s.t.} $D$ is contained in the detour part of $P'(s,t,e)$.

Let $P_G(s,t,e)$ be a replacement path for $(s,t,e)$. Since $(s,t,e)$ is
a long triple then the detour part $\Detour_{s,t,e}$ of $P_G(s,t,e)$
contains more than $\lceil \sqrt{n} \rceil$ edges. Let $x \in \Detour_{s,t,e}$
be the $(\lceil \sqrt{n} \rceil + 1)^{\text{th}}$ vertex along $\Detour_{s,t,e}$, and
let $v_j$ be the first vertex of $\Detour_{s,t,e}$. Let $P_1$ be
the subpath of $\Detour_{s,t,e}$ from $v_j$ to $x$ and let $P_2$
be the subpath of $P_G(s,t,e)$ from $x$ to $t$. In other words,
$P_G(s,t,e) = <v_0, \ldots, v_j> \circ P_1 \circ P_2$. Since
$\Detour_{s,t,e}$ contains more than $\lceil \sqrt{n} \rceil$ edges and is disjoint
from $P_G(s,t)$ except for the first and last vertices of $\Detour_{s,t,e}$
and $P_1 \subset \Detour_{s,t,e}$ it follows that
$P_1$ is disjoint from $P_G(s,t)$ (except for the vertex $v_j$). In
particular,
since $P_1$ is a shortest path in $G\setminus \{e\}$ that is edge-disjoint from $P_G(s,t)$, then $P_1$ is also a shortest path in $G' = G \setminus E(P_G(s,t))$.
We get that $d_{G'}(v_j,x) = |P_1| = \lceil \sqrt{n} \rceil$.

We prove that $j=\rho(x)$ and $x \in V_{\sqrt{n}}$. As we have
already proved that $d_{G'}(v_j, x) = \lceil \sqrt{n} \rceil$, we
need to prove that for every $0 \le i < j$ it holds that
$d_{G'}(v_i, x) > \lceil \sqrt{n} \rceil$. Assume by contradiction
that there exists an index $0 \le i < j$ such that $d_{G'}(v_i, x)
\le \lceil \sqrt{n} \rceil$. Then the path $\hat{P} = <v_0, \ldots,
v_i> \circ P_{G'}(v_i, x) \circ P_2$ is a path from $s$ to $t$ that
avoids $e$ and its length is:

\begin{eqnarray*}
  |\hat{P}| &=& |<v_0, \ldots, v_i> \circ P_{G'}(v_i, x) \circ P_2| \\
   & \le & i + \lceil \sqrt{n} \rceil + |P_2| \\
   & < & j + \lceil \sqrt{n} \rceil + |P_2| \\
   & = & |P_G(s,v_j) \circ P_1 \circ P_2| \\
   & = & |P_G(s,t,e)|
\end{eqnarray*}

This means that the path $\hat{P}$ is a path from $s$ to $t$ in
$G\setminus \{e\}$ and its length is shorter than the length of the
shortest  path  $P_G(s,t,e)$ from $s$ to $t$ in $G\setminus\{e\}$, which
is a contradiction. We get that $d_{G'}(v_j,x) = \lceil \sqrt{n} \rceil$ and
for every $0 \le i < j$ it holds that $d_{G'}(v_i, x) > \lceil \sqrt{n} \rceil$.
Therefore, according to Definitions
\ref{def:x-tag} and \ref{def:v-sqrtn} it holds that
$j=\rho(x)$ and $x \in V_{\sqrt{n}}$.

Let $D(x) \in \mathcal{D}_{n}$, then according to Definition
\ref{def:p-sqrtn}, $D(x)$ is a shortest path from $v_{\rho(x)}$ to
$x$ in $G'$. We define the path $P'(s,t,e) = <v_0, \ldots,
v_{\rho(x)}> \circ D(x) \circ P_2$. It follows that $P'(s,t,e)$ is a
path from $s$ to $t$ that avoids $e$ and $|P'(s,t,e)| = |<v_0,
\ldots, v_{\rho(x)}> \circ D(x) \circ P_2| = \rho(x) + \lceil
\sqrt{n} \rceil + |P_2| = |P_G(s,t,e)| = d_G(s,t,e)$. Hence, $P'(s,t,e)$
is a replacement path for $(s,t,e)$ such that $D(x) \subset
P'(s,t,e)$ so the lemma follows.
\end{proof}

\subsection{Step 2: the Method of Decremental Distances from a Path - Computing the Set ${\cal D}_n$} \label{sec:rp-efficient}

In this section we describe a decremental algorithm
that enables us to compute the set of paths ${\cal D}_{n}$ in
$\Otilde(m\sqrt{n})$ time, proving the following lemma.

\begin{lemma} \label{lemma:compute-dn}
There exists a deterministic algorithm for computing the set of
paths ${\cal D}_{n}$ in $\Otilde(m\sqrt{n})$ time.
\end{lemma}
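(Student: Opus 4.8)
The plan is to reduce the computation of $\mathcal{D}_n$ to a single incremental multi-source breadth-first search in $G'=G\setminus E(P_G(s,t))$ in which the path vertices $v_0,v_1,\dots,v_k$ are inserted as sources one at a time, in this order, and in which all distances are truncated at $\lceil\sqrt n\rceil$. The first step is a convenient reformulation of Definitions~\ref{def:x-tag}--\ref{def:p-sqrtn}: for a vertex $x$ let $i^*(x)=\min\{\,i: d_{G'}(v_i,x)\le\lceil\sqrt n\rceil\,\}$ (undefined if no such $i$ exists). I claim $x\in V_{\sqrt n}$ if and only if $i^*(x)$ is defined and $d_{G'}(v_{i^*(x)},x)=\lceil\sqrt n\rceil$, and then $\rho(x)=i^*(x)$. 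Indeed, if $d_{G'}(v_{i^*(x)},x)=\lceil\sqrt n\rceil$ then by minimality every earlier index has distance $>\lceil\sqrt n\rceil$, so $i^*(x)$ is the first index achieving distance exactly $\lceil\sqrt n\rceil$, giving $\rho(x)=i^*(x)$ and $x\in V_{\sqrt n}$; conversely every $x\in V_{\sqrt n}$ plainly satisfies $i^*(x)=\rho(x)$ with $d_{G'}(v_{i^*(x)},x)=\lceil\sqrt n\rceil$. Hence it suffices, for each $x$, to compute $i^*(x)$, to decide whether $d_{G'}(v_{i^*(x)},x)=\lceil\sqrt n\rceil$, and when so, to output one shortest path from $v_{i^*(x)}$ to $x$ in $G'$.

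The algorithm maintains a label array $d[\cdot]$ initialized to $+\infty$, together with the invariant that after $v_0,\dots,v_i$ have been inserted, $d[x]=\min\!\big(\lceil\sqrt n\rceil+1,\ d_{G'}(\{v_0,\dots,v_i\},x)\big)$; thus labels above $\lceil\sqrt n\rceil$ are never materialized. Processing $i$ from $0$ to $k$, we insert $v_i$ with tentative label $0$ and run a BFS-style relaxation driven by an array of $\lceil\sqrt n\rceil+1$ buckets keyed by tentative label, relaxing an out-edge $(u,w)$ only when it would set $d[w]$ to a value that is both strictly smaller than the current $d[w]$ and at most $\lceil\sqrt n\rceil$, and setting a parent pointer $\mathrm{parent}[w]\gets u$ on each successful relaxation. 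Standard parent bookkeeping then keeps, at every moment, a shortest-path forest from the current source set truncated at depth $\lceil\sqrt n\rceil$. The first time $d[x]$ takes a value $\le\lceil\sqrt n\rceil$ occurs precisely while $v_{i^*(x)}$ is being inserted, and at that instant $d[x]=d_{G'}(v_{i^*(x)},x)$: no already-labeled vertex could have certified a path of length $\le\lceil\sqrt n\rceil$ from some $v_{i'}$ with $i'<i^*(x)$, since that contradicts $d_{G'}(v_{i'},x)>\lceil\sqrt n\rceil$. At that instant we \emph{freeze} $x$, recording $i^*(x)=i$ and this label; if the label equals $\lceil\sqrt n\rceil$ we immediately reconstruct $D(x)$ by following parent pointers $\lceil\sqrt n\rceil$ steps back, and add it to $\mathcal{D}_n$. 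The reconstruction necessarily terminates at $v_i$ itself: the traversed vertices have labels $\lceil\sqrt n\rceil-1,\lceil\sqrt n\rceil-2,\dots,0$ and the label-$0$ endpoint is some source $v_{j_0}$ with $j_0\le i$; if $j_0<i$ the traced path would give $d_{G'}(v_{j_0},x)\le\lceil\sqrt n\rceil$, contradicting $i^*(x)=i$. Frozen vertices keep their labels updated for the sake of later relaxations but are never frozen or reconstructed again. By the reformulation above the set of paths produced is exactly a legal instantiation of $\mathcal{D}_n$ from Definition~\ref{def:p-sqrtn}, and by Lemma~\ref{thm:p-sqrt} it has at most $n$ paths, each of length $\lceil\sqrt n\rceil$, with the required covering property.

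For the running time, note that $d[x]$ only decreases and ranges over $\{0,1,\dots,\lceil\sqrt n\rceil\}\cup\{\infty\}$, so it is finalized at $O(\sqrt n)$ distinct values over the whole execution; each such finalization of $d[u]$ triggers one scan of $u$'s out-edges, so the total relaxation and bucket work is $\sum_u \deg(u)\cdot O(\sqrt n)=O(m\sqrt n)$. Reconstructing the at most $n$ paths costs $O(\sqrt n)$ each, i.e.\ $O(n\sqrt n)=O(m\sqrt n)$ more (isolated vertices may be discarded, so $m=\Omega(n)$), and initialization and bucket resets between source insertions are dominated by these bounds. This yields the claimed $\Otilde(m\sqrt n)$ time. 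The part that requires the most care is the amortization argument — that truncating all labels at $\lceil\sqrt n\rceil$ caps the number of relaxations of any edge by $O(\sqrt n)$ even though sources are added incrementally — together with performing the reconstruction of $D(x)$ at exactly the freeze moment, so that the parent pointers read off are the ones of the truncated shortest-path forest rooted (for $x$) at $v_{\rho(x)}$.
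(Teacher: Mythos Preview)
Your argument is correct, and it takes a genuinely different route from the paper's primary proof. The paper (Section~5.4) proceeds \emph{decrementally}: it gives weight $\epsilon$ to the edges of $P_G(s,t)$ and weight $1$ elsewhere, builds a Dijkstra tree from $s$ in this weighted graph, and then deletes $v_k,v_{k-1},\dots,v_1$ one by one, repairing the (truncated) tree after each deletion in the style of King's decremental SSSP; the key amortization is that each vertex lies in at most $\lceil\sqrt n\rceil+1$ of the repaired subtrees. You instead run an \emph{incremental} truncated multi-source BFS in the unweighted $G'$, adding $v_0,v_1,\dots,v_k$ as sources in order; your amortization is the dual observation that each label $d[x]$, being monotone decreasing through $\{0,\dots,\lceil\sqrt n\rceil,\infty\}$, triggers at most $O(\sqrt n)$ out-edge scans. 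The paper's $\epsilon$-weights exist precisely to break ties toward the latest surviving $v_i$, which your approach gets for free via the freeze-at-first-label argument. One point you state but do not fully justify is that the parent chain traced at the freeze instant has labels exactly $\lceil\sqrt n\rceil-1,\dots,0$ and terminates at $v_{i^*(x)}$: this holds because during a single bucketed BFS insertion each label decreases at most once, so every proper ancestor of $x$ in the parent forest also had its label set during the current insertion, and the induction bottoms out at the unique vertex given label $0$ in this round, namely $v_{i^*(x)}$. (Your contradiction ``$j_0<i$ would give $d_{G'}(v_{j_0},x)\le\lceil\sqrt n\rceil$'' is also valid once one knows the chain has length exactly $\lceil\sqrt n\rceil$.) The paper additionally sketches a second alternative (Section~5.5) that reuses the $2\lceil\sqrt n\rceil$ auxiliary graphs from the short-detour computation; your incremental variant is arguably the most elementary of the three.
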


Our algorithm for computing the set of path ${\cal D}_{n}$ is a
variant of the decremental SSSP (single source shortest paths)
algorithm of King \cite{King99}. Our variant of the algorithm is
used to find distances of vertices from a path rather than from a
single source vertex as we define below.


{\bf Overview of the Deterministic Algorithm for Computing ${\cal D}_{n}$ in $\Otilde(m\sqrt{n})$ Time.}

In the following description let $P = P_G(s,t)$.
Consider the following assignment of weights $\omega$ to edges of $G$. We
assign weight $\epsilon$ for every edge $e$ on the path $P$, and
weight $1$ for all the other edges where $\epsilon$ is a small number such that $0 < \epsilon < 1/n$. We
define a graph $G^w = (G,w)$ as the weighted graph $G$ with edge
weights $\omega$.
We define for every $0 \le i \le k$ the graph $G_i = G
\setminus \{v_{i+1}, \ldots, v_k\}$ and the path $P_i = P \setminus
\{v_{i+1}, \ldots, v_k\}$. We define the graph $G^w_i = (G_i, w)$ as
the weighted graph $G_i$ with edge weights $\omega$.

The algorithm computes the graph $G^w$ by simply taking $G$ and
setting all edge weights of $P_G(s,t)$ to be $\epsilon$ (for some
small $\epsilon$ such that $\epsilon < 1/n$) and all other edge
weights to be 1. The algorithm then removes the vertices of
$P_G(s,t)$ from $G^w$ one after the other (starting from the vertex
that is closest to $t$). Loosely speaking after each vertex is
removed, the algorithm computes the distances from $s$ in the
current graph. In each such iteration, the algorithm adds to
$V^w_{\sqrt{n}}$ all vertices such that their distance from $s$ in
the current graph is between $\lceil \sqrt{n} \rceil$ and $\lceil
\sqrt{n} \rceil+1$. We will later show that at the end of the
algorithm we have $V^w_{\sqrt{n}} = V_{\sqrt{n}}$. Unfortunately, we
cannot afford running Dijkstra after the removal of every vertex of
$P_G(s,t)$ as there might be $n$ vertices on $P_G(s,t)$. To overcome
this issue, the algorithm only maintains nodes at distance at most
$\lceil \sqrt{n} \rceil +1$ from $s$. In addition, we observe that
to compute the SSSP from $s$ in the graph after the removal of a
vertex $v_i$ we only need to spend time on nodes such that their
shortest path from $s$ uses the removed vertex. Roughly speaking,
for these nodes we show that their distance from $s$ rounded down to
the closest integer must increase by at least 1 as a result of the
removal of the vertex. Hence, for every node we spend time on it in
at most $\lceil \sqrt{n} \rceil+1$ iterations until its distance
from $s$ is bigger than $\lceil \sqrt{n} \rceil+1$. As we will show
later this will yield our desired running time.

In Section \ref{sec:p-sqrt-computation} we give a formal description and analysis of the algorithm and prove Lemma \ref{lemma:compute-dn}.

{\bf Proof of Theorem \ref{thm:replacement}.}
We summarize the $\Otilde(m \sqrt{n})$ deterministic replacement
paths algorithm and outline the proof of Theorem
\ref{thm:replacement}.
First, compute in $\Otilde(m\sqrt{n})$ time the set of paths ${\cal
D}_{n}$ as in Lemma \ref{lemma:compute-dn}.
Given ${\cal D}_{n}$, the deterministic greedy selection algorithm
GreedyPivotsSelection$(\mathcal{D}_{n})$ (as described in
Lemma \ref{lemma:greedy}) computes a set $R \subset
V$ of $\Otilde(\sqrt{n})$ vertices in $\Otilde( n\sqrt{n})$ time
with the following property; every path $D \in \mathcal{D}_{n}$
contains at least one of the vertices of $R$. Theorem
\ref{thm:replacement} follows from Lemmas
\ref{lemma:the-set-R-deterministic}, \ref{thm:p-sqrt} and
\ref{lemma:compute-dn}.


\section{Deterministic Distance Sensitivity Oracles - an Overview} \label{sec:dso-short}
In this section we apply our framework from Section \ref{sec:framework} to the combinatorial distance sensitivity oracles of Weimann and Yuster \cite{WY13}.
A full description of the deterministic combinatorial distance sensitivity oracles is given in Section \ref{sec:dso1}.

Let $0 < \epsilon < 1$ and $1 \le f = O(\frac{\log n}{\log \log n})$
be two parameters.
In \cite{WY13}, Weimann and Yuster considered the following notion
of intervals (note that in \cite{WY13} they use a parameter $0 <
\alpha < 1$ and we use a parameter $0 < \epsilon < 1$ such that
$\epsilon = 1 - \alpha$). They define an interval of a long simple
path $P$ as a subpath of $P$ consisting of $n^{\epsilon/f}$
consecutive vertices, so every simple path induces less than $n$
(overlapping) intervals. For every subset $F \subset E$ of at most
$f$ edges, and for every pair of vertices $u, v \in V$, let
$P_G(u,v,F)$ be a shortest path from $u$ to $v$ in $G \setminus F$.
The path $P_G(u,v,F)$ induces less than $n$ (overlapping) intervals.
The total number of possible intervals is less than $O(n^{2f+3})$ as each one of the (at most) $O(n^{2f+2})$
possible queries $(u,v,F)$ corresponds to a shortest path $P_G(u,v,F)$ that induces less than $n$ intervals.

\begin{definition}
Let ${\cal D}_{f}$ be defined as all the intervals (subpaths containing $n^{\epsilon/f}$ edges)
of all the replacement paths $P_G(s,t,F)$ for every $s,t \in V, F \subseteq E \cup V$ with $|F| \le f$.
\end{definition}

Weimann and Yuster apply Lemma \ref{lem:sampling-roditty} to find a set $R \subseteq V$ of $\Otilde(n^{1- \epsilon/f})$ vertices that hit w.h.p. all the intervals ${\cal D}_{f}$. 
According to these bounds (that ${\cal D}_{f}$ contains $O(n^{2f+3})$ paths, each containing exactly $n^{\epsilon/f}$ edges) applying the greedy algorithm to obtain the set $R$ deterministically according to Lemma \ref{lemma:greedy} takes $\Otilde(n^{2f+3 + \epsilon/f})$ time, which is very inefficient.

In this section
we assume that all weights are non-negative (so we can run Dijkstra's algorithm) and that shortest paths are unique, we justify these assumptions in Section \ref{sec:assumption-unique}.


\subsection{Step 1: the Method of Using Fault-Tolerant Trees to Significantly Reduce the Number of Intervals} \label{sec:ft-trees}
In Lemma \ref{lemma:small-num-of-intervals} we prove that the set of
intervals ${\cal D}_{f}$ actually contains at most
$O(n^{2+\epsilon})$ unique intervals, rather than the $O(n^{2f+3})$
naive upper bound mentioned above.
From Lemmas \ref{lemma:small-num-of-intervals} and
\ref{lemma:greedy} it follows that the
GreedyPivotsSelection$(\mathcal{D}_f)$ finds in
$\Otilde(n^{2+\epsilon+\epsilon/f})$ time the subset $R \subseteq V$
of $\Otilde(n^{1-\epsilon/f})$ vertices that hit all the intervals
${\cal D}_f$. In Section \ref{sec:improved-greedy} we further reduce
the time it takes for the greedy algorithm to compute the set of
pivots $R$ to $\Otilde(n^{2+\epsilon})$.

\begin{lemma} \label{lemma:small-num-of-intervals}
$|{\cal D}_{f}| = O(n^{2+\epsilon})$.
\end{lemma}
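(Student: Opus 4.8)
The plan is to bound the number of distinct intervals by organizing all failure sets $F$ with $|F| \le f$ into a recursive tree structure — a \emph{fault-tolerant tree} — rooted at each pair $(s,t)$, and to argue that the total number of intervals generated across the entire forest is $O(n^{2+\epsilon})$ rather than the naive $O(n^{2f+3})$. The key observation is that a single interval (a subpath of $n^{\epsilon/f}$ consecutive vertices) is determined by very little data: its starting vertex and the fact that it is a canonical shortest subpath in some graph $G \setminus F'$ where $F'$ is a \emph{small} subset of failures relevant to that interval. So instead of charging each interval to an entire query $(s,t,F)$, I will charge it to a node of a fault-tolerant tree, where the number of failures "active" at that node is bounded by the depth.

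First I would fix a pair $(s,t)$ and build the fault-tolerant tree $\mathrm{FT}(s,t)$ as follows. The root holds the shortest path $P_G(s,t)$ in $G$ (no failures). Since shortest paths are unique (by the assumption justified in Section \ref{sec:assumption-unique}), this path is well-defined, and it induces fewer than $n$ intervals. Each node $v$ of the tree is labeled by a path $P_v = P_G(s,t,F_v)$ for some failure set $F_v$; the node has one child for each way of "failing" one more edge or vertex lying on one of the intervals of $P_v$ — but crucially we only recurse on failures that actually change the relevant interval. The depth of the tree is at most $f$ (after $f$ failures we stop), and at each node we branch over the $O(|P_v|) = O(n)$ possible single additional failures on the current path. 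A node at depth $i$ thus has at most $n$ children, and the number of nodes at depth $i$ is $O(n^i)$; summing over $i \le f$ the total number of nodes in one tree is $O(n^f)$, and over all $n^2$ pairs this is $O(n^{f+2})$ — still too large if counted this way. The refinement that brings this down is that an interval is not charged to a tree node but to its \emph{starting vertex together with the set of failures that lie on that interval}: since an interval has only $n^{\epsilon/f}$ vertices, at most $f$ of them can be failures, and the number of distinct (start vertex, subset-of-at-most-$f$-failures-on-the-interval) configurations is $O(n \cdot (n^{\epsilon/f})^f) = O(n \cdot n^{\epsilon}) = O(n^{1+\epsilon})$ per source, hence $O(n^{2+\epsilon})$ in total once we also range over the source $s$. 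More precisely, I would show that every interval appearing anywhere in ${\cal D}_f$ equals the canonical shortest $n^{\epsilon/f}$-vertex path from its start vertex $x$ avoiding exactly the set $F' \subseteq F$ of failures that intersect it, and there are at most $\binom{|I|}{\le f} \le (n^{\epsilon/f})^f = n^{\epsilon}$ choices of such $F'$ for each of the $n$ possible start vertices and each of the $n$ possible sources, giving the claimed bound.

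The main obstacle I anticipate is making rigorous the claim that an interval is determined solely by its start vertex and the failures lying \emph{on it} — one must verify that the portion of the replacement path outside the interval cannot influence which vertices the interval contains, i.e., that the interval is genuinely a shortest path in $G \setminus F'$ for the restricted failure set $F'$. This requires an optimal-substructure argument: any subpath of a shortest path in $G \setminus F$ is itself a shortest path in $G \setminus F$ between its endpoints, and since it avoids the failures in $F \setminus F'$ automatically (they do not lie on it) it is also a shortest path in $G \setminus F'$; uniqueness of shortest paths then pins it down exactly. A secondary subtlety is handling vertex failures versus edge failures uniformly and ensuring the $\binom{n^{\epsilon/f}}{\le f}$ counting correctly absorbs into $n^{\epsilon}$ using $f = O(\log n / \log\log n)$ — here one checks $(n^{\epsilon/f})^f = n^{\epsilon}$ exactly, and the binomial sum contributes only a $2^f = n^{o(1)}$ factor, which is swallowed by the $\Otilde$ or, if we want a clean $O(n^{2+\epsilon})$, by slightly increasing $\epsilon$. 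I would finish by summing the per-configuration count over all sources $s \in V$, all start vertices $x \in V$, and all admissible restricted failure sets, yielding $|{\cal D}_f| = O(n \cdot n \cdot n^{\epsilon}) = O(n^{2+\epsilon})$.
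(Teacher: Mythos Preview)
Your proposal contains a genuine gap. The crucial claim --- that an interval is determined by its start vertex together with ``the set $F' \subseteq F$ of failures that intersect it'' --- is vacuous: the interval is by definition a subpath of $P_G(s,t,F)$, which lies entirely in $G\setminus F$, so \emph{no} failure of $F$ lies on the interval. Hence $F' = \emptyset$ always, and your charging scheme would force every interval to be the unique shortest $n^{\epsilon/f}$-edge path from its start vertex in $G$ itself, which is plainly false (different failure sets $F$ produce genuinely different intervals with the same endpoints). The extra factor of $n$ you insert ``per source $s$'' is unmotivated and appears only to manufacture the target exponent; an interval, once its endpoints and the ambient failure set are fixed, does not depend on the original query source $s$.

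You were close with the fault-tolerant tree idea, but you rooted the trees at the wrong pair. The paper builds the tree $FT^{L,f}(u,v)$ at the \emph{endpoints $(u,v)$ of the interval}, not at the original query pair $(s,t)$, and restricts every stored path to have at most $L = n^{\epsilon/f}$ edges. This is the key: because each stored path has at most $L$ edges, the branching factor at every node is $O(L)$ rather than $O(n)$, so each tree has at most $L^f = n^{\epsilon}$ nodes, and the $n^2$ trees together have $O(n^{2+\epsilon})$ nodes. The containment ${\cal D}_f \subseteq {\cal D}$ is then argued exactly via the optimal-substructure step you outlined: the interval from $u$ to $v$ equals $P^L_G(u,v,F)$ by uniqueness, and the query procedure on $FT^{L,f}(u,v)$ (Lemma~\ref{lemma:query}) locates it at some node $FT^{L,f}(u,v,a_1,\ldots,a_i)$ with $\{a_1,\ldots,a_i\}\subseteq F$. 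So the set $F'$ that actually determines the interval is not ``failures lying on the interval'' but rather the sequence of failures encountered along the root-to-node path in $FT^{L,f}(u,v)$ --- each $a_j$ lies on a \emph{previous approximating path} $P^L_G(u,v,\{a_1,\ldots,a_{j-1}\})$, not on the final interval.
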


%

In order to prove Lemma \ref{lemma:small-num-of-intervals} we describe the fault-tolerant trees data-structure,
which is a variant of the trees which appear in Appendix A of
\cite{ChCoFiKa17}.

\begin{definition}
Let $P^L_G(s,t, F)$ be the shortest among the $s$-to-$t$ paths in $G \setminus F$ that contain at most $L$ edges and let $d^L_G(s,t,F) = \omega(P^L_G(s,t,F))$.
In other words, $d^L_G(s,t,F) = \min \{ \omega(P) \ | \ P \text{ is an } s-\text{to}-t \text{ path on at most } L \text{ edges} \}$.
If there is no path from $s$ to $t$ in $G\setminus F$ containing at most $L$ edges then we define $P^L_G(s,t, F) = \emptyset$ and $d^L_G(s,t,F) = \infty$.
For $F = \emptyset$ we abbreviate $P^L_G(s,t, \emptyset) = P^L_G(s,t)$ as the shortest path from $s$ to $t$ that contains at most $L$ edges, and $d^L_G(s,t) = d^L_G(s,t, \emptyset)$ as its length.
\end{definition}

Let $s,t \in V$ be vertices and let $L, f\ge 1$ be fixed integer
parameters, we define the trees $FT^{L,f}(s,t)$ as follows.
\begin{itemize}
\item In the root of $FT^{L,f}(s,t)$ we store the path $P^L_G(s,t)$ (and its length $d^L_G(s,t)$), and also store the vertices and edges of $P^L_G(s,t)$ in a binary search tree $BST^L(s,t)$; If $P^L_G(s,t) = \emptyset$ then we
terminate the construction of $FT^{L,f}(s,t)$.
\item For every edge or vertex $a_1$ of $P^L_G(s,t)$ we recursively build a subtree $FT^{L,f}(s,t, a_1)$ as follows.
Let $P^L_G(s,t, \{a_1\})$ be the shortest path from $s$ to $t$ that
contains at most $L$ edges in the graph $G\setminus \{a_1\}$. Then
in the subtree $FT^{L,f}(s,t,a_1)$ we store the path $P^L_G(s,t,
\{a_1\})$ (and its length $d^L_G(s,t, \{a_1\})$) and we also store the
vertices and edges of $P^L_G(s,t, \{a_1\})$ in a binary search tree
$BST^L(s,t,a_1)$; If $P^L_G(s,t, \{a_1\}) = \emptyset$ we terminate
the construction of $FT^{L,f}(s,t,a_1)$. If $f > 1$ then for every
vertex or edge $a_{2}$ in $P^L_G(s,t, \{a_1\})$ we recursively build
the subtree $FT^{L,f}(s,t, a_1, a_2)$ as follows.
\item For the recursive step, assume we want to construct the subtree $FT^{L,f}(s,t, a_1, \ldots, a_i)$.
In the root of $FT^{L,f}(s,t, a_1, \ldots, a_i)$ we store the path
$P^L_G(s,t, \{a_1, \ldots, a_i\})$ (and its length $d^L_G(s,t, \{a_1,
\ldots, a_i\})$) and we also store the vertices and edges of
$P^L_G(s,t, \{a_1, \ldots, a_i\})$ in a binary search tree $BST^L(s,t,
a_1, \ldots, a_i)$. If $P^L_G(s,t, \{a_1, \ldots, a_i\}) = \emptyset$
then we terminate the construction of $FT^{L,f}(s,t, a_1, \ldots,
a_i)$. If $i< f$ then for every vertex or edge $a_{i+1}$ in
$P^L_G(s,t, \{a_1, \ldots, a_i\}))$ we recursively build the subtree
$FT^{L,f}(s,t, a_1, \ldots, a_i, a_{i+1})$.
\end{itemize}

Observe that there are two conditions in which we terminate the
recursive construction of $FT^{L,f}(s,t, a_1, \ldots, a_i)$:
\begin{itemize}
\item Either $i = f$ in which case $FT^{L,f}(s,t, a_1, \ldots, a_f)$ is a leaf node of $FT^{L,f}(s,t)$ and we store in the leaf node $FT^{L,f}(s,t, a_1, \ldots, a_f)$ the path $P^L_G(s,t, \{a_1, \ldots, a_f\})$.
\item Or there is no path from $s$ to $t$ in $G \setminus \{a_1, \ldots, a_i \}$ that contains at most $L$ edges and then  $FT^{L,f}(s,t, a_1, \ldots, a_i)$ is a leaf vertex of $FT^{L,f}(s,v)$ and we store in it $P^L_G(s,t, \{a_1, \ldots, a_i\}) = \emptyset$.
\end{itemize}


{\bf Querying the tree $FT^{L,f}(s,t)$.} Given a query $(s,t,F)$
such that $F \subset V \cup E$ with $|F| = f$  we would like to
compute $d^L_G(s,t,F)$ using the tree $FT^{L,f}(s,t)$.

The query procedure is as follows. Let $P^L_G(s,t)$ be the path stored
in the root of $FT^{L,f}(s,t)$ (if the root of $FT^{L,f}(s,t)$
contains $\emptyset$ then we output that $d^L_G(s,t,F) = \infty$).
First we check if  $P^L_G(s,t) \cap F = \emptyset$ by checking if any
of the elements $a_1 \in F$ appear in $BST^L(s,t)$ (which takes
$O(\log L)$ time for each element $a_1 \in F$). If $P^L_G(s,t) \cap F
= \emptyset$ we output $d^L_G(s,t,F) = d^L_G(s,t)$ (as $P^L_G(s,t)$ does
not contain any of the vertices or edges in $F$). Otherwise, let
$a_1 \in P^L_G(s,t) \cap F$.

We continue the search similarly in the subtree $FT^{L,f}(s,t, a_1)$
as follows. Let $P^L_G(s,t, \{a_1\})$ be the path stored in the root
of $FT^{L,f}(s,t, a_1)$ (if the root of $FT^{L,f}(s,t, a_1)$
contains $\emptyset$ then we output that $d^L_G(s,t,F) = \infty$).
First we check if  $P^L_G(s,t, \{a_1\}) \cap F = \emptyset$ by
checking if any of the elements $a_2 \in F$ appear in   $BST^L(s,t,
a_1)$ (which takes $O(\log L)$ time for each element $a_2 \in F$).
If $P^L_G(s,t, \{a_1\}) \cap F = \emptyset$ we output $d^L_G(s,t,F) =
d^L_G(s,t, \{a_1\})$ (as $P^L_G(s,t, \{a_1\})$ does not contain any of
the vertices or edges in $F$). Otherwise, let $a_2 \in P^L_G(s,t,
\{a_1\}) \cap F$.
We continue the search similarly in the subtrees $FT^{L,f}(s,t, a_1,
a_2)$, $FT^{L,f}(s,t, a_1, a_2, \ldots, a_i)$ until we either reach
a leaf node which contains $\emptyset$ (and in this case we output
that $d^L_G(s,t,F) = \infty$) or we find a path $P^L_G(s,t, \{a_1,
\ldots, a_i\})$ such that $P^L_G(s,t, \{a_1, \ldots, a_i\}) \cap F =
\emptyset$ and then we output $d^L_G(s,t,F) = d^L_G(s,t, \{a_1, \ldots,
a_i\})$.

In Section \ref{sec:ft-trees-appendix} we prove the following lemma.

\begin{lemma} \label{lemma:query}
Given the tree $FT^{L,f}(s,t)$ and a set of failures $F \subset V
\cup E$ with $|F| \le f$, the query procedure computes the distance
$d^L_G(s,t,F)$ in $O(f^2 \log L)$ time.
\end{lemma}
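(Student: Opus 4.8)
The plan is to analyze the query procedure step by step, bounding the work done at each level of the recursion tree traversed by the query. First I would observe that the query follows a single root-to-node path in the tree $FT^{L,f}(s,t)$: at the root we have the path $P^L_G(s,t)$, and we either certify that it avoids $F$ (and answer) or we pick some $a_1 \in P^L_G(s,t) \cap F$ and descend into the child $FT^{L,f}(s,t,a_1)$; this is repeated, producing a strictly growing sequence $a_1, a_2, \ldots$ of \emph{distinct} elements of $F$. Since $|F| \le f$, after at most $f$ descents the set $\{a_1, \ldots, a_i\}$ cannot be extended by a new element of $F$, so the path stored at that node is disjoint from $F$ and the procedure terminates (or it terminates earlier by hitting an $\emptyset$ leaf). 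Hence the query visits at most $f+1$ nodes.

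Next I would bound the work done at each visited node. At a node storing path $P^L_G(s,t,\{a_1,\ldots,a_i\})$, the procedure must decide whether this path intersects $F$, and if so exhibit an element of the intersection. This is done by querying the associated binary search tree $BST^L(s,t,a_1,\ldots,a_i)$ — which holds the (at most $L$ edges, hence $O(L)$ vertices and edges of the) path — once for each of the $\le f$ elements of $F$, at cost $O(\log L)$ per lookup. So each node contributes $O(f \log L)$ time. Multiplying by the $O(f)$ nodes on the traversed path gives the claimed $O(f^2 \log L)$ bound. I would also note the minor point that extracting the first element $a_{i+1} \in F$ found in the BST (rather than just testing emptiness) costs no more asymptotically, since we simply report the first of the $\le f$ lookups that succeeds.

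I do not expect a serious obstacle here; the only things to be careful about are (i) confirming that the $a_i$'s are genuinely distinct — they are, because $a_{i+1}$ is chosen from $P^L_G(s,t,\{a_1,\ldots,a_i\}) \cap F$, and $P^L_G(s,t,\{a_1,\ldots,a_i\})$ avoids $a_1,\ldots,a_i$ by construction, so $a_{i+1} \notin \{a_1,\ldots,a_i\}$ — which is what caps the recursion depth at $f$; and (ii) checking that the correctness of the returned value (as opposed to just the running time, which is all Lemma \ref{lemma:query} asserts) follows from the construction: whenever we output $d^L_G(s,t,\{a_1,\ldots,a_i\})$ we have verified $P^L_G(s,t,\{a_1,\ldots,a_i\}) \cap F = \emptyset$, and since $\{a_1,\ldots,a_i\} \subseteq F$, any $s$-to-$t$ path on $\le L$ edges avoiding $F$ also avoids $\{a_1,\ldots,a_i\}$, so $d^L_G(s,t,F) = d^L_G(s,t,\{a_1,\ldots,a_i\})$; and the $\emptyset$ case is handled symmetrically. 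Since the lemma statement only requires the timing bound, the bulk of the write-up is the depth-$\le f+1$ argument together with the per-node $O(f\log L)$ accounting.
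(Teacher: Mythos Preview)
Your proposal is correct and follows essentially the same approach as the paper: the paper's proof likewise argues correctness via the two inequalities arising from $\{a_1,\ldots,a_i\}\subseteq F$ and $P^L_G(s,t,\{a_1,\ldots,a_i\})\cap F=\emptyset$, and bounds the running time by multiplying the root-to-leaf depth $f$, the $f$ membership tests per node, and the $O(\log L)$ cost per BST lookup. Your explicit verification that the $a_i$'s are pairwise distinct (hence the descent depth is at most $f$) is a point the paper leaves implicit, so your write-up is if anything slightly more careful.
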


We are now ready to prove lemma \ref{lemma:small-num-of-intervals} asserting that $|{\cal D}_{f}| = O(n^{2+\epsilon})$.

\begin{proof} [Proof of Lemma \ref{lemma:small-num-of-intervals}]
Let $L=n^{\epsilon/f}$ and let ${\cal D}$ be the set of all the
unique shortest paths $P^L_G(s,t, \{a_1, \ldots, a_i\})$ stored in
all the nodes of all the trees $\{ FT^{L,f}(s,t) \}_{s,t \in V}$
(see Section \ref{sec:assumption-unique} for more details on the
assumption of unique shortest paths in our algorithms). Since the
number of nodes in every tree $FT^{L,f}(s,t)$ is at most $L^f =
(n^{\epsilon/f})^f = n^{\epsilon}$, and there are $O(n^2)$ trees
(one tree for every pair of vertices $s,t \in V$) we get that the
number of nodes in all the trees $\{ FT^{L,f}(s,t) \}_{s,t \in V}$
is $O(n^{2+\epsilon})$ and hence $|{\cal D}| = O(n^{2+\epsilon})$.

We prove that ${\cal D}_{f} \subseteq {\cal D}$. By definition,
${\cal D}_{f}$ contains all the intervals (subpaths containing
$n^{\epsilon/f}$ edges) of all the replacement paths $P_G(s,t,F)$ for
every $s,t \in V, F \subseteq E \cup V$ with $|F| \le f$. Let $P \in
{\cal D}_{f}$ be the unique shortest path as defined in Section
\ref{sec:assumption-unique}, then $P$ is a subpath containing
$n^{\epsilon/f}$ edges of the replacement paths $P_G(s,t,F)$. Let $u$
be the first vertex of $P$, and let $v$ be the last vertex of $P$.
Then $P$ is a shortest path from $u$ to $v$ in $G\setminus F$, and
since we assume that the shortest paths our algorithms compute are
unique (according to Section \ref{sec:assumption-unique}) then $P =
P_G(u,v,F)$ is the unique shortest path from $u$ to $v$ in $G\setminus
F$. Since $P$ is assumed to be a path on exactly $L =
n^{\epsilon/f}$ edges, then $P = P_G(u,v,F) = P^L_G(u,v,F)$. According
to the query procedure in the tree $FT^{L,f}(u,v)$ and Lemma
\ref{lemma:query}, if we query the tree $FT^{L,f}(u,v)$ with
$(u,v,F)$ then we reach a node $FT^{L,f}(u,v, a_1, \ldots, a_i)$
which contains the path $P^L_G(u,v, \{a_1, \ldots, a_i\})$ with
$\{a_1, \ldots, a_i \} \subseteq F$ such that $P^L_G(u,v, \{a_1,
\ldots, a_i\}) = P^L_G(u,v,F) = P$ is the shortest $u$-to-$v$ path in
$G\setminus F$. Hence, $P \in {\cal D}$ and thus ${\cal D}_{f}
\subseteq {\cal D}$ and $|{\cal D}_{f}| \le |{\cal D}| =
O(n^{2+\epsilon})$
\end{proof}

\subsection{Step 2: Efficient Construction of the Fault-Tolerant Trees - Computing the Paths ${\cal D}_f$} \label{sec:dynamic-programming}
Recall that we defined the trees $FT^{L,f}(u,v)$ with respect the
parameters $f$ (the maximum number of failures)  and $L$ (where we
search for shortest paths among paths of at most $L$ edges).
The idea is to build the trees $FT^{L,f}(u,v)$ using dynamic programming having the trees $FT^{L-1, f}(u,v)$ with parameters $f,
L-1$ as subproblems.

Assume we have already built the trees $FT^{i,f}(u,v)$, where $u,v
\in V, 1 \le i < L$, we describe how to build the trees
$FT^{i+1,f}(u,v)$.
Let $(u,v,F)$ be a query for which we want to compute the distance
$d^{i+1}(u,v,F)$ (as part of the construction of the tree
$FT^{i+1,f}(u,v)$). Scan all the edges $(u,z) \in E$ and query the
tree $FT^{i, f}(z,v)$ with the set $F$ to find the distance
$d^i(z,v,F)$. Querying the tree $FT^{i,f}(z,v)$ takes $O(f^2 \log i)
= O(f^2 \log L)$ time as described in Lemma \ref{lemma:query} (note
that $f^2 \log L = \Otilde(1)$ for $f \le \log n$ as  $L \le n$),
and we run $O(\text{out-degree(u)})$ such queries and take the
minimum of the following equation.
\begin{eqnarray} \label{eq:dynamic-programming}
  d^{i+1}(u,v, F)  = \min_{z} \{ \omega(u,z) + d^i(z,v,F) \ | \  (u,z) \in E \ \ AND \ \ u,z,(u,z) \not \in F
  \}
\end{eqnarray} \begin{eqnarray}  \label{eq:parent-pointer}
  \textrm{parent}^{i+1}(u,v, F) = \arg \min_{z} \{ \omega(u,z) + d^i(z,v,F) \ | \  (u,z) \in E \ \ AND \ \ u,z,(u,z) \not \in F
  \}
\end{eqnarray}


Note that in Equation \ref{eq:dynamic-programming} we assume that for every vertex $u \in V$ it holds that $G$ contains the self loops $(u,u) \in E$
such that $\omega(u,u) = 0$.

So the time to compute $d^{i+1}(u,v,F)$ is
$\Otilde(\text{out-degree(u)})$. Next, we describe how to
reconstruct the path $P^{i+1}(u,v,F)$ in $O(L)$ additional time. We
reconstruct the shortest path $P^{i+1}(u,v,F)$ by simply following
the (at most $L$) parent pointers. In more details, let $z =
\textrm{parent}^{i+1}(u,v, F)$ be the vertex defined according to
Equation \ref{eq:parent-pointer}. We reconstruct the shortest path
$P^{i+1}(u,v,F)$ by concatenating $(u,z)$ with the shortest path
$P^{i}(z,v,F)$ (which we reconstruct in the same way), thus we can
reconstruct $P^{i+1}(u,v,F)$ edge by edge in constant time per edge,
and hence it takes $O(L)$ time to reconstruct the path
$P^{i+1}(u,v,F)$ that contains at most $L$ edges.

The tree $FT^{i,f}(u,v)$ contains $i^f \le L^f$ nodes, and thus all the trees $\{FT^{i,f}(u,v)\}$ for all $i \le L, u,v \in V$ contain
$O(n^2 L^{f+1})$ nodes together.

In each such node we compute the distance $d^i(u,v, \{a_1, \ldots, a_j\})$ in $\Otilde(\text{out-degree(u)})$ time and reconstruct the path $P^i(u,v, \{a_1, \ldots, a_j\})$ in additional $O(L)$ time.
Theretofore, computing all the distances $d^i(u,v, \{a_1, \ldots, a_j\})$ and all the paths $P^i(u,v, \{a_1, \ldots, a_j\})$
in all the nodes of all the trees $\{FT^{i,f}(u,v) \}_{u,v \in V, 1 \le i \le L}$
takes $\Otilde(\sum_{i \le L, u,v \in V} {L^f(\text{out-degree(u)}+L)}) = \Otilde(mnL^{f+1} + n^2L^{f+2})$ time.
substituting $L = \Otilde(n^{\epsilon/f})$ we get an algorithm to compute the trees $\{FT^{L,f}(u,v) \}_{u,v \in V}$ in $\Otilde(mn^{1+\epsilon+\epsilon/f} + n^{2+\epsilon+2\epsilon/f})$ time.

This proves the following Lemma.

\begin{lemma} \label{lemma:ft-tree-eps-plus}
One can deterministically construct the trees $FT^{L,f}(s,t)$ for every $s,t \in V$ in $\Otilde(mn^{1+\epsilon+\epsilon/f} + n^{2+\epsilon+2\epsilon/f})$ time.
\end{lemma}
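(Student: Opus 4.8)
The plan is to prove the lemma by induction on the length parameter, building all the trees $\{FT^{i,f}(u,v)\}_{u,v\in V}$ from the trees $\{FT^{i-1,f}(u,v)\}_{u,v\in V}$ by means of the dynamic program of Equations~\ref{eq:dynamic-programming} and~\ref{eq:parent-pointer}, and then bounding the total running time by summing the per-node cost over all nodes of all trees. For the base case $i=1$ note that, under the convention that every vertex carries a self-loop of weight $0$, a $u$-to-$v$ path on at most one edge is either the trivial path (when $u=v$) or the single edge $(u,v)$ (when $(u,v)\in E$). Hence each $FT^{1,f}(u,v)$ can be filled in directly: at a node indexed by a failure set $\{a_1,\dots,a_j\}$ we store the edge $(u,v)$ and its (trivial) $BST^{L}$ when $u\neq v$, $(u,v)\in E$, and $u,v,(u,v)\notin\{a_1,\dots,a_j\}$, and $\emptyset$ otherwise, reading the children off the stored path; this costs $\Otilde(n^{2}L^{f})$ time in total, which is dominated by the claimed bound.

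For the inductive step, assume all trees $FT^{i,f}(z,v)$ have been constructed. We build each $FT^{i+1,f}(u,v)$ top-down. At the node indexed by a failure set $F'=\{a_1,\dots,a_j\}$ with $j\le f$ we evaluate the recurrence~\ref{eq:dynamic-programming}: we scan the out-edges $(u,z)\in E$ together with the self-loop, and for each admissible $z$ (i.e.\ $u,z,(u,z)\notin F'$) we query the already-built tree $FT^{i,f}(z,v)$ with $F'$ to obtain $d^{i}(z,v,F')$ in $O(f^{2}\log L)=\Otilde(1)$ time by Lemma~\ref{lemma:query} (this is legitimate since $|F'|\le f$, and $f^{2}\log L=\Otilde(1)$ because $f=O(\log n/\log\log n)$ and $L\le n$). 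Taking the minimum yields $d^{i+1}(u,v,F')$ and the pointer $\mathrm{parent}^{i+1}(u,v,F')$ of~\ref{eq:parent-pointer}; following at most $L$ parent pointers reconstructs $P^{i+1}(u,v,F')$ edge by edge in $O(L)$ time, after which we store it and its binary search tree and attach one child to the current node for each vertex and edge on this path. The correctness of~\ref{eq:dynamic-programming} is the standard Bellman--Ford argument: an optimal $u$-to-$v$ path on at most $i+1$ edges in $G\setminus F'$ either uses at most $i$ edges (captured by the $z=u$ term) or starts with an admissible edge $(u,z)$ followed by an optimal $z$-to-$v$ path on at most $i$ edges in $G\setminus F'$; the uniqueness-of-shortest-paths assumption of Section~\ref{sec:assumption-unique} guarantees that the sub-paths stored recursively are mutually consistent, so the path materialized via the parent pointers is exactly the one the query procedure of $FT^{i+1,f}(u,v)$ returns on input $(u,v,F')$.

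For the running time, recall that each tree $FT^{i,f}(u,v)$ has at most $i^{f}\le L^{f}$ nodes, so the trees $\{FT^{i,f}(u,v)\}_{u,v\in V,\,1\le i\le L}$ have $O(n^{2}L^{f+1})$ nodes in total. At a node of a level-$(i+1)$ tree rooted at $u$ we spend $\Otilde(\mathrm{outdeg}(u)+L)$ time: the $\mathrm{outdeg}(u)$ term for evaluating~\ref{eq:dynamic-programming} (each of the $\mathrm{outdeg}(u)$ queries costs $\Otilde(1)$) and the $O(L)$ term for path reconstruction and $BST^{L}$ construction. Summing,
\[
\sum_{i\le L}\ \sum_{u,v\in V} L^{f}\cdot\Otilde\big(\mathrm{outdeg}(u)+L\big)=\Otilde\big(L^{f+1}(mn+n^{2}L)\big)=\Otilde\big(mnL^{f+1}+n^{2}L^{f+2}\big),
\]
where we used $\sum_{u,v\in V}\mathrm{outdeg}(u)=nm$ and $\sum_{u,v\in V}L=n^{2}L$. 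Substituting $L=\lceil n^{\epsilon/f}\rceil$ gives $L^{f+1}=\Otilde(n^{\epsilon+\epsilon/f})$ and $L^{f+2}=\Otilde(n^{\epsilon+2\epsilon/f})$, hence a total running time of $\Otilde(mn^{1+\epsilon+\epsilon/f}+n^{2+\epsilon+2\epsilon/f})$, as asserted.

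The construction is a direct dynamic program, so the only genuinely delicate points are (i) confirming that the query bound of Lemma~\ref{lemma:query} really collapses to $\Otilde(1)$ in the regime $f=O(\log n/\log\log n)$ with $L\le n$, so that the $\mathrm{outdeg}(u)$ factor in the per-node cost is not inflated, and (ii) verifying that the path obtained at each node by chasing parent pointers coincides with the path the tree's own query procedure would output, which is exactly where the unique-shortest-paths assumption of Section~\ref{sec:assumption-unique} is needed; everything else is routine bookkeeping.
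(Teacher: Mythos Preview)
Your proposal is correct and follows essentially the same approach as the paper: build the trees $FT^{i,f}(u,v)$ by dynamic programming on $i$ using Equations~\ref{eq:dynamic-programming} and~\ref{eq:parent-pointer}, query the level-$i$ trees via Lemma~\ref{lemma:query} at cost $\Otilde(1)$ per query, reconstruct each stored path in $O(L)$ time by chasing parent pointers, and sum the $\Otilde(\mathrm{outdeg}(u)+L)$ per-node cost over the $O(n^{2}L^{f+1})$ total nodes to obtain $\Otilde(mnL^{f+1}+n^{2}L^{f+2})$. Your added remarks on the base case and on the role of the unique-shortest-paths assumption in ensuring that the parent-pointer reconstruction yields the path the query procedure expects are welcome clarifications, but the argument is otherwise identical to the paper's.
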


In Section \ref{sec:positive} we further reduce the runtime to
$\Otilde(mn^{1+\epsilon})$ by using dynamic programming only for
computing the first $f-1$ levels of the trees $FT^{L,f}(s,t)$ and
then applying Dijkstra in a sophisticated manner to compute the last
layer of the trees $FT^{L,f}(s,t)$. In addition, we also boost-up
the runtime of the greedy pivots selection algorithm from
$\Otilde(n^{2+\epsilon +\epsilon/f})$
to $\Otilde(n^{2+\epsilon})$ time.

\section{Deterministic Replacement Paths Algorithm} \label{appendix:sec:replacement}
In this section we add the missing parts of the $\Otilde(m\sqrt{n})$ time deterministic replacement paths algorithm,
derandomizing the replacement paths algorithm of Roddity and Zwick \cite{Roditty2005}.
Recall the notion of a long triple $(s,t,e)$ as in Definition \ref{def:long-triple}.
Let $s,t \in V, e \in P_G(s,t)$, the triple $(s,t,e)$ is a {\sl long} triple if for every replacement path from $s$ to $t$ avoiding $e$ has its detour part containing more than $\lceil \sqrt{n} \rceil$ edges.

In order for this paper to be self-contained, let us start by describing the randomized $\Otilde(m\sqrt{n})$ replacement paths algorithm of Roditty and Zwick \cite{Roditty2005}.

\subsection{The Randomized $\Otilde(m\sqrt{n})$ Replacement Paths Algorithm of Roditty and Zwick - a Summary} \label{appendix:sec:rodittys-algorithm}
The algorithm by Roddity and Zwick that is described in
\cite{Roditty2005} takes $\Otilde(m \sqrt{n})$ time. Their algorithm handles
separately the case that a replacement path has a short detour
containing at most $\lceil \sqrt{n} \rceil$ edges, and the case that a
replacement path has a long detour containing more than $\lceil \sqrt{n} \rceil$
edges. The first case is solved deterministically while the second
case is solved by a randomized algorithm as described below.


\subsubsection{Handling Short Detours}
Roditty and Zwick's algorithm finds replacement paths with short detours (containing at most $\lceil \sqrt{n} \rceil$ edges)
deterministically. Let $P = P_G(s,t) = <v_0, \ldots, v_k>$ be the shortest path from $s$ to $t$,
and let $G' = G \setminus E(P_G(s,t))$ be the graph $G$ after removing the edges of $P$ and let $\ell = \lfloor \frac{k}{2\sqrt{n}} \rfloor$.

As explained in Section \ref{sec:preliminaries}, for every triple $(s,t,e) \in V \times V \times E$ every replacement paths $P_G(s,t,e)$
can be partitioned into a common prefix $\CPref_{s,t,e}$, a disjoint detour $\Detour_{s,t,e}$ and a common suffix $\CSuff_{s,t,e}$.
In this part of handling short detours, we would like to find all distances $d_G(s,t,e)$ such that there exists at least one replacement path $P_G(s,t,e)$ whose detour part $\Detour_{s,t,e}$ contains at most $\lceil \sqrt{n} \rceil$ edges.

The algorithm for handling short detours has two parts. The first part, computes a table $RD[i,j]$ which is defined as follows.
For every $0 \le i \le k$ and $0 \le j \le \lceil \sqrt{n} \rceil -1$ the entry $RD[i,j]$ gives the length of the
shortest path in $G'$ ({\sl i.e.}, the detour) starting at $v_{i}$ and ending at $v_{i+j}$, if its length is at most $\lceil \sqrt{n} \rceil$, or indicates that $d_{G'}(v_{i}, v_{i+j}) > \lceil \sqrt{n} \rceil$.
The second part, uses the table of detours $RD$ to find replacement paths whose detour part contains at most $\lceil \sqrt{n} \rceil$ edges.

{\bf First part: computing the table $RD$.}
The algorithm builds an auxiliary graph $G^A$ obtained by adding a new source vertex $r$ to $G'$
and an edge $(r, v_{2q \lceil \sqrt{n} \rceil})$ of weight $\omega(r, v_{2q \lceil \sqrt{n} \rceil})=q \lceil \sqrt{n} \rceil$ for every $0 \le q \le \ell$.
The weight of all the edges $E \setminus E(P)$ is set to $1$. Then the algorithm runs Dijkstra's algorithm
from $r$ in $G^A$ to find all the best short detours ({\sl i.e.}, the shortest paths on at most $\lceil \sqrt{n} \rceil$ edges from $v_i$ to $v_j$ in $G'$ where $1 \le i < j \le k$) that start in one of the vertices
$v_0, v_{2 \lceil \sqrt{n} \rceil}, \ldots, v_{2\ell \lceil \sqrt{n} \rceil}$.
See Figure \ref{fig:the-graph-G-A} as an illustration of the graph $G^A$.

\begin{figure}[htb]
  \centering
  \includegraphics[width=1.0\linewidth]{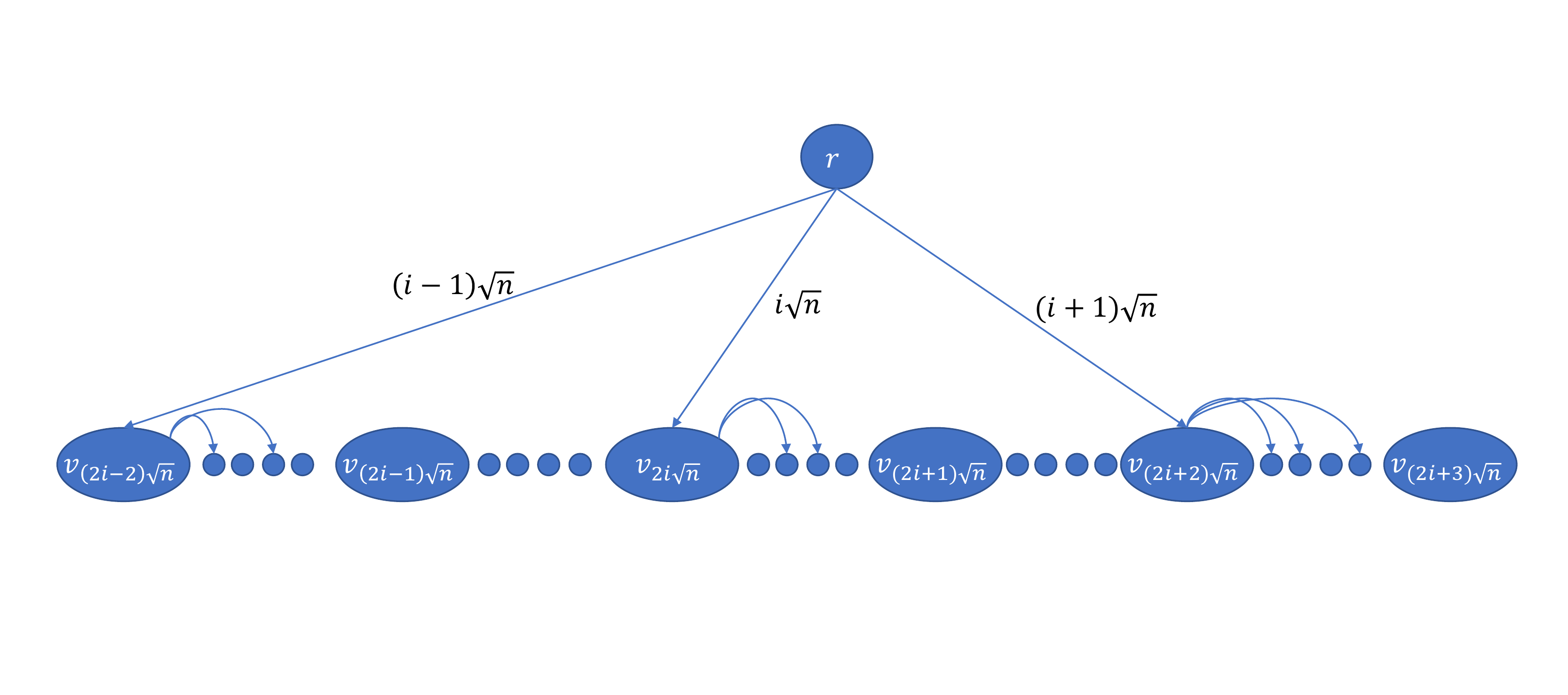}
  \caption{An illustration of the auxiliary graph $G^A$ obtained from $G' = G \setminus E(P)$ by adding the following auxiliary edges. We add the edges $(r, v_{2q \lceil \sqrt{n} \rceil})$ of weight $q\lceil \sqrt{n} \rceil$ for every $0 \le q \le \ell$. In this example we
    present the subgraph with the vertices $v_{(2q-2) \lceil \sqrt{n} \rceil}, \ldots, v_{(2q+3)\lceil \sqrt{n} \rceil}$, where we add the edges $(r,v_{(2q-2)\lceil \sqrt{n} \rceil}), (r,v_{2q \lceil \sqrt{n} \rceil}), (r,v_{(2q+2)\lceil\sqrt{n} \rceil})$ with weights $(q-1)\lceil \sqrt{n} \rceil, q\lceil\sqrt{n}\rceil, (q+1) \lceil\sqrt{n} \rceil$ respectively. }
  \label{fig:the-graph-G-A}
\end{figure}

In a sense, the algorithm already found all the relevant detours from about $\frac{k}{\sqrt{n}}$ of the vertices.
More precisely, the algorithm has found the entries $RD[i,j]$ for $0 \le i \le k, 0 \le j \le \lceil \sqrt{n} \rceil -1$ such that $(i \mod 2 \lceil \sqrt{n} \rceil) = 0$. By running this algorithm in $O(\sqrt{n})$ phases (in phase $p$ we compute the entries of $RD[i,j]$ such that $0 \le i \le k, 0 \le j \le \lceil \sqrt{n} \rceil-1$ and $(i \mod 2\lceil \sqrt{n} \rceil) = p$), we can find all the relevant detours starting from all the nodes of the path $P$.
That is, we run this algorithm
$2\lceil \sqrt{n} \rceil-1$ more times to find short detours emanating from the other vertices of $P_G(s,t)$.
In the $p^\text{th}$ phase (for $0 \le p \le 2\lceil \sqrt{n} \rceil-1$) find the short detours emanating from one of the vertices $v_p, v_{p+2\lceil \sqrt{n} \rceil}, \ldots, v_{p+2\ell\lceil \sqrt{n} \rceil}$ by running the algorithm on the graph $G^A$ obtained by adding to $G'$ the edges $(r, v_{p+2q\lceil \sqrt{n} \rceil})$ of weight $\omega(r, v_{p+2q\lceil \sqrt{n} \rceil})=q\lceil \sqrt{n} \rceil$ for every $0 \le q \le \lfloor \frac{k-p}{2\lceil \sqrt{n} \rceil} \rfloor$. Store the computed detours in the table $RD$.

This part takes $\Otilde(m\sqrt{n})$ time, as we run $\lceil \sqrt{n} \rceil$ instances of Dijkstra's algorithm whose runtime is $\Otilde(m)$.

The correctness of this algorithm for computing short detours is based on the following theorem from \cite{Roditty2005}.

\begin{theorem} [Theorem 1 in \cite{Roditty2005}] \label{appendix:thm:roditty-short-detours}
If $d_{G^A}(r, v_{2q\lceil \sqrt{n} \rceil+j}) \le (q + 1)\lceil \sqrt{n} \rceil$, where $0 \le q \le \ell$ and $0 \le j \le \lceil \sqrt{n} \rceil-1$, then
$d_{G'}(v_{2q\lceil \sqrt{n} \rceil}, v_{2q\lceil \sqrt{n} \rceil+j}) = d_{G^A}(r, v_{2q\lceil \sqrt{n} \rceil+j}) - q\lceil \sqrt{n} \rceil$. Otherwise, $d_{G'}(v_{2q\lceil \sqrt{n} \rceil}, v_{2q\lceil \sqrt{n} \rceil+j}) > \lceil \sqrt{n} \rceil$.
\end{theorem}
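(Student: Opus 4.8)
The plan is to first pin down $d_{G^A}(r, v_{2q\lceil\sqrt n\rceil+j})$ as an explicit minimum over the possible ``entry points'' through which a path can leave the auxiliary source $r$, and then to exploit the fact that $P = P_G(s,t)$ is a shortest path in the unweighted graph $G$ to show that, as long as the total cost is at most $(q+1)\lceil\sqrt n\rceil$, the cheapest entry point is exactly $v_{2q\lceil\sqrt n\rceil}$. Write $N = \lceil\sqrt n\rceil$ for brevity.

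First I would observe that in $G^A$ the vertex $r$ has only outgoing edges, so every $r$-to-$w$ path leaves $r$ through exactly one edge $(r, v_{2q'N})$ and then stays inside $G'$, where all edges have weight $1$. Hence
\[
 d_{G^A}(r, v_{2qN+j}) \;=\; \min_{0 \le q' \le \ell} \bigl( q'N + d_{G'}(v_{2q'N},\, v_{2qN+j}) \bigr) .
\]
Taking $q' = q$ already gives $d_{G^A}(r, v_{2qN+j}) \le qN + d_{G'}(v_{2qN}, v_{2qN+j})$, which settles the ``otherwise'' case at once: if $d_{G^A}(r, v_{2qN+j}) > (q+1)N$ then $d_{G'}(v_{2qN}, v_{2qN+j}) > N$.

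For the main case, assume $d_{G^A}(r, v_{2qN+j}) \le (q+1)N$; I would argue that the minimum above is attained at $q'=q$. For $q' > q$ the entry-edge weight $q'N$ is already at least $(q+1)N$ and the $G'$-portion contributes at least $1$ (the endpoints $v_{2q'N}$ and $v_{2qN+j}$ are distinct vertices of the simple path $P$, since $2q'N - 2qN \ge 2N > j$), so such terms strictly exceed $(q+1)N$. For $q' < q$, the crucial observation is that the subpath of $P$ from $v_{2q'N}$ to $v_{2qN+j}$ is itself a shortest path in $G$, so, since deleting edges only increases distances, $d_{G'}(v_{2q'N}, v_{2qN+j}) \ge d_G(v_{2q'N}, v_{2qN+j}) = 2(q-q')N + j$; substituting yields $q'N + d_{G'}(v_{2q'N}, v_{2qN+j}) \ge (q+1)N + j$, again strictly above $(q+1)N$ when $j \ge 1$, while the degenerate case $j=0$ (where $v_{2qN+j}=v_{2qN}$ is itself the entry point and $d_{G'}(\cdot,\cdot)=0$) is immediate. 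Therefore no $q' \ne q$ attains the minimum, so $d_{G^A}(r, v_{2qN+j}) = qN + d_{G'}(v_{2qN}, v_{2qN+j})$; this rearranges to the claimed identity and, being $\le (q+1)N$, also forces $d_{G'}(v_{2qN}, v_{2qN+j}) \le N$. The only genuinely load-bearing step is the lower bound $d_{G'}(v_{2q'N}, v_{2qN+j}) \ge 2(q-q')N + j$ for the ``wrong'' entry points $q' < q$ — this is exactly where the shortest-path property of $P$ enters — and everything else is bookkeeping, with the $j=0$ corner case the only thing worth handling separately.
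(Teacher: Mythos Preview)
Your proof is correct and follows essentially the same approach as the paper's sketch: write $d_{G^A}(r,\cdot)$ as a minimum over entry points $v_{2q'N}$, rule out $q'>q$ because the entry edge alone already has weight $q'N\ge(q+1)N$, and rule out $q'<q$ via the lower bound $d_{G'}(v_{2q'N},v_{2qN+j})\ge d_G(v_{2q'N},v_{2qN+j})=2(q-q')N+j$, which is exactly where the shortest-path property of $P$ enters. You are in fact slightly more careful than the paper's sketch about the strict-versus-weak inequality and the $j=0$ corner case.
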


The basic idea of the proof of Theorem \ref{appendix:thm:roditty-short-detours} is the following. Let $0 \le q \le \ell$ and $0 \le j \le \lceil \sqrt{n} \rceil - 1$. If the distance from $r$ to $v_{2q\lceil \sqrt{n} \rceil+j}$ in $G^A$ is less than $(q+1)\lceil \sqrt{n} \rceil$ then it must be that the shortest path from $r$ to $v_{2q\lceil \sqrt{n} \rceil+j}$ starts with the edge $(r,v_{2q\lceil \sqrt{n} \rceil})$. Otherwise, if it starts with the edge $(r,v_{2z\lceil \sqrt{n} \rceil})$ for $z > q$ then the weight of this edge is at least $(z+1)\lceil \sqrt{n} \rceil$ which is already larger than $(q+1)\lceil \sqrt{n} \rceil$ contradicting the assumption that the distance from $r$ to $v_{2q\lceil \sqrt{n} \rceil+j}$ in $G^A$ is less than $(q+1)\lceil \sqrt{n} \rceil$. On the other hand, if it starts with the edge $(r,v_{2z\lceil \sqrt{n} \rceil})$ for $z < q$ then the length of any such path is at least $z \lceil \sqrt{n} \rceil + d_{G'}(v_{2z\lceil \sqrt{n} \rceil}, {2q\lceil \sqrt{n} \rceil+j}) \ge  q \lceil \sqrt{n} \rceil + d_{G}(v_{2z\lceil \sqrt{n} \rceil}, {2q\lceil \sqrt{n} \rceil+j}) = q\lceil \sqrt{n} \rceil + 2(q-z)\lceil \sqrt{n} \rceil \ge (q+1)\lceil \sqrt{n} \rceil$ where the first inequality holds since distances in $G'$ (which is obtained by removing the edges of $P_G(s,t)$ from $G$) are only larger than distances in $G$, and the last inequality holds since $z<q$.

{\bf Second part: using the table $RD$ to find replacement paths whose detour part contains at most $\lceil \sqrt{n} \rceil$ edges.}
To find the replacement path from $s$
to $t$ that avoids the edge $(v_i, v_{i+1})$ and uses a short detour, the algorithm finds
indices $i - \lceil \sqrt{n} \rceil \le a \le i$ and $i < b \le i + \lceil \sqrt{n} \rceil$ for which the expression
$d_G(s,v_{a}) + RD[a,b-a] + d_G(v_{b}, t) = a + RD[a, b - a] + (k - b)$ is minimized.
The algorithm computes it for every edge $(v_i, v_{i+1}) \in P_G(s,t)$ using a priority queue $Q$ and a sliding window approach in time $\Otilde(m\sqrt{n})$ as follows.
When looking for the shortest replacement path for the edge $(v_i, v_{i+1})$, the
priority queue $Q$ contains all pairs $(a, b)$ such that $i-\lceil \sqrt{n} \rceil \le a \le i$ and $i < b \le i+\lceil \sqrt{n} \rceil$.
The key associated with a pair $(a, b)$ is, as mentioned above, $a+RD[a, b-a]+(k-b)$.
In the
start of the iteration corresponding to the edge $(v_i, v_{i+1})$, the algorithm inserts the pairs
$(i, j)$, for $i + 1 \le j \le i + \lceil \sqrt{n} \rceil$ into $Q$, and removes from it the pairs $(j, i)$, for
$i-\lceil \sqrt{n} \rceil \le j \le i$. A find-min operation on $Q$ then returns the minimal pair $(a, b)$.

The complexity of this process is only $\Otilde(n \lceil \sqrt{n} \rceil)$: for every vertex $v_i$ (for every $0 \le i \le n$) we perform $O(\sqrt{n})$ insert operations (for all values of $j$ such that $i + 1 \le j \le i + \lceil \sqrt{n} \rceil$) which is larger than the assumed distance from $r$ to $v_{2i\lceil \sqrt{n} \rceil+j}$ in $G^A$ is at most $i\lceil \sqrt{n} \rceil+j$
$O(\sqrt{n})$ delete operations (for all values of $j$ such that $i-\lceil \sqrt{n} \rceil \le j \le i$), and a single find-min operation.
In total, we have $O(n \sqrt{n})$ operations of insert/delete/find-min which take $\Otilde(n \sqrt{n})$ time.
Thus, the total running time of the algorithm for handling short detours is $\Otilde(m \sqrt{n})$.

\subsubsection{Handling Long Detours}
To find long detours, the algorithm samples a random set $R$ as in Lemma \ref{lem:sampling-roditty} such that each vertex is sampled independently uniformly at random with probability $(4 \ln n)/\sqrt{n}$, the set $R$ has expected size of $\Otilde(\sqrt{n})$.
For every sampled vertex $r \in R$ and for every edge $e_i = (v_i, v_{i+1})$ (where $0 \le i \le k)$ we find the shortest replacement path
$P_G(s,t,e)$ which passes through $r$.

This algorithm has two steps as well.
In the first step, for every sampled vertex $r \in R$, we construct two BFS trees
from $r$, one in $G' = G \setminus E(P)$ and one in the graph obtained from $G'$
by reversing all the edge directions. This computes the distances
$d_{G'}(r, v)$ and $d_{G'}(v, r)$, for every $r \in R$ and $v \in V$.

In the second step, we run the following procedure for every sampled vertex $r \in R$.
Given $r \in R$, we find for every edge $e_i = (v_i, v_{i+1}) \in P_G(s,t)$ the shortest path from $s$ to $t$ avoiding $e_i$ which passes through $r$. To do so, we construct two priority queues
$Q_{in}[r]$ and $Q_{out}[r]$ containing indices of vertices on $P_G(s,t)$. During the computation of a
replacement path for the edge $e_i = (v_i, v_{i+1})$ we would like to run a find-min operation using priority queue $Q_{in}[r]$ to find the shortest path from $s$ to $r$ which avoids $e_i$, and we would like to run a find-min operation using priority queue $Q_{out}[r]$ to find the shortest path from $r$ to $t$ which avoids $e_i$. To do so, during the computation of a
replacement path for the edge $e_i = (v_i, v_{i+1})$ we would like to have
 $Q_{in}[r] = \{0, 1, \ldots , i\}$ and
$Q_{out}[r] = \{i + 1, \ldots , k\}$ such that an element $j \in Q_{in}[r]$ has its key in $Q_{in}[r]$ equal to
$j+d_{G'}(v_j,  r)$ and an element $j \in Q_{out}[r]$ has its key in $Q_{out}[r]$ equal to $d_{G'}(r, v_j)+(k-j)$.
Note that we have already computed $d_{G'}(v_j, r)$ in the BFS tree rooted in $r$ in the graph $G'$ with reverse edge directions and we have already computed $d_{G'}(r, v_j)$ in the BFS tree rooted in $r$ in the graph $G'$.

In order to achieve that at iteration $i$ (for $0 \le i \le k-1$) we have
$Q_{in}[r] = \{0, 1, \ldots , i\}$ and $Q_{out}[r] = \{i + 1, \ldots , k\}$ we apply the following sliding window approach.
We initiate the queues  $Q_{in}[r]$ contains only the element $0$ with key equal to $d_{G'}(v_0,  r)$ and $Q_{out}[r] = \{1, \ldots , k\}$ such that an element $j \in Q_{out}[r]$ has its key in $Q_{out}[r]$ equal to $d_{G'}(r, v_j)+(k-j)$.
Then we compute the length of the shortest path from $s$ to $t$ avoiding $e_0$ and passing through $r$ as find-min($Q_{in}[r]) + $ find-min($Q_{out}[r])$.
Next, we remove from $Q_{out}[r]$ the element $1$ and insert it to $Q_{in}[r]$ with its key equal to $1 + d_{G'}(v_1,  r)$, and compute the length of the shortest path from $s$ to $t$ avoiding $e_1$ and passing through $r$ as find-min($Q_{in}[r]) + $ find-min($Q_{out}[r])$.
In general, after finishing the $i^\text{th}$ iteration we run the $(i+1)^\text{th}$ iteration as follows.
we remove from $Q_{out}[r]$ the element $i+1$ and insert it to $Q_{in}[r]$ with its key equal to $i+1 + d_{G'}(v_{i+1},  r)$, and compute the length of the shortest path from $s$ to $t$ avoiding $e_{i+1}$ and passing through $r$ as find-min($Q_{in}[r]) + $ find-min($Q_{out}[r])$.

Finally, for every edge $e_i = (v_i, v_{i+1})$ we iterate over all vertices $r \in R$ and find the shortest path from $s$ to $t$ going through one of the vertices $R$. When $(s,t,e)$ is a long triple, there exists at least one replacement path $P_G(s,t,e)$ whose detour part contains at least $\lceil \sqrt{n} \rceil$ edges, and thus with high probability at least one of the vertices of the detour is sampled in the set $R$ (since we sample every vertex uniformly at random with probability $(4 \ln n)/\sqrt{n}$).

The total expected time of computing these distances is
$\Otilde(m\sqrt{n})$: first of all, there are $\Otilde(\sqrt{n})$ randomly chosen vertices $R$ and every BFS computation takes $O(m+n)$ time. Secondly, for every $r \in R$ we perform $O(n)$ insert, delete and find-min operations on the queues $Q_{in}[r]$ and $Q_{out}[r]$ which takes $\Otilde(n)$ time per vertex $r \in R$, and hence $\Otilde(n \sqrt{n})$ expected time. Finally for every edge $e_i$ we iterate over all the vertices $r \in R$ to find the minimum length of a shortest path from $s$ to $t$ avoiding $e_i$ which passes through one of the vertices $r \in R$. There are $O(n)$ edges $e_i \in P_G(s,t)$, and for every edge $e_i$ we iterate over $O(|R|)$ vertices which is $\Otilde(\sqrt{n})$ in expectation, and thus the total runtime of this part is $\Otilde(n\sqrt{n})$. In total we get that the algorithm takes $\Otilde(m\sqrt{n})$ time.

\subsection{The Only Randomization Used in The Replacement Paths Algorithm of Roditty and Zwick}
As mentioned above, the algorithm by Roddity and Zwick handles
separately the case that the replacement path has a short detour
containing at most $\lceil \sqrt{n} \rceil$ edges, and the case that the
replacement path has a long detour containing more than $\lceil \sqrt{n} \rceil$
edges. The first case is solved deterministically. The second
case is solved by first sampling a subset of vertices $R$ according to Lemma \ref{lem:sampling-roditty},
where each vertex is sampled uniformly independently at random with
probability $c \ln n/ \sqrt{n}$ for large enough constant $c > 0$.
Using this uniform sampling, it holds with high probability (of at least $1-n^{-c+2}$)
that for every long triple $(s,t,e)$, the detour $\Detour_{s,t,e}$ of the replacement path $P_G(s,t,e)$ contains at least one
vertex of $R$.

As the authors of \cite{Roditty2005} write, the choice of the random set $R$ is the only
randomization used by their algorithm. More precisely, the only randomization used in
the algorithm of \cite{Roditty2005} is described in the following lemma (to be self-contained, we re-write the lemma here).

\begin{lemma} [proved in \cite{Roditty2005}] \label{appendix:lemma:the-set-R-random}
Let $R \subseteq V$ be a random subset obtained by selecting each vertex, independently,
with probability $(c \ln n)/\sqrt{n}$, for some constant $c>0$. Then with high probability of at least
$1 - n^{-c+2}$, the set $R$ contains $\Otilde(\sqrt{n})$ vertices
and for every long triple $(s,t,e)$ there exists a replacement
path $P_G(s,t,e)$ whose detour part contains at least one of the
vertices of $R$.
\end{lemma}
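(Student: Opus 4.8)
The plan is the textbook two-part probabilistic argument followed by a union bound, exactly the step this paper is devoted to removing. First I would bound $|R|$: each of the $n$ vertices lies in $R$ independently with probability $p=(c\ln n)/\sqrt n$, so $\mathbb{E}[|R|]=pn=c\sqrt n\ln n$, and a Chernoff bound gives $\Pr[\,|R|>2c\sqrt n\ln n\,]\le e^{-\Omega(\sqrt n\ln n)}$, which is $n^{-\omega(1)}$ and in particular far below the target error $n^{-c+2}$. Hence with the claimed probability $|R|=O(\sqrt n\log n)=\Otilde(\sqrt n)$.

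Second, I would derive the hitting property from Lemma \ref{lem:sampling-roditty}. For every long triple $(s,t,e)$ fix, once and for all (before $R$ is drawn), one replacement path $P_G(s,t,e)$, and let $D(s,t,e)$ be the vertex set of its detour $\Detour_{s,t,e}$; by Definition \ref{def:long-triple} this detour has more than $\lceil\sqrt n\rceil$ edges, hence $|D(s,t,e)|\ge \lceil\sqrt n\rceil+2 > L$ for $L:=\lceil\sqrt n\rceil$. Collect these sets as $D_1,\dots,D_q$. The number $q$ of long triples is at most the number of edges of $P_G(s,t)$, so $q<n$. Since $L\ge\sqrt n$ we have $p=(c\ln n)/\sqrt n\ge(c\ln n)/L$, so under the natural coupling $R$ contains a set $R'$ obtained by sampling each vertex independently with probability exactly $(c\ln n)/L$; as ``intersects every $D_i$'' is a monotone event, it suffices to show $R'$ has this property. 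Applying Lemma \ref{lem:sampling-roditty} to $R'$ with threshold $L$ gives that $R'$, and hence $R\supseteq R'$, intersects every $D_i$ with probability at least $1-q\,n^{-c}\ge 1-n^{1-c}\ge 1-n^{-c+2}$. In particular, for every long triple $(s,t,e)$ the fixed replacement path $P_G(s,t,e)$ has a detour vertex in $R$, which is what is required.

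Finally I would union-bound the two bad events (\,$|R|$ too large, or some $D_i$ missed\,): their combined probability is at most $n^{-\omega(1)}+n^{-c+2}$, which is at most $n^{-c+2}$ after decreasing $c$ by an arbitrarily small amount (or absorbing the first term into the constant), proving the lemma. There is no genuine obstacle here; the only points that need any care are (i) bounding the number $q$ of long triples so that $q\,n^{-c}$ stays below $n^{-c+2}$, and (ii) observing that a detour with more than $\lceil\sqrt n\rceil$ edges has more than $\lceil\sqrt n\rceil$ vertices, so that Lemma \ref{lem:sampling-roditty} applies with exactly the threshold $L=\lceil\sqrt n\rceil$ against the sampling probability $(c\ln n)/\sqrt n$.
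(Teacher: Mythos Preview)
The paper does not supply its own proof of this lemma; it is stated with the annotation ``proved in \cite{Roditty2005}'' and simply cited, so there is nothing in the paper to compare against. Your argument is correct and is the standard two-part probabilistic proof one would expect: Chernoff for $|R|$, then Lemma~\ref{lem:sampling-roditty} (or, equivalently, the direct estimate $(1-p)^{|D_i|}\le n^{-c}$) together with a union bound over the at most $n$ long triples. Your hitting bound $1-n^{1-c}$ is in fact tighter than the stated $1-n^{-c+2}$; the looser exponent quoted in the paper presumably reflects the $O(n^2)$ critical paths mentioned in Section~\ref{sec:framework} for the Roditty--Zwick analysis rather than the at most $n$ edges on $P_G(s,t)$ that are actually relevant once $s$ and $t$ are fixed.
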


\subsection{Derandomizing the Replacement Paths Algorithm of Roditty and Zwick - Outline}
To obtain a deterministic algorithm for the replacement paths problem and to prove Theorem \ref{thm:replacement}, we prove the following deterministic alternative to Lemma \ref{appendix:lemma:the-set-R-random} by a clever choice of the set $R$ of $\Otilde(\sqrt{n})$ vertices.

\begin{lemma} [Our derandomized version of Lemma \ref{appendix:lemma:the-set-R-random}] \label{appendix:lemma:the-set-R-deterministic}
There exists an $\Otilde(m \sqrt{n})$ time deterministic algorithm which computes
a set $R \subseteq V$ of $\Otilde(\sqrt{n})$ vertices,
such that for every long triple $(s,t,e)$ there exists a replacement
path $P_G(s,t,e)$ whose detour part contains at least one of the
vertices of $R$.
\end{lemma}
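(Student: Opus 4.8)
The plan is to follow the three-step derandomization framework of Figure~\ref{fig:framework}, taking the critical set of paths to be exactly the set $\mathcal{D}_n$ of Definition~\ref{def:p-sqrtn}. The point is that all the real work has already been isolated into the two structural facts about $\mathcal{D}_n$: Lemma~\ref{thm:p-sqrt}, which says $\mathcal{D}_n$ consists of at most $n$ paths, each of length exactly $\lceil\sqrt{n}\rceil$, and that every long triple $(s,t,e)$ admits a replacement path whose detour part \emph{contains} some $D\in\mathcal{D}_n$ as a subpath; and Lemma~\ref{lemma:compute-dn}, which says $\mathcal{D}_n$ can be computed deterministically in $\Otilde(m\sqrt{n})$ time. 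Granting these, the proof of the present lemma is essentially an assembly step.

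Concretely, I would first invoke Lemma~\ref{lemma:compute-dn} to build $\mathcal{D}_n=\{D_1,\dots,D_q\}$ with $q\le n$ and $|D_i|=\lceil\sqrt{n}\rceil$ for every $i$, in $\Otilde(m\sqrt{n})$ time. I would then feed $\mathcal{D}_n$ to the deterministic routine GreedyPivotsSelection of Lemma~\ref{lemma:greedy-correctness} (equivalently, the greedy algorithm of Lemma~\ref{lemma:greedy}, or the blocker-set algorithm of King) with $L=\lceil\sqrt{n}\rceil$; this returns, in $\Otilde(qL)=\Otilde(n\sqrt{n})$ time, a set $R\subseteq V$ with $|R|=\Otilde(n/L)=\Otilde(\sqrt{n})$ that hits every path of $\mathcal{D}_n$, i.e.\ $R\cap D_i\neq\emptyset$ for all $i$. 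The total running time is $\Otilde(m\sqrt{n})+\Otilde(n\sqrt{n})=\Otilde(m\sqrt{n})$, using $m=\Omega(n)$ on the component containing $s$ and $t$ (isolated vertices are irrelevant). For correctness, let $(s,t,e)$ be any long triple; by Lemma~\ref{thm:p-sqrt} there is some $D\in\mathcal{D}_n$ and a replacement path $P_G(s,t,e)$ whose detour part $\Detour_{s,t,e}$ contains $D$ as a subpath, and since $R$ hits $D$ it also hits $\Detour_{s,t,e}$, which is exactly the asserted property. Combined with the deterministic short-detour procedure already present in Roditty--Zwick, this also yields Theorem~\ref{thm:replacement}.

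The genuine difficulties both lie upstream of this assembly. The first is the existential Step~1, Lemma~\ref{thm:p-sqrt}: the naive critical set has $\Theta(n^2)$ paths (one per triple), and shrinking it to $O(n)$ requires the reuse-of-common-subpaths idea --- defining $\rho(x)$ and $V_{\sqrt{n}}$ (Definitions~\ref{def:x-tag} and~\ref{def:v-sqrtn}) so that a single canonical $\lceil\sqrt{n}\rceil$-edge path $D(x)$ in $G'=G\setminus E(P_G(s,t))$ can be spliced into the detour of \emph{every} long triple whose detour's $(\lceil\sqrt{n}\rceil+1)$-th vertex is $x$, without increasing the replacement-path length (the minimality of $\rho(x)$ is what prevents a shortcut). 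The second, and harder, obstacle is the algorithmic Step~2, Lemma~\ref{lemma:compute-dn}: computing $\mathcal{D}_n$ in $\Otilde(m\sqrt{n})$ rather than the obvious $\Theta(mn)$. Here I would adapt King's decremental SSSP machinery to measure distances from the whole path $P_G(s,t)$ at once --- using the $\epsilon$-weighting of the edges of $P_G(s,t)$ so that the distance from $s$ in the graph $G^w_i$ (after deleting $v_{i+1},\dots,v_k$) simultaneously encodes the index $\rho(x)$ and the detour length --- maintaining only the ball of radius $\lceil\sqrt{n}\rceil+1$ around $s$, and using a potential/amortization argument: a vertex is processed in at most $\lceil\sqrt{n}\rceil+1$ of the $\le n$ deletion rounds, because whenever work is spent on it its rounded distance from $s$ strictly increases. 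Summing the per-round cost against this budget gives the $\Otilde(m\sqrt{n})$ bound. This amortized analysis of the decremental computation is the step I expect to be the main obstacle.
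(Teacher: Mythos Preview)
Your proposal is correct and matches the paper's own proof essentially line for line: compute $\mathcal{D}_n$ via Lemma~\ref{lemma:compute-dn} (equivalently Lemma~\ref{appendix:thm:p-sqrt}), run GreedyPivotsSelection on it to obtain $R$ of size $\Otilde(\sqrt{n})$ in $\Otilde(n\sqrt{n})$ time, and conclude correctness from Lemma~\ref{thm:p-sqrt}. Your assessment that this lemma is just an assembly step, with the real content living in the two cited lemmas, is exactly how the paper treats it.
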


Following the above description, in order to prove Theorem \ref{thm:replacement}, that there exists an $\Otilde(m\sqrt{n})$ deterministic replacement paths algorithm, it is sufficient to prove the derandomization lemma (Lemma \ref{appendix:lemma:the-set-R-deterministic}), we do so in the following sections.
Following is an overview our approach.

We compute in $\Otilde(m\sqrt{n})$ time a set
${\cal D}_{n}$  of at most $n$ paths, each path of length exactly
$\lceil \sqrt{n} \rceil$.
The crucial part of our algorithm is in efficiently computing the set of
paths ${\cal D}_{n}$ with the following property;
for every long triple $(s,t,e)$ there exists a path $D \in
{\cal D}_{n}$ and a replacement path $P_G(s,t,e)$ such that $D$ is
contained in the detour part of $P_G(s,t,e)$. More precisely, we prove the following Lemma.

\begin{lemma} \label{appendix:thm:p-sqrt}
There exists a deterministic $\Otilde(m \sqrt{n})$ algorithm for computing a set ${\cal D}_{n}$ of
at most $n$ paths, each path of length exactly $\lceil \sqrt{n} \rceil$ with the following property;
for every long triple $(s,t,e)$ there exists a path $D \in {\cal D}_{n}$
and a replacement path $P_G(s,t,e)$ such that $D$ is the
detour part of $P_G(s,t,e)$.
\end{lemma}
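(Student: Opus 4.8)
The plan is to derive Lemma~\ref{appendix:thm:p-sqrt} by fusing the two ingredients already in place: the structural statement Lemma~\ref{thm:p-sqrt}, which says the set $\mathcal{D}_n$ of Definition~\ref{def:p-sqrtn} has the required covering property, and the algorithmic statement Lemma~\ref{lemma:compute-dn}, which builds exactly that set deterministically in $\Otilde(m\sqrt n)$ time. Since $\mathcal{D}_n=\{D(x)\mid x\in V_{\sqrt n}\}$ with $|V_{\sqrt n}|\le n$, and each $D(x)$ is by construction a shortest path in $G'=G\setminus E(P_G(s,t))$ of length exactly $\lceil\sqrt n\rceil$, the cardinality and the ``exactly $\lceil\sqrt n\rceil$ edges'' claims are immediate; only the covering property and the time bound require work. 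For the covering property I would simply rehearse the exchange argument from the proof of Lemma~\ref{thm:p-sqrt}: given a long triple $(s,t,e)$ and any replacement path $P_G(s,t,e)$, let $x$ be the $(\lceil\sqrt n\rceil+1)$-st vertex of its detour and $v_j$ the detour's first vertex; the prefix $P_1$ of the detour up to $x$ is edge-disjoint from $P_G(s,t)$ and hence a shortest path in $G'$, so $d_{G'}(v_j,x)=\lceil\sqrt n\rceil$; if some earlier $v_i$ had $d_{G'}(v_i,x)\le\lceil\sqrt n\rceil$ we could splice a strictly shorter $s$-to-$t$ path avoiding $e$, a contradiction; thus $j=\rho(x)$, $x\in V_{\sqrt n}$, and replacing $P_1$ by $D(x)$ yields a replacement path of the same length whose detour contains $D(x)\in\mathcal{D}_n$.

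The substance is the deterministic construction of $\mathcal{D}_n$, which I would carry out as a King-style decremental single-source shortest-path procedure adapted to measure distances from a \emph{path} rather than a single vertex. Assign weight $\epsilon<1/n$ to every edge of $P=P_G(s,t)$ and weight $1$ to all other edges, forming $G^w$, and let $G^w_i$ be $G^w$ with $v_{i+1},\dots,v_k$ deleted; process the vertices of $P$ from $t$ backwards, and after deleting $v_{i+1}$ maintain a shortest-path tree from $s$ in $G^w_i$ truncated at distance $\lceil\sqrt n\rceil+1$. Whenever a vertex's truncated distance first falls into $[\lceil\sqrt n\rceil,\lceil\sqrt n\rceil+1)$ we insert it into a set $V^w_{\sqrt n}$ together with the realizing path. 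I would then check that $V^w_{\sqrt n}=V_{\sqrt n}$ (the tiny $\epsilon$ weights on $P$ make the first deletion index at which the integer part of the label reaches $\lceil\sqrt n\rceil$ coincide exactly with $\rho(x)$ as in Definitions~\ref{def:x-tag} and~\ref{def:v-sqrtn}), and that each realizing path uses exactly $\lceil\sqrt n\rceil$ unit-weight edges, so it is a legitimate choice of $D(x)$.

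The running-time accounting is the step I expect to be the main obstacle. The crucial observation is that when $v_i$ is removed, only vertices whose current shortest path from $s$ routes through $v_i$ need relabelling, and for each such vertex the integer part of its distance from $s$ strictly increases, since the rerouted path must spend at least one extra unit-weight edge outside $P$. Hence a vertex is re-examined in at most $\lceil\sqrt n\rceil+1$ of the $k+1$ deletion steps before its label exceeds the truncation threshold and it is discarded permanently; charging edge relaxations to such re-examinations, each of the $m$ edges is scanned $O(\sqrt n)$ times, for a total of $\Otilde(m\sqrt n)$ including the priority-queue overhead. Putting the pieces together, the procedure outputs a set $\mathcal{D}_n$ of at most $n$ paths of length exactly $\lceil\sqrt n\rceil$ enjoying the covering property, in $\Otilde(m\sqrt n)$ deterministic time, which is precisely Lemma~\ref{appendix:thm:p-sqrt}; the detailed amortized analysis and the data structures maintaining the truncated shortest-path tree under vertex deletions are deferred to Section~\ref{sec:p-sqrt-computation}.
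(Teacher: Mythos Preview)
Your proposal is correct and follows essentially the same route as the paper: you combine the structural Lemma~\ref{thm:p-sqrt} (via the exchange argument identifying $j=\rho(x)$ and $x\in V_{\sqrt n}$) with the King-style decremental construction of Lemma~\ref{lemma:compute-dn}, using the $\epsilon$-weighting of $P_G(s,t)$, backward deletion of the $v_i$, a truncated shortest-path tree, and the amortized charging argument that the integer part of each vertex's distance strictly increases upon re-examination. This matches the paper's proof in Sections~\ref{sec:reusing-common-subpaths} and~\ref{sec:p-sqrt-computation} in all essential details.
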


After computing ${\cal D}_{n}$ we obtain the set of vertices $R$ by running the
GreedyPivotsSelection(${\cal D}_{n}$) algorithm as described in Section \ref{sec:framework}
and Section \ref{sec:framework}, and as stated in Lemma \ref{lemma:greedy}.
Given ${\cal D}_{n}$, the deterministic greedy selection algorithm GreedyPivotsSelection$(\mathcal{D}_{n})$
computes a set $R \subset V$ of $\Otilde(\sqrt{n})$ vertices in $\Otilde( n\sqrt{n})$ time with the following property;
every path $D \in \mathcal{D}_{n}$ contains at least one of the vertices of $R$.

Using Lemma \ref{appendix:thm:p-sqrt} and Lemma \ref{lemma:greedy} we can prove
the derandomization Lemma \ref{appendix:lemma:the-set-R-deterministic} and thus prove
Theorem \ref{thm:replacement}.

\begin{proof} [Proof of Theorem \ref{thm:replacement}]
According to Lemma \ref{appendix:thm:p-sqrt} we deterministically compute in $\Otilde(m \sqrt{n})$ time the set $\mathcal{D}_{n}$ of at most $n$ paths, each path of length exactly $\lceil \sqrt{n} \rceil$ with the following property; for every long triple $(s,t,e)$ there exists a path $D \in {\cal D}_{n}$
and a replacement path $P_G(s,t,e)$ such that $D$ is contained in the
detour part of $P_G(s,t,e)$.

Then, we run the greedy selection algorithm on the set of paths $\mathcal{D}_{n}$.
According to Lemma \ref{lemma:greedy}, the greedy selection algorithm takes $\Otilde(n \sqrt{n})$ time, and
computes the set $R \subset V$ of $\Otilde(\sqrt{n})$ vertices
such that every path $D \in \mathcal{D}_{n}$ contains at least one of the vertices of $R$.

We get that for every long triple $(s,t,e)$ there exists a path $D \in {\cal D}_{n}$
and a replacement path $P_G(s,t,e)$ such that $D$ is contained in the
detour part of $P_G(s,t,e)$, and the path $D \in \mathcal{D}_{n}$ contains at least one of the vertices of $R$.
Hence, for every long triple $(s,t,e)$ there exists a replacement path $P_G(s,t,e)$ whose detour part contains at least one
of the vertices of $R$.
This proves Lemma \ref{appendix:lemma:the-set-R-deterministic}.

Thus, we derandomize the randomized selection of the set of vertices $R$ in Roditty and Zwick's algorithm \cite{Roditty2005},
and as this is the only randomization used by their algorithm we obtain an $\Otilde(m\sqrt{n})$ deterministic algorithm
for the replacement paths problem in unweighted directed graphs.
\end{proof}

In the following section we prove Lemma \ref{appendix:thm:p-sqrt}.
In Section \ref{sec:reusing-common-subpaths} we already mathematically defined
the set ${\cal D}_{n}$ and in Section \ref{sec:p-sqrt-computation} we describe a deterministic
algorithm for computing $\mathcal{D}_{n}$ in $\Otilde(m\sqrt{n})$ time.

\subsection{An $\Otilde(m\sqrt{n})$ Deterministic Algorithm for Computing ${\cal D}_{n}$} \label{sec:p-sqrt-computation}

In this section we describe how to compute ${\cal D}_{n}$ in
$\Otilde(m\sqrt{n})$ time and thus proving Lemma \ref{appendix:thm:p-sqrt}.
In Section \ref{sec:rp-efficient} we presented an overview of the algorithm. We refer the reader to first read the overview in Section \ref{sec:replacement-short} before reading the following formal description of the algorithm and its analysis.

In this section we describe the algorithm more formally, and analyse its correctness and runtime.
Given the shortest path $P_G(s,t) = <s = v_0, \ldots, v_k = t>$ the
algorithm computes the weighted graph $G^w = (G,w)$ by taking the
graph $G$ and setting the weight of the edges  of $P_G(s,t)$ to be
$\epsilon$ for some small $\epsilon$ such that $0 < \epsilon < 1/n$
and the weight of all other edges to be $1$.
The algorithm then invokes Dijkstra from $s$ in $G^w$, builds the shortest paths tree
$T_s$ rooted at $s$ and sets the distances array $d[v] =
d_{G^w}(s,v)$ for every $v \in V$.
In addition, the algorithm
initializes the set $V^w_{\sqrt{n}}$ to be $V^w_{\sqrt{n}} \gets \{
v \in V \ | \ \lceil \sqrt{n} \rceil \le d[v] < \lceil \sqrt{n} \rceil + 1 \}$.
For every vertex $v\in V^w_{\sqrt{n}}$, let $\Tilde{D}^w(v)$ be the
suffix of the last $\lceil \sqrt{n} \rceil$ edges of the shortest path from $s$ to $v$ in $T_s$.
Add to the set ${\cal D}^w_{\sqrt{n}}$ (initially is set to be the empty set) the subpath $\Tilde{D}^w(v)$ for every vertex $v\in V^w_{\sqrt{n}}$.

The algorithm
then removes from $T_s$ all the vertices $v$ such that $d[v] >
\lceil \sqrt{n} \rceil + 1$ and sets $d[v] = \infty$.

Next, the algorithm operates in $|P_G(s,t)|$ iterations.
In iteration $i$ starting from $k-1$ to $0$ the algorithm does the following.
Let $T_{s,v_{i+1}}$ be the subtree of $T_s$ rooted at $v_{i+1}$.
Construct the graph $G_{s,v_{i+1}}$ as follows.
The set of vertices $V_{s,v_{i+1}}$ of $G_{s,v_{i+1}}$ is
$V_{s,v_{i+1}} = \{s\} \cup \{v\in V \mid v \in T_{s,v_{i+1}} \setminus\{v_{i+1} \} ~or~ ( \exists (v,v') \in E ~such~that~  v\in T_s\setminus\{v_{i+1} \} ~and~ v' \in T_{s,v_{i+1}})\} \setminus \{v_{i+1}\}$.
Essentially, $V_{s,v_{i+1}}$ contains all the vertices in the subtree of $T_s$ rooted in $v_{i+1}$ and all of their neighbours (except the vertex $v_{i+1}$ itself).

The set of edges of $G_{s,v_{i+1}}$ contains two types of edges.
The first type is auxiliary shortcut edges, for every vertex $v\in V_{s,v_{i+1}}$ such that $v \notin T_{s,v_{i+1}}$ add an edge $(s,v)$ with weight $d[v]$.
The second type is original edges, for every  vertex $v\in T_{s,v_{i+1}} \setminus \{v_{i+1}\} $ add all its incident edges $(v,v')$ such that $v' \in V_{s,v_{i+1}}$ with weight 1.

The algorithm removes from the tree $T_s$ the vertex $v_{i+1}$.
The algorithm then computes Dijkstra from $s$ in $G_{s,v_{i+1}}$
and for every vertex $v \in T_{s,v_{i+1}}\setminus \{v_{i+1}\}$ it sets $d[v] = d_{G_{s,v_{i+1}}}(s,v)$ and sets
the parent of $v$ in $T_s$ to be the parent of $v$ in the computed Dijkstra.
For every vertex $v \in T_{s,v_{i+1}}$ such that $d[v] > \lceil \sqrt{n} \rceil+1$ remove $v$ from  $T_s$ and set $d[v]=\infty$.
For every
vertex $v \in T_{s,v_{i+1}}$ such that $\lceil \sqrt{n} \rceil \leq d[v] < \lceil \sqrt{n} \rceil+1$
add $v$ to $V^w_{\sqrt{n}}$
and the subpath $\Tilde{D}^w(v)$ to ${\cal D}^w_{\sqrt{n}}$.

We prove the correctness and efficiency of our algorithm in the following Lemmas.

\begin{lemma} \label{lem:weighted}
Let $x$ be a vertex such that $x \in V_{\sqrt{n}}$ and let $D'(x)$ be the suffix of the last $\lceil \sqrt{n} \rceil$ edges of the shortest path from $s$ to $x$ in $G^w_{\rho(x)}$.
Then $D'(x)$ is a shortest path from $v_{\rho(x)}$ to $x$ in $G'$.
\end{lemma}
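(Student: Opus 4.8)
Throughout write $\rho = \rho(x)$. The idea is to understand the structure of a shortest $s$-to-$x$ path $Q^{*}$ in $G^{w}_{\rho}$ and then read off its last $\lceil \sqrt{n} \rceil$ edges. The structural fact I would lean on is that, by construction, the only edges of weight $\epsilon$ present in $G^{w}_{\rho}$ are the consecutive path edges $(v_0,v_1),\dots,(v_{\rho-1},v_{\rho})$ (every other surviving edge has weight $1$), while the vertices $v_{\rho+1},\dots,v_k$ are absent from $G^{w}_{\rho}$. Hence for any $s$-to-$x$ walk $Q$ in $G^{w}_{\rho}$, letting $v_j$ be the last vertex of $P$ appearing on $Q$ (which exists since $s = v_0$ lies on $Q$), we must have $0 \le j \le \rho$, and the portion $Q[v_j..x]$ meets $P$ only at $v_j$; in particular $Q[v_j..x]$ is a $v_j$-to-$x$ path in $G' = G \setminus E(P)$ made up entirely of unit-weight edges, so $\omega(Q[v_j..x]) = |Q[v_j..x]| \ge d_{G'}(v_j,x)$, and the prefix $Q[s..v_j]$ has weight at least $d_{G^{w}_{\rho}}(s,v_j) = j\epsilon$ (reaching $v_j$ along the $\epsilon$-path $v_0 \to \cdots \to v_{\rho}$ costs $j\epsilon$, and using any unit-weight edge already costs $1 > j\epsilon$ since $j \le k < n$ and $\epsilon < 1/n$).

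From this I would first derive the lower bound $d_{G^{w}_{\rho}}(s,x) \ge \rho\epsilon + \lceil \sqrt{n} \rceil$. For $j = \rho$ the estimate above gives $\omega(Q) \ge \rho\epsilon + d_{G'}(v_{\rho},x) = \rho\epsilon + \lceil \sqrt{n} \rceil$ by Definition~\ref{def:x-tag}. For $j < \rho$, since $x \in V_{\sqrt{n}}$ (Definition~\ref{def:v-sqrtn}) we have $d_{G'}(v_j,x) > \lceil \sqrt{n} \rceil$, and because $G'$ is unweighted this forces $d_{G'}(v_j,x) \ge \lceil \sqrt{n} \rceil + 1 > \rho\epsilon + \lceil \sqrt{n} \rceil$, so $\omega(Q)$ is even larger.

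Next, working in the regime $d_{G^{w}_{\rho}}(s,x) < \lceil \sqrt{n} \rceil + 1$ (which is exactly the regime in which the algorithm described above keeps $x$ after truncating and records its length-$\lceil\sqrt n\rceil$ suffix, hence the only regime in which $D'(x)$ is actually used), I would pin down $Q^{*}$. Applying the case analysis to $Q^{*}$, the possibility $j < \rho$ is excluded because it forces $\omega(Q^{*}) \ge \lceil \sqrt{n} \rceil + 1$; hence the last vertex of $P$ on $Q^{*}$ is $v_{\rho}$. A prefix of a shortest path is again a shortest path, so $\omega(Q^{*}[s..v_{\rho}]) = d_{G^{w}_{\rho}}(s,v_{\rho}) = \rho\epsilon < 1$, which means $Q^{*}[s..v_{\rho}]$ uses only $\epsilon$-edges and therefore is exactly the path $v_0,v_1,\dots,v_{\rho}$. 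Consequently $|Q^{*}[v_{\rho}..x]| = \omega(Q^{*}) - \rho\epsilon$ is an integer lying in $[\,d_{G'}(v_{\rho},x),\ \lceil \sqrt{n} \rceil + 1 - \rho\epsilon\,) = [\,\lceil \sqrt{n} \rceil,\ \lceil \sqrt{n} \rceil + 1 - \rho\epsilon\,)$, which forces $|Q^{*}[v_{\rho}..x]| = \lceil \sqrt{n} \rceil = d_{G'}(v_{\rho},x)$; thus $Q^{*}[v_{\rho}..x]$ is a shortest $v_{\rho}$-to-$x$ path in $G'$. Since $Q^{*}$ consists of the $\rho$-edge prefix $v_0,\dots,v_{\rho}$ followed by the $\lceil \sqrt{n} \rceil$-edge block $Q^{*}[v_{\rho}..x]$, its last $\lceil \sqrt{n} \rceil$ edges are precisely $Q^{*}[v_{\rho}..x] = D'(x)$, which is a shortest $v_{\rho}$-to-$x$ path in $G'$, as claimed.

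The step I expect to be the crux is ruling out that $Q^{*}$ leaves $P$ too early, i.e.\ excluding the case $j < \rho$: this is where the defining inequality of $V_{\sqrt{n}}$ (all $v_i$ with $i < \rho$ are at $G'$-distance strictly greater than $\lceil \sqrt{n} \rceil$ from $x$) has to be combined with integrality of $G'$-distances and the choice $\epsilon < 1/n$. The complementary danger — a shortest path dawdling on $P$ past $v_{\rho}$ — is conveniently impossible because those vertices have been deleted from $G^{w}_{\rho}$; one does, however, have to be careful to restrict to $d_{G^{w}_{\rho}}(s,x) < \lceil \sqrt{n} \rceil + 1$, since otherwise a shortest $s$-to-$x$ path could reach $x$ entirely off $P$ and its last $\lceil \sqrt{n} \rceil$ edges need not emanate from $v_{\rho}$ at all.
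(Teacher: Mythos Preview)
Your argument is correct and follows essentially the same route as the paper's proof: use the membership $x\in V_{\sqrt n}$ to rule out that the shortest $s$-to-$x$ path in $G^{w}_{\rho}$ leaves $P$ before $v_{\rho}$, conclude that it passes through $v_{\rho}$ with prefix $v_0,\dots,v_{\rho}$ of weight $\rho\epsilon$, and read off that the remaining $\lceil\sqrt n\rceil$ edges form a shortest $v_{\rho}$-to-$x$ path in $G'$.

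One small point: you treat the condition $d_{G^{w}_{\rho}}(s,x) < \lceil\sqrt n\rceil + 1$ as an external assumption (``the regime in which $D'(x)$ is actually used''), but in fact it holds automatically for every $x\in V_{\sqrt n}$. The concatenation of the $\epsilon$-path $v_0,\dots,v_{\rho}$ with any shortest $v_{\rho}$-to-$x$ path in $G'$ is an $s$-to-$x$ path in $G^{w}_{\rho}$ of weight exactly $\rho\epsilon + \lceil\sqrt n\rceil < \lceil\sqrt n\rceil + 1$, so the upper bound is witnessed directly. Combined with your lower bound, this pins down $d_{G^{w}_{\rho}}(s,x) = \rho\epsilon + \lceil\sqrt n\rceil$ outright, and the lemma holds unconditionally for all $x\in V_{\sqrt n}$ as stated. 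The paper's proof makes this explicit in one line; adding that observation removes the need for your caveat in the final paragraph.
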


\begin{proof}
Since $x \in V_{\sqrt{n}}$ then by definition of $V_{\sqrt{n}}$ for every index $0 \le i < \rho(x)$
it holds that $d_{G'}(v_i, x) > \lceil \sqrt{n} \rceil$.
Thus, every path in $G'$ from $v_i$ to $x$ for every $i < \rho(x)$ contains
more than $\lceil \sqrt{n} \rceil$ edges, and thus any path from $s$ to $x$ in
$G^w_{\rho(x)}$ that does not pass through $v_{\rho(x)}$ has length at least
$\lceil \sqrt{n} \rceil+1$.

Therefore, the shortest path from $s$ to $x$ in $G^w_{\rho(x)}$ passes
through $v_{\rho(x)}$ and its length is $d_{G^w_{\rho(x)}}(s, v_{\rho(x)}) +
d_{G^w_{\rho(x)}}(v_{\rho(x)}, x) = \epsilon \rho(x) + d_{G'}(v_{\rho(x)},x) =  \epsilon
\rho(x) + \lceil \sqrt{n} \rceil < \lceil \sqrt{n} \rceil+1$ (where the last inequality holds as
$\epsilon < 1/n$ and $\rho(x)\leq n$).  Hence, the suffix of the last
$\lceil \sqrt{n} \rceil$ edges of the shortest path from $s$ to $x$ in $G^w_{\rho(x)}$
is a shortest path from $v_{\rho(x)}$ to $x$ in $G'$.
\end{proof}

The following lemma shows that after the $i$'th iteration $T_s$ is a shortest path tree in
$G^w_{i}$ trimmed at distance $\lceil \sqrt{n} \rceil+1$ and $d[v]$ is the distance from $s$ to $v$ in $T_s$ for every $v\in V$.

\begin{lemma}
\label{lem:TrimDijkstara}
After the $i^\text{th}$ iteration, if $d_{G^w_{i}}(s,v) < \lceil \sqrt{n} \rceil+1$ then
$v\in T_s$ and
$d[v] = d_{T_s}(s,v) = d_{G^w_{i}}(s,v)$, and otherwise $d[v] = \infty$.
\end{lemma}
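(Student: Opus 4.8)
The plan is to prove Lemma~\ref{lem:TrimDijkstara} by downward induction on $i$, from $i=k$ down to $i=0$, reading ``after iteration $k$'' as the state of $T_s$ and of the array $d$ right after the initial Dijkstra from $s$ in $G^w=G^w_k$ and the first trimming step. The base case is immediate: that Dijkstra produces a genuine shortest path tree of $G^w_k$ with $d[v]=d_{G^w_k}(s,v)$, and the trimming deletes exactly the vertices with $d[v]$ above the threshold $\lceil\sqrt n\rceil+1$ (setting their value to $\infty$), so the claim holds for $i=k$. For the inductive step I assume that after iteration $i+1$ the tree $T_s$ is the shortest path tree of $G^w_{i+1}$ trimmed at radius $\lceil\sqrt n\rceil+1$, with $d$ recording its distances, and I must re-establish this for $G^w_i=G^w_{i+1}\setminus\{v_{i+1}\}$ after iteration $i$ has been processed.

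The first ingredient is the observation that removing $v_{i+1}$ affects distances from $s$ only for vertices of the old subtree $T_{s,v_{i+1}}$: distances can only increase when a vertex is deleted, and for $v\notin T_{s,v_{i+1}}$ the $T_s$-path of $v$ avoids $v_{i+1}$ and survives in $G^w_i$, so $d_{G^w_i}(s,v)=d_{G^w_{i+1}}(s,v)$; since the algorithm leaves $d[v]$ and the parent of $v$ untouched for those vertices, they remain correct, and vertices that were already trimmed stay consistent as their distance only grew. The core of the proof is that the auxiliary graph $G_{s,v_{i+1}}$ reproduces the $G^w_i$-distances from $s$ to the subtree vertices up to the trimming radius. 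For the easy inequality, every $s$--$v$ path in $G_{s,v_{i+1}}$ expands into a walk of the same length in $G^w_i$ by replacing each shortcut edge $(s,w)$ --- whose weight $d[w]$ equals $d_{G^w_i}(s,w)$ by the previous observation, since $w\notin T_{s,v_{i+1}}$ --- with a true shortest $s$--$w$ path in $G^w_i$; this walk avoids $v_{i+1}$, giving $d_{G_{s,v_{i+1}}}(s,v)\ge d_{G^w_i}(s,v)$. For the reverse inequality, take a shortest $s$--$v$ path $\pi$ in $G^w_i$ with $v$ in the subtree, and let $w$ be the last vertex of $\pi$ lying outside $T_{s,v_{i+1}}$; every prefix distance along $\pi$ is at most $d_{G^w_i}(s,v)<\lceil\sqrt n\rceil+1$, and since $G^w_{i+1}\supseteq G^w_i$ the vertex $w$ was not trimmed after iteration $i+1$, so $w$ is a boundary vertex of $G_{s,v_{i+1}}$ equipped with the shortcut $(s,w)$ of weight $d_{G^w_i}(s,w)$; concatenating this shortcut with the suffix $\pi[w..v]$, all of whose edges appear in $G_{s,v_{i+1}}$ with weight $1$ equal to their weight in $G^w_i$, yields an $s$--$v$ path in $G_{s,v_{i+1}}$ of length exactly $d_{G^w_i}(s,v)$. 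Hence the two distances coincide on $[0,\lceil\sqrt n\rceil+1)$, so running Dijkstra on $G_{s,v_{i+1}}$ and re-trimming restores the correct trimmed shortest path tree on the subtree vertices, and grafting the recomputed parents onto the frozen remainder of $T_s$ keeps $T_s$ a shortest path tree with $d[v]=d_{T_s}(s,v)$; this closes the induction.

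The step that requires the most care --- and the one I would single out as the main obstacle --- is verifying that the non-shortcut edges of $G_{s,v_{i+1}}$ genuinely carry weight $1$ in $G^w_i$, i.e.\ that none of them is a weight-$\epsilon$ edge of $P=P_G(s,t)$. This is true because the subtree currently being peeled off contains no vertex of $P$: after iteration $i+1$ the only surviving path vertices are $v_0,\dots,v_{i+1}$, and the weight-$\epsilon$ prefix edges force $v_0,\dots,v_{i+1}$ to form a root-path of $T_s$, so $v_0,\dots,v_i$ are ancestors of $v_{i+1}$ and hence not in $T_{s,v_{i+1}}$, while $v_{i+1}$ is explicitly excluded; consequently any edge incident to a subtree vertex has at least one endpoint off $P$ and thus weight $1$. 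The remaining points --- the precise behaviour at distance exactly $\lceil\sqrt n\rceil+1$, and that Dijkstra on $G_{s,v_{i+1}}$ together with the untouched part of $T_s$ gives an acyclic parent structure --- are routine.
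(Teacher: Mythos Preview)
Your proposal is correct and follows essentially the same approach as the paper: downward induction on $i$, the same case split on whether $v$ lies in the subtree $T_{s,v_{i+1}}$, and the identical two-inequality argument for $d_{G_{s,v_{i+1}}}(s,v)=d_{G^w_i}(s,v)$ using the ``last vertex outside the subtree'' trick. Your explicit verification that the non-shortcut edges of $G_{s,v_{i+1}}$ carry weight~$1$ in $G^w_i$ (because $v_0,\dots,v_i$ are ancestors of $v_{i+1}$ in $T_s$ and hence lie outside the subtree) is a detail the paper's proof simply asserts without justification, so you are in fact slightly more careful here.
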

\begin{proof}


We prove the claim by induction on $i$.

For $i=k$, that is, $G^w_{k} = G^w$ the claim trivially holds by the correctness of Dijkstra on $G^w$.
Assume the claim is correct for iteration $j$ such that $j>i$ and consider iteration $i$ for some $i < k$.


{\bf Consider a vertex $v$ such that $d_{G^w_{i}}(s,v) < \lceil \sqrt{n} \rceil+1$.} We also have $d_{G^w_{i+1}}(s,v) < \lceil \sqrt{n} \rceil+1$ as $G^w_{i} \subseteq G^w_{i+1}$.
By induction hypothesis we have that before iteration $i$, $d[v] = d_{T_s}(s,v)= d_{G^w_{i+1}}(s,v)$.

If $v \notin T_{s,v_{i+1}}$ then $d[v]$ and $d_{T_s}(s,v)$ do not change.
Moreover the shortest path from $s$ to $v$ in  $T_s$ before the iteration
is a shortest path in $G^w_{i+1}$, since this shortest path does not contain $v_{i+1}$ it is also a shortest path in
$G^w_{i}$. Hence, after iteration $i$ it holds that  $v \in T_s$ and $d[v] = d_{T_s}(s,v)= d_{G^w_{i}}(s,v)$.

Consider the case that $v \in T_{s,v_{i+1}} \setminus \{v_{i+1}\}$.
We prove that $d_{G_{s,v_{i+1}}}(s,v) = d_{G^w_{i}}(s,v)$.
We first prove that $d_{G_{s,v_{i+1}}}(s,v) \ge d_{G^w_{i}}(s,v)$.
By construction of $G_{s,v_{i+1}}$, every edge $(x,y)$ of $G_{s,v_{i+1}}$ is either an original edge of $G$ with weight $1$ that exists also in $G^w_{i}$, or it is an auxiliary shortcut edge $(s,y)$ such that $y \notin T_{s,v_{i+1}}$ and its weight in $G_{s,v_{i+1}}$ is defined as $\omega(s,y) = d[y]$. In the latter case, we have already proved in the previous paragraph that $d[y] = d_{G^w_{i}}(s,y)$ is the length of the shortest path from $s$ to $y$ in $G^w_i$. Hence, every $s$-to-$v$ path in $G_{s,v_{i+1}}$ is associated with an $s$-to-$v$ path in $G^w_{i}$ of the same length, and hence $d_{G_{s,v_{i+1}}}(s,v) \ge d_{G^w_{i}}(s,v)$.

We now prove that $d_{G^w_{i}}(s,v) \le d_{G_{s,v_{i+1}}}(s,v)$.
Let $P'$ be a shortest path from $s$ to $v$ in $G^w_{i}$.
Since we assumed $d_{G^w_{i}}(s,v) < \lceil \sqrt{n} \rceil+1$ and $G^w_{i} \subset G^w_{i+1}$ then $d_{G^w_{i+1}}(s,v) < \lceil \sqrt{n} \rceil+1$ and
hence by the induction hypothesis all the vertices along $P'$ are in $T_s$ at the beginning of iteration $i$.
Let $v' \in P'$ be the last vertex of $P'$ such that $v' \notin T_{s,v_{i+1}}$.
Since we assume $v \in T_{s,v_{i+1}}$, then $v' \ne v$ and all the vertices following $v'$ in $P'$ are in $T_{s,v_{i+1}}$.
Let $P_1$ be the subpath of $P'$ from $s$ to $v'$, and let $P_2$ be the subpath of $P'$ from $v'$ to $v$.
We claim that the edge $(s,v') \in G_{s,v_{i+1}}$ and its weight is $\omega(s,v') = d_{G^w_{i}}(s,v')$.
Since $v'$ is a neighbour of a vertex in $T_{s,v_{i+1}}$ ({\sl e.g.}, the vertex which follows $v'$ in $P'$ is in $T_{s,v_{i+1}}$ by definition of $v'$) then it holds that $v' \in V_{s,v_{i+1}}$.
Hence $(s,v')$ is a shortcut edge from $s$ to $v'$ whose weight equals to the weight of the shortest path $d[v']$ and as $v' \not \in T_{s, v_{i+1}}$ then we have already proved above that $v' \in T_s$ and $d[v'] = d_{T_s}(s,v')= d_{G^w_{i}}(s,v')$.
Furthermore, since all the vertices of $P'$ after $v'$ are contained in $T_{s,v_{i+1}}$ and thus in $V_{s,v_{i+1}}$,
then it follows that all the edges of $P_2$ are contained in $G_{s,v_{i+1}}$ with weight $1$ which is their original weights in $G^w_{i}$.
Hence, the path $P'' = (s, v') \cdot P_2$ is a path in $G_{s,v_{i+1}}$ whose length is $|P''| = \omega(s,v') + |P_2| = |P_1| + |P_2| = |P'|$.
Therefore, $G_{s,v_{i+1}}$ contains a $s$-to-$v$ path ({\sl e.g.}, the path $P''$) whose  length is $|P'|$.
It follows that $d_{G^w_{i}}(s,v) \le d_{G_{s,v_{i+1}}}(s,v)$.
Therefore, it holds that $d_{G_{s,v_{i+1}}}(s,v) = d_{G^w_{i}}(s,v)$ and the claim follows.

{\bf Consider the case where $d_{G^w_{i}}(s,v) > \lceil \sqrt{n} \rceil+1$.} If $d_{G^w_{i+1}}(s,v) > \lceil \sqrt{n} \rceil+1$ then the claim follows by induction hypothesis.
Otherwise it follows that $v \in T_{s,v_{i+1}}$.
It is not hard to verify that for every
vertex $u$ that belongs to $T_s$ after iteration $i$, we indeed have $d_{G_{s,v_{i+1}}}(s,u)  = d_{G^w_{i}}(s,u) \leq \lceil \sqrt{n} \rceil+1$.
Hence, since $v > \lceil \sqrt{n} \rceil+1$ we have $v \notin T_s$ and $d[v] = \infty$ after iteration $i$.

\end{proof}

The following lemmas show that every vertex $x$ belongs to at most $\lceil \sqrt{n} \rceil+1$ trees $T_{s,v_{i+1}}$. As we will later see, this will imply our desired running time.

\begin{lemma}
\label{lem:distance-increase}
Consider an iteration $i$ for some $0 \leq i \leq k-1$ and a vertex $x$ such that $x  \in T_{s,v_{i+1}}\setminus \{v_{i+1}\}$.
Then $$\lfloor d_{G^w_{i}}(s,x) \rfloor \ge \lfloor d_{G^w_{i+1}}(s,x) \rfloor +1.$$
\end{lemma}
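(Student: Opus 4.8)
The plan is to compare a shortest $s$-to-$x$ path in $G^w_{i+1}$ with a shortest $s$-to-$x$ path in $G^w_i$, exploiting the fact that the $\epsilon$-weighting of the edges of $P$ forces these paths into a rigid shape: a prefix of weight $(j\epsilon)$ running forward along $P$ up to the last surviving vertex of $P$, followed by a $P$-avoiding suffix whose weight is a positive integer.

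First I would determine $d_{G^w_{i+1}}(s,x)$. By Lemma~\ref{lem:TrimDijkstara}, immediately before iteration $i$ the tree $T_s$ is a (trimmed) shortest path tree of $G^w_{i+1}$, so the $T_s$-path from $s$ to $x$ is a shortest one; since $x\in T_{s,v_{i+1}}\setminus\{v_{i+1}\}$ it has the form $Q\circ\sigma$ with $Q$ from $s$ to $v_{i+1}$ and $\sigma$ from $v_{i+1}$ to $x$. Because every edge of $P$ has weight $\epsilon<1/n$ while all other edges have weight $1$, and $v_{i+1}$ is the last vertex of $P$ present in $G^w_{i+1}$, I would argue that $Q$ must be the forward subpath $v_0\to\cdots\to v_{i+1}$ of $P$ (any alternative uses a weight-$1$ edge and costs at least $1>(i+1)\epsilon$, using that $\epsilon<1/n$ and $P$ has at most $n$ vertices), so $\omega(Q)=(i+1)\epsilon$; and that $x\notin P$ and $\sigma$ meets $P$ only at $v_{i+1}$, hence $\sigma$ uses no edge of $P$ and $\omega(\sigma)=|\sigma|$ is a positive integer. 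This gives $d_{G^w_{i+1}}(s,x)=(i+1)\epsilon+|\sigma|$, hence $\lfloor d_{G^w_{i+1}}(s,x)\rfloor=|\sigma|$.

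Next I would lower-bound $d_{G^w_i}(s,x)$. Take any $s$-to-$x$ path $\pi$ in $G^w_i$ (if none exists the inequality is trivial), let $v_a$ be the last vertex of $P$ on $\pi$ — so $a\le i$ since $v_{i+1},\dots,v_k$ are deleted in $G^w_i$ — and write $\pi=Q'\circ\tau$ with $\tau$ running from $v_a$ to $x$. As in the previous step, $\tau$ avoids every edge of $P$, so $\omega(\pi)\ge\omega(\tau)=|\tau|$, where $|\tau|\ge 1$ is an integer. Concatenating the forward subpath $v_0\to\cdots\to v_a$ of $P$ (weight $a\epsilon$, present in $G^w_{i+1}$ as $a\le i$) with $\tau$ (present in $G^w_i\subseteq G^w_{i+1}$) yields an $s$-to-$x$ walk in $G^w_{i+1}$ of weight $a\epsilon+|\tau|$, whence $(i+1)\epsilon+|\sigma|=d_{G^w_{i+1}}(s,x)\le a\epsilon+|\tau|$. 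Thus $|\tau|\ge|\sigma|+(i+1-a)\epsilon>|\sigma|$, and integrality forces $|\tau|\ge|\sigma|+1$, so $\omega(\pi)\ge|\sigma|+1$. Minimizing over $\pi$ gives $d_{G^w_i}(s,x)\ge|\sigma|+1=\lfloor d_{G^w_{i+1}}(s,x)\rfloor+1$, which is an integer, hence $\lfloor d_{G^w_i}(s,x)\rfloor\ge\lfloor d_{G^w_{i+1}}(s,x)\rfloor+1$, as required.

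The step I expect to require the most care is verifying that the suffixes $\sigma$ and $\tau$ genuinely use no edge of $P$: this rests on the observation that $v_{i+1}$ (respectively $v_i$) is the last vertex of $P$ surviving in $G^w_{i+1}$ (respectively $G^w_i$), together with the simplicity of these subpaths, which together imply that the only vertex of $P$ they can meet is their common source $v_{i+1}$ (respectively $v_a$). Everything else is routine, chiefly the estimate $0<(i+1)\epsilon<1$ that holds because $P$ has at most $n$ vertices and $\epsilon<1/n$.
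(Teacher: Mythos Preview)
Your argument is correct and follows essentially the same approach as the paper: identify that the shortest $s$-to-$x$ path in $G^w_{i+1}$ has the form $\langle v_0,\ldots,v_{i+1}\rangle \circ \sigma$ with $\sigma$ disjoint from $E(P)$ (so $\lfloor d_{G^w_{i+1}}(s,x)\rfloor=|\sigma|=d_{G'}(v_{i+1},x)$), decompose any $s$-to-$x$ path in $G^w_i$ at its last vertex $v_a$ of $P$, and compare the resulting $\langle v_0,\ldots,v_a\rangle\circ\tau$ walk against the optimum in $G^w_{i+1}$ to force $|\tau|\ge|\sigma|+1$ by integrality. The paper phrases the last step as a contradiction while you argue directly, and you are more explicit about why $\sigma$ and $\tau$ avoid $E(P)$ (via simplicity and the fact that $v_{i+1}$, respectively $v_i$, is the last surviving vertex of $P$), but the substance is the same.
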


\begin{proof}
Assume $x$ belongs to $T_{s,v_{i+1}}\setminus \{v_{i+1}\}$ for some  $0 \leq i \leq k-1$.
By Lemma \ref{lem:TrimDijkstara} after the $i$'th iteration, the shortest path between $s$ and $x$ in $T_s$ is a shortest path
between $s$ and $x$ in $G^w_i$.
Moreover, since $x$ belongs to $T_{s,v_{i+1}}\setminus \{v_{i+1}\}$
then $v_{i+1}$ is on a shortest path from $s$ to $x$ in $G^w_{i+1}$. Therefore,
$d_{G^w_{i+1}}(s,x) = \epsilon (i+1) + d_{G'}(v_{i+1}, x)$.
Observe that $\lfloor d_{G^w_{i+1}}(s,x) \rfloor$ is the length of the shortest
path from $v_{i+1}$ to $x$ in $G'$ (since $\epsilon < 1/n$).
 Assume by contradiction that $\lfloor
d_{G^w_{i}}(s,x) \rfloor \le \lfloor d_{G^w_{i+1}}(s,x) \rfloor$. Then the length of the
shortest path from $v_{i'}$ to $x$ in $G'$ for some index $i' < i+1$ is
at most the length of the path from $v_{i+1}$ to $x$ in $G'$. Then the path
from $s$ to $v_{i'}$ along $P_G(s,t)$ concatenated with the shortest path
from $v_{i'}$ to $x$  in $G'$ has length in $G^w_{i+1}$ at most $\epsilon i' + \lfloor
d_{G^w_{i+1}}(v_{i'},x) \rfloor$ and hence
it is shorter than $d_{G^w_{i+1}}(s,x) = \epsilon (i+1) + d_{G'}(v_{i+1}, x)$
which is a contradiction since $d_{G^w_{i+1}}(s,x)$ is the length of
the shortest path from $s$ to $x$ in $G^w_{i+1}$.
\end{proof}

By Lemmas \ref{lem:TrimDijkstara} and \ref{lem:distance-increase} and the fact that the algorithm trims the tree $T_s$ at distance $\lceil \sqrt{n} \rceil+1$ we get the following.

\begin{lemma}
\label{lem:num-of-Dijkstra}
Every vertex $x$ belongs to at most $\lceil \sqrt{n} \rceil+1$ subtrees $T_{s,v_{i+1}}$
for $0 \leq i \leq k-1$.
\end{lemma}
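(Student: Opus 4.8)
The plan is to fix a vertex $x$ and track the integer part $\lfloor d_{G^w_i}(s,x)\rfloor$ of its distance from $s$ as $i$ runs downward from $k$ to $0$. Each iteration whose subtree $T_{s,v_{i+1}}$ contains $x$ will be shown to push this integer strictly upward by at least $1$ (this is exactly Lemma~\ref{lem:distance-increase}), while the trimming of $T_s$ at distance $\lceil\sqrt n\rceil+1$ (via Lemma~\ref{lem:TrimDijkstara}) caps how high it can ever get before $x$ is deleted once and for all; dividing the available range by the per-subtree increment yields the bound.

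First I would dispose of the easy case $x\in P_G(s,t)$: such an $x$ is the root $v_{i+1}$ of at most one subtree, and in every other iteration $j$ it is either an ancestor of $v_{j+1}$ along the $\epsilon$-weight spine of $T_s$ (hence not in $T_{s,v_{j+1}}$) or was already removed from $T_s$ at an earlier iteration, so it lies in at most one subtree. From now on assume $x\notin P_G(s,t)$, so $x\in T_{s,v_{i+1}}$ is the same as $x\in T_{s,v_{i+1}}\setminus\{v_{i+1}\}$; let $i_1>i_2>\dots>i_r$ be exactly the iterations with $x\in T_{s,v_{i_j+1}}\setminus\{v_{i_j+1}\}$, and note we may assume $r\ge 2$.

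The heart of the argument is a chain linking consecutive active iterations. Fix $2\le j\le r$. Since the indices are integers, $i_{j-1}\ge i_j+1$; and since $G^w_i\subseteq G^w_{i+1}$ (each iteration only deletes a vertex of $P_G(s,t)$), distances from $s$ are nonincreasing in the index, so $\lfloor d_{G^w_{i_{j-1}}}(s,x)\rfloor\le\lfloor d_{G^w_{i_j+1}}(s,x)\rfloor$. Applying Lemma~\ref{lem:distance-increase} at iteration $i_{j-1}$ (valid because $x\in T_{s,v_{i_{j-1}+1}}\setminus\{v_{i_{j-1}+1}\}$) then gives
\[
\bigl\lfloor d_{G^w_{i_j+1}}(s,x)\bigr\rfloor\;\ge\;\bigl\lfloor d_{G^w_{i_{j-1}}}(s,x)\bigr\rfloor\;\ge\;\bigl\lfloor d_{G^w_{i_{j-1}+1}}(s,x)\bigr\rfloor+1 .
\]
Writing $a_j:=\lfloor d_{G^w_{i_j+1}}(s,x)\rfloor$, this says $a_j\ge a_{j-1}+1$, so telescoping gives $a_r\ge a_1+(r-1)\ge r-1$. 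For the matching upper bound on $a_r$: since $x\in T_{s,v_{i_r+1}}$, the vertex $x$ is still in $T_s$ at the start of iteration $i_r$, i.e. right after iteration $i_r+1$; as $r\ge2$ we have $i_r\le k-2$, so iteration $i_r+1$ is a genuine pruning step, and Lemma~\ref{lem:TrimDijkstara} forces $d_{G^w_{i_r+1}}(s,x)<\lceil\sqrt n\rceil+1$ (otherwise $d[x]$ would have been set to $\infty$ and $x$ deleted), hence $a_r\le\lceil\sqrt n\rceil$. Combining the two bounds gives $r-1\le\lceil\sqrt n\rceil$, i.e. $r\le\lceil\sqrt n\rceil+1$, which is the claim.

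I expect the only real obstacle to be the bookkeeping around indices rather than any genuine difficulty: Lemma~\ref{lem:distance-increase} controls only a single deletion step, so one must compare distances at the iterations $i_j+1$ (not at $i_j$) and lean on the monotonicity $G^w_i\subseteq G^w_{i+1}$ to bridge the — possibly many — inactive iterations strictly between $i_j$ and $i_{j-1}$; once that bridging is set up, the telescoping and the cap from Lemma~\ref{lem:TrimDijkstara} are immediate. The boundary situations ($r\le1$, or $i_r+1$ coinciding with the initial Dijkstra rather than a pruning iteration) are harmless and are exactly what the reduction to $r\ge2$ takes care of.
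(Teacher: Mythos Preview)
Your proof is correct and follows essentially the same approach as the paper: track $\lfloor d_{G^w_i}(s,x)\rfloor$, use Lemma~\ref{lem:distance-increase} to force a strict unit increase at every active iteration, bridge inactive iterations by the monotonicity $G^w_i\subseteq G^w_{i+1}$, and cap the growth via the trimming in Lemma~\ref{lem:TrimDijkstara}. If anything, you are slightly more careful than the paper's own argument: you explicitly dispose of the case $x\in P_G(s,t)$ (needed since Lemma~\ref{lem:distance-increase} only applies to $x\in T_{s,v_{i+1}}\setminus\{v_{i+1}\}$), and your reduction to $r\ge 2$ cleanly sidesteps the boundary where $i_r+1$ would coincide with the initial Dijkstra rather than a genuine pruning iteration.
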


\begin{proof}
Assume $x$ belongs to the trees $T_{s,v_{i_1}}...T_{s,v_{i_r}}$ for $i_1< i_2<...<i_r$.
We will show that $r \leq \lceil \sqrt{n} \rceil+1$, which implies the lemma.

By Lemma \ref{lem:TrimDijkstara} as long as $x \in T_s$ for some iteration $j$ we have that after iteration $j$ it holds that
$d[v] = d_{G^w_{j}}(s,v)$. By Lemma \ref{lem:distance-increase} we have
$\lfloor d_{G^w_{i_{j}}}(s,x) \rfloor +1 \leq
\lfloor d_{G^w_{i_j -1} }(s,x) \rfloor \leq
\lfloor d_{G^w_{i_{j-1}}}(s,x) \rfloor$.


Since the algorithm only maintains vertices $v$ whose
distance $d[v]$ is less than $\lceil \sqrt{n} \rceil+1$ then after $x$
participates in $\lceil \sqrt{n} \rceil+1$ subtrees $T_{s,v_{i+1}}$ the algorithm removes it from $T_s$ and therefore $r \leq \lceil \sqrt{n} \rceil+1$ as required.
\end{proof}

\begin{lemma}
The total running time of the algorithm is $\tilde{O}(m\sqrt{n})$.
\end{lemma}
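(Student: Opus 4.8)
The plan is to bound the work done by the algorithm phase-by-phase, charging all Dijkstra work to vertex-participations in the subtrees $T_{s,v_{i+1}}$, and then invoking Lemma~\ref{lem:num-of-Dijkstra} to cap the number of such participations by $\lceil\sqrt n\rceil+1$. First I would account for the one-time preprocessing: constructing $G^w$, running a single Dijkstra from $s$ in $G^w$, building the trimmed shortest-path tree $T_s$, and initializing $V^w_{\sqrt n}$ and $\mathcal D^w_{\sqrt n}$; all of this is $\tilde O(m+n)$. The remaining cost is the $k$ iterations $i=k-1,\dots,0$, and the key observation is that in iteration $i$ the algorithm only touches the subtree $T_{s,v_{i+1}}$ together with its neighborhood: it builds $G_{s,v_{i+1}}$ on the vertex set $V_{s,v_{i+1}}$, runs Dijkstra there, updates $d[\cdot]$ and parents for vertices of $T_{s,v_{i+1}}\setminus\{v_{i+1}\}$, and possibly removes some of them from $T_s$.

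Next I would quantify the cost of a single iteration $i$ in terms of $T_{s,v_{i+1}}$. The vertices of $G_{s,v_{i+1}}$ are the vertices of $T_{s,v_{i+1}}$ plus their out/in-neighbors, and the edges are the original edges incident to $T_{s,v_{i+1}}\setminus\{v_{i+1}\}$ (each contributing once) plus at most one shortcut edge $(s,v)$ per vertex $v\in V_{s,v_{i+1}}\setminus T_{s,v_{i+1}}$. Hence the number of edges of $G_{s,v_{i+1}}$ is $O\!\big(\sum_{v\in T_{s,v_{i+1}}\setminus\{v_{i+1}\}} \deg(v)\big)$, and running Dijkstra on it costs $\tilde O\!\big(\sum_{v\in T_{s,v_{i+1}}\setminus\{v_{i+1}\}} \deg(v)\big)$ time; identifying $V_{s,v_{i+1}}$, building the shortcut edges, and the post-processing (trimming, adding to $V^w_{\sqrt n}$ and $\mathcal D^w_{\sqrt n}$, where extracting the suffix of $\lceil\sqrt n\rceil$ edges of each newly qualifying vertex's tree-path costs $O(\sqrt n)$ per vertex) are all within the same asymptotic bound up to polylog factors, since each such vertex also contributes $\deg(v)\ge 1$ and $O(\sqrt n)=O(\sqrt n\cdot\deg(v))$ — so these terms sum to $\tilde O\big(\sqrt n\sum_v\deg(v)\big)$ over a participation, absorbed below.

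Then I would sum over all iterations. Summing $\sum_{i=0}^{k-1}\sum_{v\in T_{s,v_{i+1}}\setminus\{v_{i+1}\}} \deg(v)$ and swapping the order of summation, each vertex $x$ contributes $\deg(x)$ once for every iteration $i$ with $x\in T_{s,v_{i+1}}\setminus\{v_{i+1}\}$; by Lemma~\ref{lem:num-of-Dijkstra} there are at most $\lceil\sqrt n\rceil+1$ such iterations. Therefore the total is at most $(\lceil\sqrt n\rceil+1)\sum_{x\in V}\deg(x)=O(\sqrt n\cdot m)$, and likewise the $O(\sqrt n)$-per-qualifying-vertex suffix-extraction cost sums to $O(\sqrt n\cdot n)=O(\sqrt n\,m)$ (assuming $m\ge n$; isolated vertices are irrelevant). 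Adding the $\tilde O(m+n)$ preprocessing yields a total running time of $\tilde O(m\sqrt n)$.

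**Main obstacle.** The delicate point is the bookkeeping that makes each iteration's cost genuinely proportional to $\sum_{v\in T_{s,v_{i+1}}}\deg(v)$ rather than to the size of the whole current graph: one must be careful that identifying the subtree $T_{s,v_{i+1}}$, locating all its boundary neighbors, and creating exactly the right shortcut edges (with the already-correct weights $d[v]$, whose correctness is guaranteed by Lemma~\ref{lem:TrimDijkstara}) can all be done in time linear (up to polylog) in that local degree sum — in particular that no global re-scan of $G^w$ or of $T_s$ is performed per iteration. Once this locality is established, the charging argument via Lemma~\ref{lem:num-of-Dijkstra} is routine.
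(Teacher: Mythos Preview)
Your proposal is correct and follows essentially the same approach as the paper: bound the cost of iteration $i$ by $\tilde O\big(\sum_{v\in T_{s,v_{i+1}}}\deg(v)\big)$, then invoke Lemma~\ref{lem:num-of-Dijkstra} to cap each vertex's total participation by $\lceil\sqrt n\rceil+1$ and sum to $\tilde O(m\sqrt n)$. The paper's own proof is a terse version of exactly this argument; your extra bookkeeping (suffix extraction, locality of building $G_{s,v_{i+1}}$) fills in details the paper leaves implicit but does not change the strategy.
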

\begin{proof}
The dominant part of the running time of the algorithm is the computations of Dijkstra.
The first Dijkstra computation is on the graph $G^w$ and it takes $O(m+n\log{n})$ time.

We claim that the computation of iteration $i$ takes $O(\sum_{v\in T_{s,v_{i+1}}}{\deg(v)} \log{n})$, where $\deg(v)$ is the degree of $v$ in $G$.
To see this, note that both the number of nodes and the number of edges in $G_{s,v_{i+1}}$ is bounded by
$O(\sum_{v\in T_{s,v_{i+1}}}{\deg(v)})$.
By Lemma \ref{lem:num-of-Dijkstra} every node $v$ belongs to at most $\lceil \sqrt{n} \rceil+1$ trees
$T_{s,v_{i+1}}$.
It is not hard to see now that the lemma follows.
\end{proof}

The following Lemma proves Lemma \ref{thm:p-sqrt}.

\begin{lemma}
$V^w_{\sqrt{n}} = V_{\sqrt{n}}$ and ${\cal D}^w_{\sqrt{n}}$ can be chosen as the set ${\cal D}_{n}$ according to Definition
\ref{def:p-sqrtn}.
\end{lemma}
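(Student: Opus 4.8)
The plan is to prove the equality $V^w_{\sqrt n}=V_{\sqrt n}$ by establishing the two inclusions separately, and at the same time to verify that the path $\tilde D^w(x)$ recorded by the algorithm when $x$ enters $V^w_{\sqrt n}$ is admissible as the path $D(x)$ of Definition~\ref{def:p-sqrtn}; together with Lemma~\ref{thm:p-sqrt} this gives that ${\cal D}^w_{\sqrt n}$ is a legitimate choice of ${\cal D}_n$. The backbone is a structural fact about the graphs $G^w_i$. Since $P$-edges carry weight $\epsilon<1/n$, all other edges carry weight $1$, and simple paths have fewer than $n$ edges, the weight of any path in $G^w_i$ has fractional part strictly less than $1$; hence $\lfloor d_{G^w_i}(s,v)\rfloor$ is the minimum number of non-$P$ edges on an $s$-$v$ walk in $G_i$, and any shortest $s$-$v$ path in $G^w_i$ first walks along a prefix $v_0,v_1,\dots,v_\ell$ of $P$ and then leaves $P$ and never re-enters it (any excursion off $P$ that returns to $P$ can be replaced by walking on $P$, strictly reducing the weight). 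Therefore, for $x\notin P$ with $d_{G^w_i}(s,x)<\lceil\sqrt n\rceil+1$, every shortest $s$-$x$ path in $G^w_i$ has the form $P[v_0..v_\ell]\circ D$, with $D$ a $v_\ell$-to-$x$ path disjoint from $P\setminus\{v_\ell\}$ — hence a path in $G'$ — of exactly $\lfloor d_{G^w_i}(s,x)\rfloor$ edges, and $d_{G^w_i}(s,x)=\epsilon\ell+\lfloor d_{G^w_i}(s,x)\rfloor$. In particular, whenever $x$ is added to $V^w_{\sqrt n}$ the recorded $\tilde D^w(x)$ is such a $D$: it lies in $G'$, has $\lceil\sqrt n\rceil$ edges, and $d_{G'}(v_\ell,x)\le\lceil\sqrt n\rceil$ for the corresponding leaving-vertex $v_\ell$.

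To show $V_{\sqrt n}\subseteq V^w_{\sqrt n}$, fix $x\in V_{\sqrt n}$. From (the proof of) Lemma~\ref{lem:weighted}, $d_{G^w_{\rho(x)}}(s,x)=\epsilon\rho(x)+\lceil\sqrt n\rceil<\lceil\sqrt n\rceil+1$. Since $i\mapsto d_{G^w_i}(s,x)$ is non-increasing, there is a largest index $B\ge\rho(x)$ with $d_{G^w_B}(s,x)\ge\lceil\sqrt n\rceil$, and for that $B$ we have $\lfloor d_{G^w_B}(s,x)\rfloor=\lceil\sqrt n\rceil$. If $B=k$ the initialization inserts $x$ into $V^w_{\sqrt n}$. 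If $B<k$, then $d_{G^w_{B+1}}(s,x)<\lceil\sqrt n\rceil\le d_{G^w_B}(s,x)$, so deleting $v_{B+1}$ strictly increases the distance of $x$; hence every shortest $s$-$x$ path in $G^w_{B+1}$ uses $v_{B+1}$, so $x\in T_{s,v_{B+1}}$ at iteration $B$ and $x$ is processed, and by Lemma~\ref{lem:TrimDijkstara} its recomputed value equals $d_{G^w_B}(s,x)\in[\lceil\sqrt n\rceil,\lceil\sqrt n\rceil+1)$, so $x$ is added to $V^w_{\sqrt n}$ (once and for all, since any later deletion affecting $x$ would push its floor past $\lceil\sqrt n\rceil$). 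Either way $x\in V^w_{\sqrt n}$. Applying the structural fact at iteration $B$, $\tilde D^w(x)=D$ leaves $P$ at some $v_\ell$ with $\epsilon\ell+\lceil\sqrt n\rceil=d_{G^w_B}(s,x)\le d_{G^w_{\rho(x)}}(s,x)=\epsilon\rho(x)+\lceil\sqrt n\rceil$, so $\ell\le\rho(x)$; and $\ell<\rho(x)$ would force $d_{G'}(v_\ell,x)\le\lceil\sqrt n\rceil$, contradicting $x\in V_{\sqrt n}$. Hence $\ell=\rho(x)$, and $\tilde D^w(x)$ is a shortest $v_{\rho(x)}$-to-$x$ path in $G'$ (it has $d_{G'}(v_{\rho(x)},x)=\lceil\sqrt n\rceil$ edges), i.e.\ a valid $D(x)$.

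For the reverse inclusion, let $x\in V^w_{\sqrt n}$; note $x\notin P$, since each $v_j$ present in $G^w_i$ has $d_{G^w_i}(s,v_j)=\epsilon j<\lceil\sqrt n\rceil$, so no vertex of $P$ is ever added. Say $x$ is added at iteration $i$ (viewing the initialization as iteration $k$). By Lemma~\ref{lem:TrimDijkstara}, $\lfloor d_{G^w_i}(s,x)\rfloor=\lceil\sqrt n\rceil$, and by the structural fact the winning path leaves $P$ at some $v_\ell$ with $\ell\le i$ and $d_{G'}(v_\ell,x)\le\lceil\sqrt n\rceil$. If $d_{G'}(v_j,x)<\lceil\sqrt n\rceil$ for some $j\le i$ whose witnessing $G'$-path stays inside $G_i$, then prepending $P[v_0..v_j]$ yields a walk in $G^w_i$ with fewer than $\lceil\sqrt n\rceil$ non-$P$ edges, contradicting $\lfloor d_{G^w_i}(s,x)\rfloor=\lceil\sqrt n\rceil$; for $i=k$ this already excludes $d_{G'}(v_j,x)<\lceil\sqrt n\rceil$ for every $j$, and for $i<k$ the extra fact that $x$ was processed at iteration $i$ — so by Lemma~\ref{lem:distance-increase} the integer part of $x$'s distance strictly increased at that step, which via the structural fact one level up confines the relevant detour to $G_i$ — supplies the remaining case. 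Thus $d_{G'}(v_j,x)\ge\lceil\sqrt n\rceil$ for all $j\le i$, whence $d_{G'}(v_\ell,x)=\lceil\sqrt n\rceil$, so $x\in X$, $\rho(x)\le\ell$, and for $j<\rho(x)$ we have $d_{G'}(v_j,x)\ge\lceil\sqrt n\rceil$ and $d_{G'}(v_j,x)\ne\lceil\sqrt n\rceil$ by minimality of $\rho(x)$, hence $d_{G'}(v_j,x)>\lceil\sqrt n\rceil$. Therefore $x\in V_{\sqrt n}$.

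Combining the two inclusions gives $V^w_{\sqrt n}=V_{\sqrt n}$, and since $\tilde D^w(x)$ is an admissible instance of $D(x)$ for every $x$ in this set, ${\cal D}^w_{\sqrt n}=\{\tilde D^w(x)\mid x\in V^w_{\sqrt n}\}$ is a legitimate choice of ${\cal D}_n$; with Lemma~\ref{thm:p-sqrt} this proves the statement. I expect the inclusion $V^w_{\sqrt n}\subseteq V_{\sqrt n}$ to be the main obstacle: one must show that the vertex $v_\ell$ at which the winning shortest path leaves $P$ is genuinely $\rho(x)$ and that $d_{G'}(v_\ell,x)$ equals $\lceil\sqrt n\rceil$ exactly (not less), i.e.\ that the algorithm never records a spurious vertex. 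The subtlety is that $d_{G'}(v_j,x)$ could a priori be witnessed only by $G'$-paths threading through other vertices of $P$ that have been deleted from $G_i$, so the distances the algorithm manipulates need not literally coincide with the pure $G'$-distances of Definition~\ref{def:v-sqrtn}; Lemma~\ref{lem:distance-increase} — the statement that deleting each $v_{i+1}$ raises the floor of every affected distance by at least one — is exactly the device that reconciles the two notions and closes this direction.
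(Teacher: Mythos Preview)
Your overall strategy mirrors the paper's: prove the two inclusions separately, use the structural form of shortest paths in $G^w_i$ (a prefix along $P$ followed by a $P$-avoiding suffix), and verify that the recorded $\tilde D^w(x)$ is an admissible $D(x)$. Your treatment of $V_{\sqrt n}\subseteq V^w_{\sqrt n}$ is essentially the paper's argument recast through the monotone map $i\mapsto d_{G^w_i}(s,x)$; this direction is fine.

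The direction $V^w_{\sqrt n}\subseteq V_{\sqrt n}$ is where your argument does not go through. You correctly isolate the real difficulty --- that a short $v_j$-to-$x$ path in $G'$ may pass through some $v_m$ with $m>i$ and hence fail to live in $G^w_i$ --- and you note that the paper glosses over this (indeed, the paper simply asserts that $\langle v_0,\dots,v_j\rangle\circ P_{G'}(v_j,x)$ is a path in $G^w_i$, which need not be true). But your proposed fix, ``by Lemma~\ref{lem:distance-increase} the integer part strictly increased, which via the structural fact one level up confines the relevant detour to $G_i$,'' is not an argument: Lemma~\ref{lem:distance-increase} only yields $\lfloor d_{G^w_{i+1}}(s,x)\rfloor<\lceil\sqrt n\rceil$, and nothing in that forces an arbitrary short $G'$-path from $v_j$ (with $j\le i$) to $x$ to avoid $v_{i+1},\dots,v_k$. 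Concretely, take $P=v_0\to v_1\to v_2$, add a vertex $a$ with edges $(v_0,a),(a,v_1)$, a direct edge $(v_1,x)$, and a separate $P$-disjoint path $v_0\to u_1\to\cdots\to u_{N-1}\to x$ of exactly $N=\lceil\sqrt n\rceil$ edges. Then $\psi(0)=N$, $\psi(1)=1$, so $x$ is added at iteration $0$ with $d[x]=N$; yet $d_{G'}(v_0,x)=3$ (via $v_0\to a\to v_1\to x$) and $d_{G'}(v_1,x)=1$, so no $d_{G'}(v_j,x)$ equals $N$ and $x\notin X\supseteq V_{\sqrt n}$. Thus the literal equality $V^w_{\sqrt n}=V_{\sqrt n}$ can fail, and neither your sketch nor the paper's own contradiction argument can be completed as written. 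What \emph{is} provable --- and what actually suffices for the algorithm and for Lemma~\ref{thm:p-sqrt} --- is the inclusion $V_{\sqrt n}\subseteq V^w_{\sqrt n}$ together with the fact that for each $x\in V_{\sqrt n}$ the recorded $\tilde D^w(x)$ is a shortest $v_{\rho(x)}$-to-$x$ path in $G'$; the extra vertices in $V^w_{\sqrt n}\setminus V_{\sqrt n}$ merely contribute harmless additional paths of length $N$ to $\mathcal D^w_{\sqrt n}$.
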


\begin{proof}
{\bf We first prove that $V^w_{\sqrt{n}} \subseteq V_{\sqrt{n}}$.}
Let $x \in V^w_{\sqrt{n}}$. As $x \in V^w_{\sqrt{n}}$ then there exists an iteration $i'$ such that after iteration $i'$,
$\lceil \sqrt{n} \rceil \leq d[x]  < \lceil \sqrt{n} \rceil+1$.
Consider the tree $T_s$ after iteration $i'$.
By Lemma \ref{lem:TrimDijkstara} after iteration $i'$, the path from $s$ to $x$ in $T_s$ is of length $d[x]$ and is a shortest path in $G^w_{i'}$.
Let $P$ be the shortest path from $s$ to $x$ in $T_s$.

Let $i$ be the maximal index such that $v_{i}$ is on the shortest path $P$.
As $P$ does not contain any vertex $v_{j}$ such that $i+1 \leq j$ then $P$ is also a shortest path in $G^w_{i}$.
As $P$ is of length between $\lceil \sqrt{n} \rceil$ and $\lceil \sqrt{n} \rceil+1$ we prove that $i = \rho(x)$.
We need to prove that $d_{G'}(v_i, x) = \lceil \sqrt{n} \rceil$ and that the distance from $v_j$ for every $j\leq i$ to $x$ in $G'$ is more than $\lceil \sqrt{n} \rceil$.

We first prove that $d_{G'}(v_i, x) = \lceil \sqrt{n} \rceil$.
Since $P$ is a shortest $s$-to-$x$ path in $T_s$ and $i$ be the maximal index such that $v_{i}$ is on the shortest path $P$ then $P$ is composed of the path $<v_0, \ldots, v_i>$ followed by a shortest path from $v_i$ to $x$ in $G'$. That is $P = <v_0, \ldots, v_i> \cdot P_{G'}(v_i, x)$ (where $P_{G'}(v_i, x)$ is a shortest path from $v_i$ to $x$ in $G'$) . We have that $\lceil \sqrt{n} \rceil \le |P| =  |<v_0, \ldots, v_i>| + |P_{G'}(v_i, x)| = \epsilon i + d_{G'}(v_i, x) < \lceil \sqrt{n} \rceil+1$ and $|<v_0, \ldots, v_i>| = i \epsilon < 1$ (as $\epsilon < 1/n$). Since all the edges of $P_{G'}(v_i, x)$ have weight $1$, we get that $|P_{G'}(v_i, x)| = d_{G'}(v_i, x) = \lfloor |P| \rfloor = \lceil \sqrt{n} \rceil$.

Next, we prove that the distance from $v_j$ for every $j < i$ to $x$ in $G'$ is more than $\lceil \sqrt{n} \rceil$.
Indeed, assume by contradiction there exists an index $j < i$ such that $d_{G'}(v_j, x) \le \lceil \sqrt{n} \rceil$.
Then the path $P' = <v_0, \ldots, v_j> \circ P_{G'}(v_j, x)$ (where $P_{G'}(v_j, x)$ is a shortest path from $v_j$ to $x$ in $G'$) has length $|P'| = \epsilon j + d_{G'}(v_j, x) \le  \epsilon j + \lceil \sqrt{n} \rceil < \epsilon i + \lceil \sqrt{n} \rceil = |P|$. We get that $P'$ which is a $s$-to-$x$ path in $G^w_{i}$ is shorter than $P$ which is a shortest $s$-to-$x$ path in $G^w_{i}$, which is a contradiction.
We proved that $d_{G'}(v_i, x) = \lceil \sqrt{n} \rceil$ and $d_{G'}(v_j, x) > \lceil \sqrt{n} \rceil$ for every $j\leq i$ and hence $i = \rho(x)$.
By Lemma \ref{lem:weighted} it follows that $\Tilde{D}^w(x)$ (which is the subpath containing the last $\lceil \sqrt{n} \rceil$ edges of $P$) is a shortest path from $v_{\rho(x)}$ to $x$ in $G'$.
Note also that the algorithm adds to ${\cal D}^w_{\sqrt{n}}$ the subpath $\Tilde{D}^w(v)$.

{\bf We now prove that $V_{\sqrt{n}} \subseteq V^w_{\sqrt{n}}$.}
Let $x \in V_{\sqrt{n}}$ then there exists an index $0 \le \rho(x) \le k$
such that $d_{G'}(v_{\rho(x)}, x) = \lceil \sqrt{n} \rceil$ and for all $0 \le i < \rho(x)$
it holds that $d_{G'}(v_i, x) \ge \lceil \sqrt{n} \rceil+1$.
Consider the tree $T_s$ at the end of iteration $\rho(x)$.

We first prove that $d_{G^w_{\rho(x)}}(s, x) < \lceil \sqrt{n} \rceil+1$ and hence by Lemma
\ref{lem:TrimDijkstara} it follows that $x \in T_s$.
Let $P$ be the following path from $s$ to $x$.
$P = <v_0, \ldots, v_{\rho(x)}> \circ P_{G'}(v_{\rho(x)}, x)$, that is, the path composed of the first
$\rho(x)$ edges from $s$ to $v_{\rho(x)}$ along $P_G(s,t)$ followed by a shortest path from $v_{\rho(x)}$ to $x$ in $G'$.
Since $|P| = \epsilon \rho(x) + d_{G'}(v_{\rho(x)},x) = \epsilon \rho(x) + \lceil \sqrt{n} \rceil < \lceil \sqrt{n} \rceil+1$
and $P$ is a path in $G^w_{\rho(x)}$ it follows that $d_{G^w_{\rho(x)}}(s,x) \le |P| < \lceil \sqrt{n} \rceil+1$.

Next, we prove that $d_{G^w_{\rho(x)}}(s, x) \ge \lceil \sqrt{n} \rceil$.
Let $P' = P_{G^w_{\rho(x)}}(s, x)$ be a shortest path from $s$ to $x$ in $G^w_{\rho(x)}$.
Let $i$ be the maximal index such that $v_i \in  P'$, then the subpath of $P'$ from $v_i$ to $x$ is a shortest path in $G'$ and its length $d_{G'}(v_i, x)$. Since $i \le \rho(x)$ (as $P'$ is a path in $G^w_{\rho(x)}$) then by Definition \ref{def:x-tag} it follows that $d_{G'}(v_i, x) \ge \lceil \sqrt{n} \rceil$. Therefore, $d_{G^w_{\rho(x)}}(s, x) = |P'| \ge d_{G'}(v_i, x) \ge \lceil \sqrt{n} \rceil$.

By Lemma \ref{lem:TrimDijkstara} after the $\rho(x)$ iteration,
$d[v] = d_{G^w_{\rho(x)}}(s,x)$. Hence, by the end of the $\rho(x)$ iteration
$\lceil \sqrt{n} \rceil \leq d[v] < \lceil \sqrt{n} \rceil+1$.
Therefore, by construction $x \in V^w_{\sqrt{n}}$.
Let $P$ be the shortest $s$-to-$x$ path in $T_s$ after the  $\rho(x)$ iteration.
By Lemma \ref{lem:TrimDijkstara} it holds that $P$ is a shortest $s$-to-$x$ path in $G^w_{\rho(x)}$, and
by Lemma \ref{lem:weighted} it follows that $\Tilde{D}^w(x)$ (which is the subpath containing the last $\lceil \sqrt{n} \rceil$ edges of $P$) is a shortest path from $v_{\rho(x)}$ to $x$ in $G'$.
Note that the algorithm adds to ${\cal D}^w_{\sqrt{n}}$ the subpath $\Tilde{D}^w(v)$.

In the proof above, we have also proved that every vertex $x \in V_{\sqrt{n}}$ we add a single path $D(x)$ to ${\cal D}^w_{\sqrt{n}}$.
The path $D(x)$ is obtained by the algorithm by taking the last $\lceil \sqrt{n} \rceil$ edges of a shortest path from $s$ to $x$ in the graph $G^w_{\rho(x)}$, and we have already proved that it is a shortest path from $\rho(x)$ to $x$ in $G'$ whose length is $\lceil \sqrt{n} \rceil$. This proves that ${\cal D}^w_{\sqrt{n}}$ can be used as the set of paths ${\cal D}_{n}$ according to Definition \ref{def:p-sqrtn}.
\end{proof}

\subsection{An Alternative $\Otilde(m\sqrt{n})$ Deterministic Algorithm for Computing ${\cal D}_{n}$} \label{sec:p-sqrt-computation2}
We shortly describe an alternative $\Otilde(m\sqrt{n})$
deterministic algorithm for computing ${\cal D}_{n}$. The algorithm
of Roddity and Zwick \cite{Roditty2005} for handling short detours
constructs $2\lceil \sqrt{n} \rceil$ auxiliary graphs $G^A_0,
\ldots, G^A_{2\lceil \sqrt{n} \rceil-1}$ (see Figure \ref{fig:the-graph-G-A} as an illustration of
the graph $G^A = G^A_0$). For every $0 \le z \le
2\lceil \sqrt{n} \rceil-1$, the auxiliary graph $G^A_z$ is obtained
by adding a new source vertex $r_z$ to $G'$ and an edge $(r_z,
v_{z+2q \lceil \sqrt{n} \rceil})$ of weight $\omega(r_z, v_{z+2q}) = q \lceil \sqrt{n} \rceil$ for every integer $0
\le q \le \frac{\sqrt{n}}{2}$. The weight of all the edges $E \setminus E(P)$ is
set to $1$. Then run Dijkstra's algorithm from $r_z$
in $G^A_z$ that computes a shortest paths tree $T_z$.

We claim that given the shortest paths trees $T_0, \ldots, T_{2\lceil \sqrt{n} \rceil-1}$, the
following algorithm computes the set of paths ${\cal D}_n$. For
every $v \in V$ the algorithm computes the minimum index $0 \le i
\le k$ such that at least one of the shortest paths trees $T_0, \ldots, T_{2\lceil \sqrt{n} \rceil-1}$
contains a path from $v_i$ to $v$ of length $\lceil \sqrt{n} \rceil$, and let $P'(v)$ be this path from $v_i$ to
$v$. If no such index $0 \le i \le k$ exists, then set $P'(v) =
\emptyset$. For every vertex $v \in V$ finding this minimum index
$i$ takes $O(\sqrt{n})$ time as there are $O(\sqrt{n})$ shortest
paths trees to check, and in every shortest paths tree $T_z$ we only need to
check the first edge $(r_z, v_j)$ of the path from $r_z$ to $v$.
So, given the shortest
paths trees $T_0, \ldots, T_{2\lceil \sqrt{n} \rceil-1}$, computing the paths ${\cal D}_n = \{
P'(v) \ | \ v \in V\}$ takes $O(n\sqrt{n})$ time.

\begin{lemma} \label{lem:alg-alternative}
Let $x \in V_{\sqrt{n}}$
and let $z := (\rho(x) \mod 2\lceil \sqrt{n} \rceil)$.
Then the shortest paths tree $T_z$ contains a shortest path in $G'$ from $v_{\rho(x)}$ to $x$
of length $\lceil \sqrt{n} \rceil$.
\end{lemma}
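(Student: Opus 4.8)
The plan is to pin down the single–source distances that Dijkstra computes in $G^A_z$ and to show that $x$ is discovered through the auxiliary edge $(r_z,v_{\rho(x)})$ followed by an $\lceil \sqrt{n} \rceil$-edge path of $G'$. With $z$ as in the statement, write $\rho(x)=z+2q^{*}\lceil \sqrt{n} \rceil$, so that $G^A_z$ contains the edge $(r_z,v_{\rho(x)})$ of weight $q^{*}\lceil \sqrt{n} \rceil$ (here $q^{*}\le \sqrt{n}/2$ since $\rho(x)\le k\le n$). The key quantitative claim I would establish is that $d_{G^A_z}(r_z,x)=(q^{*}+1)\lceil \sqrt{n} \rceil$ and, moreover, that this value can be attained only by a path that leaves $r_z$ along $(r_z,v_{\rho(x)})$. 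The upper bound is immediate: since $x\in V_{\sqrt{n}}$ we have $d_{G'}(v_{\rho(x)},x)=\lceil \sqrt{n} \rceil$, so concatenating the edge $(r_z,v_{\rho(x)})$ with any shortest $v_{\rho(x)}$-to-$x$ path of $G'$ (which has exactly $\lceil \sqrt{n} \rceil$ unit-weight edges) yields an $r_z$-to-$x$ path of weight $q^{*}\lceil \sqrt{n} \rceil+\lceil \sqrt{n} \rceil$.

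For the lower bound, note that every $r_z$-to-$x$ path in $G^A_z$ leaves $r_z$ by a single auxiliary edge $(r_z,v_i)$ with $i=z+2q\lceil \sqrt{n} \rceil$ and thereafter stays inside $G'$; hence its weight is at least $q\lceil \sqrt{n} \rceil + d_{G'}(v_i,x)$. If $q=q^{*}$ then $v_i=v_{\rho(x)}$ and $d_{G'}(v_i,x)=\lceil \sqrt{n} \rceil$, giving weight at least $(q^{*}+1)\lceil \sqrt{n} \rceil$; if $q>q^{*}$ then the first edge alone has weight $q\lceil \sqrt{n} \rceil\ge(q^{*}+1)\lceil \sqrt{n} \rceil$. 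The remaining case $q<q^{*}$ is the crux: here $i<\rho(x)$, so the defining property of $V_{\sqrt{n}}$ gives $d_{G'}(v_i,x)>\lceil \sqrt{n} \rceil$. To upgrade this to $q\lceil \sqrt{n} \rceil+d_{G'}(v_i,x)\ge(q^{*}+1)\lceil \sqrt{n} \rceil$ I would argue along the lines of the proof of Theorem~\ref{appendix:thm:roditty-short-detours}: a path entering at a vertex $v_i$ that lies at least $2(q^{*}-q)\lceil \sqrt{n} \rceil$ edges before $v_{\rho(x)}$ along the shortest path $P_G(s,t)$ should be forced to be ``far'' from $x$, because distances in $G'$ dominate distances in $G$ and $P_G(s,t)$ is a shortest path; concretely one would try to identify, on such a candidate path, a vertex that still lies on $P_G(s,t)$ sufficiently far along and charge the number of edges spent reaching it.

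With the distance claim in hand, the conclusion is short. The same case analysis shows the direct edge $(r_z,v_{\rho(x)})$ is the unique shortest $r_z$-to-$v_{\rho(x)}$ route, so $v_{\rho(x)}$ is a child of $r_z$ in $T_z$; and since $d_{G^A_z}(r_z,x)=q^{*}\lceil \sqrt{n} \rceil+d_{G'}(v_{\rho(x)},x)$ is attained only through $v_{\rho(x)}$, the vertex $x$ lies in the subtree of $T_z$ rooted at $v_{\rho(x)}$. Consequently the unique $T_z$-path from $v_{\rho(x)}$ to $x$ consists of $\lceil \sqrt{n} \rceil$ unit-weight edges, i.e. it is a shortest $v_{\rho(x)}$-to-$x$ path in $G'$ of length $\lceil \sqrt{n} \rceil$, which is exactly the claim of the lemma; plugging this into the description of the algorithm shows $P'(x)$ equals such a path and hence ${\cal D}_{n}$ is computed correctly. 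I expect the only real difficulty to be the $q<q^{*}$ case above --- ruling out a shortest $r_z$-to-$x$ path that enters $G'$ strictly before $v_{\rho(x)}$ --- together with making sure Dijkstra's tie-breaking does not route $x$ through some other entry point of equal total weight; if necessary one fixes an appropriate tie-breaking rule so that $T_z$ may be chosen with the stated property.
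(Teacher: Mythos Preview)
Your outline mirrors the paper's proof: both reduce to showing that every shortest $r_z$-to-$x$ path in $G^A_z$ must begin with the edge $(r_z,v_{\rho(x)})$, and both split into cases according to which auxiliary edge $(r_z,v')$ is taken first. The ``later'' case ($q>q^{*}$) is handled identically and correctly in both.

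The gap you flag in the $q<q^{*}$ case is genuine, and your proposed fix does not work. The analogy with Theorem~\ref{appendix:thm:roditty-short-detours} breaks down because there the target vertex lies on $P_G(s,t)$, so one can lower-bound $d_{G'}$ by $d_G$ and invoke that $P_G(s,t)$ is a shortest path; here $x$ is typically off the path, and a $G'$-path from an earlier entry point $v_i$ to $x$ need not touch $P_G(s,t)$ again, so there is no vertex ``still on $P_G(s,t)$'' to charge. The paper does not fill this gap either: it simply asserts $d_{G'}(v',x)\ge(2i+1)\lceil\sqrt{n}\rceil$ without justification, and this inequality is false in general. Concretely, take $\lceil\sqrt{n}\rceil=10$ and $P_G(s,t)=\langle v_0,\dots,v_{50}\rangle$, and attach two disjoint side-paths so that $d_{G'}(v_0,x)=11$, $d_{G'}(v_{20},x)=10$, and $d_{G'}(v_j,x)=\infty$ for all other $j$. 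Then $x\in V_{\sqrt{n}}$ with $\rho(x)=20$ and $z=0$, yet $d_{G^A_0}(r_0,x)=0+11=11<10+10=20$, so the unique shortest $r_0$-to-$x$ path enters through $v_0$, and $T_0$ contains no $v_{20}$-to-$x$ path at all. The step you could not complete therefore cannot be completed; the lemma as stated fails on this instance.
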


\begin{proof}
Since $x \in V_{\sqrt{n}}$
then $d_{G'}(v_{\rho(x)}, x) = \lceil \sqrt{n} \rceil$
and thus the shortest path in $G'$ from $v_{\rho(x)}$ to $x$ contains exactly $\lceil \sqrt{n} \rceil$ edges.
We prove that every shortest path in $G^A_z$ from $r_z$ to $x$ must start with the edge $(r_z, v_{\rho(x)})$.

Let $P' = (r_z, v_{\rho(x)}) \circ P_{G'}(v_{\rho(x)}, x)$.
Assume there exists a path $P_1$ in $G^A_z$ from $r_z$ to $x$ that starts with the edge $(r_z, v')$
such that $v' = v_{\rho(x)-i\cdot  2\lceil \sqrt{n} \rceil}$ for some integer $i>0$.
Then $\omega(P_1) = \omega(r_z, v') +
d_{G'}(v', x) = \omega(r_z, v_{\rho(x)}) - i \cdot  \lceil \sqrt{n} \rceil
+ d_{G'}(v', x) \ge \omega(r_z, v_{\rho(x)}) - i \cdot  \lceil \sqrt{n} \rceil + (2i+1)\cdot  \lceil \sqrt{n} \rceil = \omega(r_z, v_{\rho(x)}) + (i+1)\cdot  \lceil \sqrt{n} \rceil >
\omega(r_z, v_{\rho(x)}) +  \lceil \sqrt{n} \rceil =
\omega(P')$.
Then $P'$ is shorter than $P_1$ and thus $P_1$ is not a shortest path in $G^A_z$.

Assume there exists a path $P_2$ in $G^A_z$ from $r_z$ to $x$ that starts with the edge $(r_z, v')$
such that $v' = v_{\rho(x) + i\cdot  2\lceil \sqrt{n} \rceil}$ for some integer $i>0$.
Then $\omega(P_2) = \omega(r_z, v') +
d_{G'}(v', x) = \omega(r_z, v_{\rho(x)}) + i \cdot  \lceil \sqrt{n} \rceil
+ d_{G'}(v', x) > \omega(r_z, v_{\rho(x)}) +  \lceil \sqrt{n} \rceil = \omega(P')$.
Then $P'$ is shorter than $P_2$ and thus $P_2$ is not a shortest path in $G^A_z$.

It follows that every shortest path from $r_z$ to $x$ in $G^A_z$ must start with the edge $(r_z, v_{\rho(x)})$.
In particular, the shortest paths tree $T_z$ contains a shortest path in $G'$ from $v_{\rho(x)}$ to $x$
of length $\lceil \sqrt{n} \rceil$.
\end{proof}

Finally, let ${\cal D}_{n} := \{P'(v) \ | \ v \in V\}$ be the set of paths computed by the above algorithm.
Then ${\cal D}_{n}$ is a set of $O(n)$ paths, each path contains exactly
$\lceil \sqrt{n} \rceil$ edges, and it follows from Lemma \ref{lem:alg-alternative} and
Definition \ref{def:p-sqrtn} that a subset of the paths ${\cal D}_{n}$
satisfies the conditions of Definition \ref{def:p-sqrtn}.
Hence, it is sufficient to use the greedy algorithm GreedyPivotsSelection to hit the set of paths ${\cal D}_{n}$.

\section{Deterministic $f$-Sensitivity Distance Oracles} \label{sec:dso1}
As explained in the introduction, an
$f$-Sensitivity Distance Oracle gets as an input a graph $G$ and a
parameter $f$, preprocesses it into a data-structure, such that given
a query $(s,t,F)$ with $s,t \in V, F \subseteq E \cup V, |F| \le f$
one may efficiently compute the distance $d_{G\setminus F}(s,t)$.

In this section we derandomize the result of Weimann and Yuster
\cite{WY13} for real edge weights. They presented an $f$-sensitivity
distance oracle whose preprocessing time is
$\Otilde(mn^{1+\epsilon})$ (which is larger than the time it takes
to compute APSP which is $O(mn)$ by a factor of $n^\epsilon$) and
whose query time is subquadratic $\Otilde(n^{2-2\epsilon/f})$. More
precisely, we prove the following theorem which obtains
deterministically the same preprocessing and query time bounds as
the randomized $f$-sensitivity distance oracle in \cite{WY13} for real edge weights.

\begin{theorem}
Let $G$ be a weighted directed graph, and let $f \ge 1$ be a
parameter. One can deterministically construct an $f$-sensitivity
distance oracle in $\Otilde(mn^{1+\epsilon})$ time, such that given
a query $(s,t,F)$ with $F \subset V \cup E$ and $|F| \le f$ the deterministic
query algorithm for computing $d_{G\setminus F}(s,t)$ takes subquadratic $\Otilde(n^{2-2\epsilon/f})$ time.
\end{theorem}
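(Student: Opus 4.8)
The plan is to follow the three-step derandomization framework of Section~\ref{sec:framework}, exploiting the fact that the only place where randomness enters the $f$-DSO of Weimann and Yuster is the choice of a pivot set $R$ via Lemma~\ref{lem:sampling-roditty}. First I would dispose of the simplifying assumptions: by the argument of Section~\ref{sec:assumption-unique} we may assume all edge weights are non-negative and shortest paths are unique, at no asymptotic cost. Under these assumptions, recall the structure of the Weimann--Yuster oracle: it fixes $L = n^{\epsilon/f}$, uses a pivot set $R \subseteq V$ that is a hitting set for the family ${\cal D}_f$ of all length-$L$ intervals of all replacement paths $P_G(u,v,F)$ with $|F| \le f$, and then builds, deterministically, a collection of bounded-length shortest-path tables and pivot-to-pivot distance data structures whose construction is deterministic and runs in $\Otilde(mn^{1+\epsilon})$ time provided $|R| = \Otilde(n^{1-\epsilon/f})$; the query routine is likewise deterministic, runs in $\Otilde(n^{2-2\epsilon/f})$ time, and its correctness relies only on the hitting property of $R$. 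Hence it suffices to produce, deterministically and within the $\Otilde(mn^{1+\epsilon})$ preprocessing budget, a set $R$ of $\Otilde(n^{1-\epsilon/f})$ vertices hitting every path of ${\cal D}_f$, and then to invoke the (unchanged) Weimann--Yuster preprocessing and query algorithms on top of it.

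For Step~1, shrinking the number of critical paths, I would invoke Lemma~\ref{lemma:small-num-of-intervals}: via the fault-tolerant trees $\{FT^{L,f}(s,t)\}_{s,t \in V}$ of Section~\ref{sec:ft-trees}, the family ${\cal D}_f$ is contained in the $O(n^{2+\epsilon})$ distinct shortest paths stored at the nodes of these trees, each on exactly $L = n^{\epsilon/f}$ edges after trimming. For Step~2, computing them, the baseline is Lemma~\ref{lemma:ft-tree-eps-plus}, which constructs all these trees by the dynamic program of Equations~\ref{eq:dynamic-programming}--\ref{eq:parent-pointer} in $\Otilde(mn^{1+\epsilon+\epsilon/f} + n^{2+\epsilon+2\epsilon/f})$ time; this is a factor $n^{\epsilon/f}$ (respectively $n^{2\epsilon/f}$) over budget. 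I would therefore replace it by the refined construction of Section~\ref{sec:positive}: run the edge-by-edge dynamic program only for the top $f-1$ levels of each tree, of which there are only $O(n^{2+\epsilon-\epsilon/f})$ nodes in total so that the extra factor is absorbed, and compute the bottom layer of each tree with a single trimmed Dijkstra per configuration instead of $L$ rounds of edge relaxations, amortizing away the remaining $L$ and out-degree factors. This yields all of ${\cal D}_f$ deterministically in $\Otilde(mn^{1+\epsilon})$ time.

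For Step~3, the hitting set: feeding the $q = O(n^{2+\epsilon})$ intervals of ${\cal D}_f$, each of length $L$, into GreedyPivotsSelection gives, by Lemma~\ref{lemma:greedy-correctness}, a set $R$ with $|R| = \Otilde(n/L) = \Otilde(n^{1-\epsilon/f})$ hitting all of them, in $\Otilde(qL) = \Otilde(n^{2+\epsilon+\epsilon/f})$ time, once again a stray $n^{\epsilon/f}$ factor; I would remove it with the overlap-aware variant of the greedy from Section~\ref{sec:improved-greedy}, using that the intervals of a single path pairwise share all but $O(1)$ of their vertices, so that the per-vertex counters and the priority queue can be maintained in $\Otilde(n^{2+\epsilon})$ total time while still producing $|R| = \Otilde(n^{1-\epsilon/f})$. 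Plugging this $R$ into the Weimann--Yuster preprocessing then gives a deterministic $f$-DSO with $\Otilde(mn^{1+\epsilon})$ preprocessing and $\Otilde(n^{2-2\epsilon/f})$ query time; the restriction $f = O(\log n/\log\log n)$ is exactly what makes $L^f = n^{\epsilon}$ and $f^2 \log L$ polylogarithmic, so all hidden factors remain polylog. I expect the main obstacle to be Step~2, together with the analogous tightening of Step~3: the correctness of the framework is immediate from the cited lemmas, but shaving the $n^{\epsilon/f}$ and $n^{2\epsilon/f}$ slack from the fault-tolerant tree construction and from the greedy selection, without sacrificing determinism, is where the real work lies, and it is precisely the content of Sections~\ref{sec:positive} and~\ref{sec:improved-greedy}.
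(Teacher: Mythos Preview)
There is a genuine gap in your plan: the Weimann--Yuster oracle is randomized in \emph{two} places, not one. Besides sampling the pivot set $B$, their preprocessing (see the proof of Lemma~\ref{lem:randomized-f-dso}) randomly generates subgraphs $G_1,\ldots,G_r$ by deleting each edge with probability $n^{-\epsilon/f}$ and runs APSP on each; at query time the weight $\omega_H(u,v)$ of an edge in the dense graph is obtained by scanning the $G_i$ that exclude $F$. This subgraph sampling is \emph{not} addressed by producing a deterministic hitting set, so your sentence ``invoke the (unchanged) Weimann--Yuster preprocessing and query algorithms on top of it'' does not yield a deterministic algorithm. Indeed, Section~\ref{sec:open-questions} of the paper lists the derandomization of exactly this subgraph-sampling step as an open problem.

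What the paper actually does is to \emph{replace} the random subgraphs altogether: the fault-tolerant trees $FT^{L,f}(u,v)$ serve a dual role. They are not merely a device for enumerating the critical paths ${\cal D}_f$ (your Step~1/Step~2); they are kept as the data structure and, at query time, the edge weight $\omega_H(u,v)=d^L_G(u,v,F)$ is obtained by querying $FT^{L,f}(u,v)$ via Lemma~\ref{lemma:query} in $O(f^2\log L)$ time. This is the substantive new query mechanism (Section~\ref{sec:tree-to-dso}), and it is what lets the preprocessing avoid the random graphs entirely. Your outline of Steps~2 and~3 is roughly right in spirit, though your description of the improved greedy in Section~\ref{sec:improved-greedy} is off: the paper does not exploit overlap of consecutive intervals, but instead hits the internal-node paths ${\cal P}_{<f}$ directly (there are only $O(n^{2+\epsilon-\epsilon/f})$ of them) and, for the leaf-node paths, extracts from the auxiliary Dijkstra trees $T_{F,t}$ a smaller surrogate family ${\cal P}_f$ of size $\Otilde(n^{2+\epsilon-\epsilon/f})$ whose hitting set also hits $P_{\text{remaining}}$.
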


The basic idea in derandomizing the real weighted $f$-sensitivity
distance oracles of Weimann and Yuster \cite{WY13} is to use a
variant of the fault tolerant trees $FT^{L,f}(s,t)$ described in
Appendix A in \cite{ChCoFiKa17} to find short replacement paths, and
then use the greedy algorithm from Section \ref{sec:framework} and
Lemma \ref{lemma:greedy} for derandomizing the random selection of
the pivots, and finally continue with the algorithm of \cite{WY13}
for stitching short segments to obtain the long replacement paths.
This overview is made clear in the description below.

According to Section \ref{sec:assumption-unique}, we will assume WLOG the following holds.
\begin{itemize}
\item  {\bf Unique shortest paths assumption: } we assume that all the shortest paths are unique.
\item {\bf Non-negative weights assumption: } we assume that edge weights are non-negative, so that we can run Dijkstra.
\end{itemize}

%
%
%



{\bf Outline.} Let $s,t \in V$ be vertices and let $f, L\ge 1$ be
integer parameters. In Section \ref{sec:ft-trees} we described the
trees $FT^{L,f}(s,t)$ which are a variant of the trees that appear
in Appendix A of \cite{ChCoFiKa17}. In Section
\ref{sec:dynamic-programming} we described how to construct the
trees $FT^{L,f}(s,t)$ in $\Otilde(mn^{1+\epsilon+\epsilon/f} +
n^{2+\epsilon+2\epsilon/f})$ time.

In Section \ref{sec:ft-trees-appendix} we prove Lemma
\ref{lemma:query}, that the algorithm described in Section
\ref{sec:ft-trees} computes the distance $d^L_G(s,t,F)$ in $O(f^2
\log L)$ time.
In Section \ref{sec:tree-to-dso} we describe how to
use the trees $FT^{L,f}(s,t)$ in order to construct an
$f$-sensitivity distance oracle. In Section \ref{sec:positive} we
reduce the construction time of the trees $FT^{L,f}(s,t)$ to
$\Otilde(mn^{1+\epsilon})$. In Section \ref{sec:improved-greedy} we
reduce the runtime of the GreedyPivotsSelection algorithm from
$\Otilde(n^{2+\epsilon+\epsilon/f})$ to $\Otilde(n^{2+\epsilon})$.
In Section \ref{sec:assumption-unique} we justify our assumptions of
non-negative edge weights and unique shortest paths.

\subsection{Proof of Lemma \ref{lemma:query}}\label{sec:ft-trees-appendix}

\begin{proof}[Proof of Lemma \ref{lemma:query}]
We first prove correctness, that the query procedure outputs $d^L_G(s,t,F)$.
The query procedure outputs a distance $d^L_G(s,t, \{a_1, \ldots, a_i\})$ such that $\{a_1, a_2, \ldots, a_i \} \subseteq F$ and
$P^L_G(s,t, \{a_1, \ldots, a_i\}) \cap F = \emptyset$ (note that this includes the case that $d^L_G(s,t, \{a_1, \ldots, a_i\}) = \infty$ when there is no path from $s$ to $t$ in $G \setminus \{a_1, a_2, \ldots, a_i \}$ that contains at most $L$ edges).
The distance $d^L_G(s,t, \{a_1, \ldots, a_i\})$ is the minimum length of an $s$-to-$t$ path that contains at most $L$ edges in the graph $G \setminus \{a_1, a_2, \ldots, a_i \}$.
On the one hand, since $\{a_1, a_2, \ldots, a_i \} \subseteq F$ and distances may only increase as we delete more and more vertices and edges, we obtain that $d^L_G(s,t, \{a_1, \ldots, a_i\}) \le d^L_G(s,t, F)$.
On the other hand, since $P^L_G(s,t, \{a_1, \ldots, a_i\}) \cap F = \emptyset$ then $P^L_G(s,t, \{a_1, \ldots, a_i\})$ is a path in the graph $G \setminus F$ that contains at most $L$ edges, and hence its length $d^L_G(s,t, \{a_1, \ldots, a_i\})$ is at least the length of the shortest $s$-to-$t$ path in $G \setminus F$ that contains at most $L$ edges which is $d^L_G(s,t, F)$, so we get $d^L_G(s,t, \{a_1, \ldots, a_i\}) \ge d^L_G(s,t,F)$.
It follows that the output of the query is $d^L_G(s,t, \{a_1, \ldots, a_i\}) = d^L_G(s,t,F)$.

We analyze the runtime of the query. The runtime of the query is $O(f^2\log L)$ as we
advance along a root to leaf path in $FT^{L,f}(s,t)$ (whose length is at
most $f$) and in each node $FT^{L,f}(s,t, a_1, \ldots, a_i)$ of the tree
$FT^{L,f}(s,t)$ we make $O(f)$ queries $a_{i+1} \in F$ to $BST^L(s,t, a_1, \ldots, a_i)$ which take $O(\log L)$ as we search in a binary
search tree with $L$ elements. So query time is the multiplication of the following terms:
\begin{itemize}
\item $f$  --- length of root-to-leaf path in $FT^{L,f}(s,t)$
\item $f$  --- number of elements $a_{i+1} \in F$ to check in the node $FT^{L,f}(s,t, a_1, \ldots, a_i)$ whether or not $a_{i+1} \in P^L_G(s,t, \{a_1, \ldots, a_i\})$
\item $O(\log L)$ --- time to check for a single element $a_{i+1} \in F$ whether or not $a_{i+1} \in P^L_G(s,t, \{a_1, \ldots, a_i\})$ by searching $a_{i+1}$ in $BST^L(s,t, a_1, \ldots, a_i)$ which is a binary search tree containing $L$ elements.
\end{itemize}
\end{proof}

\subsection{Deterministic $f$-Sensitivity Distance Oracles with $\Otilde(mn^{1+\epsilon+\epsilon/f} + n^{2 + \epsilon + 2\epsilon/f})$ Preprocessing Time}
\label{sec:tree-to-dso} In this section we describe how to plug-in
the trees $FT^{L,f}(s,t)$ from Section \ref{sec:ft-trees} in the
$f$-sensitivity distance oracles of Weimann and Yuster \cite{WY13}.

Let us first recall how the $f$-sensitivity distance oracle of Weimann and Yuster \cite{WY13} works. The following Lemma is proven in \cite{WY13}.

\begin{lemma} [Theorem 1.1 in \cite{WY13}] \label{lem:randomized-f-dso}
Given a directed graph $G$ with real
positive edge weights, an integer parameter $1 \le f \le \epsilon \log n/ \log \log n$ and a real
parameter $0 < \epsilon < 1$, there exists a randomized
$f$-sensitivity distance oracle whose construction is randomized and
takes $\Otilde(mn^{1+\epsilon})$ time. Given a query $(s,t,F)$ where
$F \subset V \cup E$ with $|F| \le f$, the data-structure answers
the query by computing w.h.p. $d_G(s,t,F)$ in $\Otilde(n^{2-2\epsilon/f})$ time.
\end{lemma}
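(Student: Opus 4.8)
The plan is to reconstruct the Weimann--Yuster oracle as the combination of a random hitting set $R$ of ``pivots'' with the bounded-hop fault-tolerant trees already built in this section; correctness will come from the fact that a long replacement path, chopped at its pivots, decomposes into short segments, each of which a single $FT$-tree query resolves. Fix $L:=\lceil n^{\epsilon/f}\rceil$. For \textbf{preprocessing} I would (i) sample $R\subseteq V$, keeping each vertex independently with probability $\Theta\!\big((\log n)/L\big)$, and (ii) build, for every ordered pair $s,t\in V$, the tree $FT^{L,f}(s,t)$ of Section~\ref{sec:ft-trees}. For (i): by Lemma~\ref{lem:sampling-roditty} together with a union bound over the at most $O(n^{2f+3})$ intervals of all replacement paths $P_G(u,v,F)$ with $|F|\le f$, a large enough sampling constant makes the following two properties hold with high probability: $|R|=\Otilde(n/L)=\Otilde(n^{1-\epsilon/f})$, and every window of $L$ consecutive vertices on every such path contains a vertex of $R$. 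For (ii): each $FT^{L,f}(s,t)$ has $L^f=n^{\epsilon}$ nodes, and by the dynamic-programming construction of Section~\ref{sec:dynamic-programming} together with its refinement in Section~\ref{sec:positive} (this is essentially \cite{WY13}'s preprocessing), all $O(n^2)$ trees are produced in $\Otilde(mn^{1+\epsilon})$ time in total.

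For a \textbf{query} $(s,t,F)$ I would set $U:=R\cup\{s,t\}$ and form the weighted digraph $H$ on $U$ that, for every ordered pair $(x,y)$ of vertices of $U$, contains an arc $x\to y$ whose weight $d^L_G(x,y,F)$ is read off by one $O(f^2\log L)=\Otilde(1)$-time query to $FT^{L,f}(x,y)$ (Lemma~\ref{lemma:query}); the value returned is $d_H(s,t)$, computed by a single Dijkstra run on $H$. Since $|U|=\Otilde(n^{1-\epsilon/f})$, the digraph $H$ has $\Otilde(n^{2-2\epsilon/f})$ arcs, so assembling $H$ and running Dijkstra each take $\Otilde(n^{2-2\epsilon/f})$ time, which matches the claimed query time.

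For \textbf{correctness}, $d_H(s,t)\ge d_G(s,t,F)$ is immediate: every arc of $H$ is witnessed by an honest $(\le L)$-edge path in $G\setminus F$, so any $s$-to-$t$ path in $H$ concatenates into an $s$-to-$t$ walk in $G\setminus F$ of the same total weight. For the reverse inequality, let $\pi:=P_G(s,t,F)$. If $\pi$ uses at most $L$ edges, then the arc $s\to t$ of $H$ already carries weight $d^L_G(s,t,F)=\omega(\pi)$. Otherwise, the hitting-set guarantee lets me cut $\pi$ at its successive pivots $r_1,\dots,r_k\in R$ into consecutive segments $s\to r_1\to\cdots\to r_k\to t$, each on at most $L$ edges; each segment, being a subpath of the shortest path $\pi$ in $G\setminus F$, is itself a shortest $x$-to-$y$ path in $G\setminus F$ on at most $L$ edges, hence has length exactly $d^L_G(x,y,F)$. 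Summing over the segments yields an $s$-to-$t$ path in $H$ of weight $\omega(\pi)$, so $d_H(s,t)\le d_G(s,t,F)$, and the two bounds give $d_H(s,t)=d_G(s,t,F)$.

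The \textbf{main obstacle}, and the only point where randomness is used, is securing the hitting-set property for all $O(n^{2f+3})$ intervals at once; this crude interval count is precisely what the union bound consumes, and it is exactly this count that Lemma~\ref{lemma:small-num-of-intervals} later has to cut down to $O(n^{2+\epsilon})$ so that a deterministic greedy choice of $R$ becomes affordable. The remaining quantitatively delicate point is the per-pair node bound $|FT^{L,f}(s,t)|=L^f=n^{\epsilon}$, which simultaneously keeps the preprocessing at $\Otilde(mn^{1+\epsilon})$ and the arc set of $H$ at $\Otilde(n^{2-2\epsilon/f})$; everything else is bookkeeping, using $f=O(\log n/\log\log n)$ to keep all hidden polylogarithmic factors (notably $f^2\log L$) genuinely $\Otilde(1)$.
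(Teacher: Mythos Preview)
Your argument is technically sound as a proof of the existence claim, but it is \emph{not} a reconstruction of the Weimann--Yuster oracle, and the parenthetical ``this is essentially \cite{WY13}'s preprocessing'' is factually wrong. The paper's proof simply summarizes what \cite{WY13} actually does: sample $r=\Otilde(n^{\epsilon})$ random subgraphs $G_1,\dots,G_r$ by deleting each edge independently with probability $n^{-\epsilon/f}$, run APSP on every $G_i$, sample a random pivot set $B$, and at query time assemble the dense graph $H$ on $B\cup\{s,t\}$ by reading off $u$-to-$v$ distances in those $G_i$ that happen to exclude all of $F$. No fault-tolerant trees appear anywhere in \cite{WY13}. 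The trees $FT^{L,f}$ and the $\Otilde(mn^{1+\epsilon})$ construction of Sections~\ref{sec:dynamic-programming} and~\ref{sec:positive} are this paper's new machinery, introduced precisely to \emph{replace} the random-subgraph step. What you have written is therefore not \cite{WY13}'s algorithm but the present paper's deterministic oracle (Lemma~\ref{lem:f-dso-mn2} refined by Section~\ref{sec:positive}) with the greedy pivot selection swapped back out for a random one. That of course yields a correct randomized oracle with the stated bounds, but it makes Lemma~\ref{lem:randomized-f-dso} logically downstream of the very derandomization it is meant to motivate, and it forward-references Section~\ref{sec:positive}. The two routes buy different things: \cite{WY13}'s random subgraphs give a self-contained randomized algorithm that needs no $FT$-tree machinery at all; your route makes transparent that, once the $FT$-trees exist, the only remaining randomness is the choice of $R$ --- which is exactly the observation the paper exploits, via Lemma~\ref{lemma:small-num-of-intervals}, to make the greedy selection of $R$ affordable.
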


\begin{proof}
Use $\alpha = 1-\epsilon$ in the construction of Weimann and Yuster \cite{WY13}.
In order for this result to be self-contained, we briefly explain the preprocessing and query procedures of Weimann and Yuster \cite{WY13}.

{\bf Preprocessing: }
\begin{enumerate}
\item Randomly generate graphs $G_1, \ldots, G_r$ (with $r = \Otilde(n^\epsilon)$) where every graph is independently obtained by removing each edge with probability $1/n^{\epsilon/f}$.
Compute APSP on each of the graphs $\{G_1, \ldots, G_r\}$.
It is proven in \cite{WY13} that with high probability for every set $F \subseteq E \cup V$ with $|F| \le f$ and for every $s-t$ shortest path $P_G(s,t,F)$ that contains at most $n^{\epsilon/f}$ edges in the graph $G\setminus F$, there exists at least one graph $G_i \in \{G_1, \ldots, G_r\}$ that excludes $F$ and contains $P_G(s,t,F)$.

\item Sample a random set $B$ of pivots, where every vertex is taken with probability $\frac{6f \ln n}{n^{\epsilon/f}}$.

\end{enumerate}

{\bf Query: }
Given a query $(s,t,F)$ build the dense graph $H$ (denoted by $G^S$ in \cite{WY13}) whose vertices are $B \cup \{s, t\}$ as follows.
First, find all the graphs that exclude $F$, $\mathcal{G}_F = \{ G_i \ | \ 1 \le i \le r, F \cap G_i = \emptyset \}$.
Then, for every $u,v \in B \cup \{s,t\}$ add the edge $(u,v)$ to $H$ and set is weight to be the minimum length of the shortest path from $u$ to $v$ in all the graphs $\mathcal{G}_F$. If there is no path from $u$ to $v$ in any of the graphs $\mathcal{G}_F$ then set $\omega_H(u,v) = \infty$.
Finally, run Dijkstra from $s$ in the graph $H$ and output $d_G(s,t,F) = d_{H}(s,t)$.
\end{proof}


We derandomize Lemma \ref{lem:randomized-f-dso} as follows.

\begin{lemma} \label{lem:f-dso-mn2}
Given a directed graph $G$ with real positive edge weights, an
integer parameter $1 \le f \le \epsilon \log n/ \log \log n$ and $0 < \epsilon < 1$, there exists a deterministic
$f$-sensitivity distance oracle whose construction is deterministic
and takes $\Otilde(mn^{1+\epsilon+\epsilon/f} + n^{2+\epsilon+2\epsilon/f})$
time. Given a query $(s,t,F)$ where $F \subset V \cup E$ with $|F|
\le f$, the data-structure answers deterministically the query by
computing $d_G(s,t,F)$ in $\Otilde(n^{2-2\epsilon/f})$ time.
\end{lemma}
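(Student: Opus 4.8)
The plan is to take the Weimann--Yuster oracle of Lemma \ref{lem:randomized-f-dso} and replace its two sources of randomness --- the $r=\Otilde(n^{\epsilon})$ randomly sparsified graphs $G_1,\dots,G_r$ used to recover short replacement paths, and the randomly sampled pivot set $B$ --- by deterministic objects built from the fault-tolerant trees of Section \ref{sec:ft-trees}. Fix $L=n^{\epsilon/f}$. First I would build, via Lemma \ref{lemma:ft-tree-eps-plus}, the trees $FT^{L,f}(u,v)$ for all pairs $u,v\in V$ in $\Otilde(mn^{1+\epsilon+\epsilon/f}+n^{2+\epsilon+2\epsilon/f})$ time; by Lemma \ref{lemma:query} a query $(u,v,F)$ to $FT^{L,f}(u,v)$ returns $d^L_G(u,v,F)$ in $O(f^2\log L)=\Otilde(1)$ time (using $f=O(\log n/\log\log n)$ and $L\le n$). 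Next I would extract the set ${\cal D}_f$ of length-$L$ intervals occurring as subpaths of the paths stored in the nodes of these trees; by Lemma \ref{lemma:small-num-of-intervals}, $|{\cal D}_f|=O(n^{2+\epsilon})$ and each such path has exactly $L$ edges. Finally I would run GreedyPivotsSelection$({\cal D}_f)$ (Lemma \ref{lemma:greedy}) to obtain, in $\Otilde(|{\cal D}_f|\cdot L)=\Otilde(n^{2+\epsilon+\epsilon/f})$ time, a deterministic pivot set $R\subseteq V$ with $|R|=\Otilde(n/L)=\Otilde(n^{1-\epsilon/f})$ that meets every interval of ${\cal D}_f$. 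The preprocessing time is dominated by $\Otilde(mn^{1+\epsilon+\epsilon/f}+n^{2+\epsilon+2\epsilon/f})$.

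For the query on $(s,t,F)$ I would mimic Weimann--Yuster's stitching step: build the dense graph $H$ on vertex set $R\cup\{s,t\}$ where, for every ordered pair $u,v$, the weight $\omega_H(u,v)$ is set to $d^L_G(u,v,F)$, obtained by a single query to $FT^{L,f}(u,v)$; then run Dijkstra from $s$ in $H$ and output $d_H(s,t)$. Building $H$ costs $O(|R|^2)\cdot\Otilde(1)$ and Dijkstra on $H$ costs $\Otilde(|R|^2)=\Otilde(n^{2-2\epsilon/f})$, so the query time matches the randomized bound.

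The correctness claim is $d_H(s,t)=d_G(s,t,F)$ (with the usual convention when this is $\infty$). The direction $d_H(s,t)\ge d_G(s,t,F)$ is immediate: $\omega_H(u,v)=d^L_G(u,v,F)\ge d_G(u,v,F)$ and $d_G(\cdot,\cdot,F)$ obeys the triangle inequality in $G\setminus F$, so every $s$-$t$ path in $H$ has weight at least $d_G(s,t,F)$. For the converse, let $P=P_G(s,t,F)$ be the (unique) replacement path. If $|P|\le L$ then $\omega_H(s,t)=d^L_G(s,t,F)=d_G(s,t,F)$ and we are done; otherwise every window of $L$ consecutive vertices of $P$ is, viewed on its own, a shortest path in $G\setminus F$ between its endpoints, hence equals the path $FT^{L,f}$ stores for those endpoints under the appropriate subset of $F$, and therefore lies in ${\cal D}_f$ --- this is exactly the inclusion ${\cal D}_f\subseteq{\cal D}$ established in the proof of Lemma \ref{lemma:small-num-of-intervals}, using the unique-shortest-paths assumption of Section \ref{sec:assumption-unique}. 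Since $R$ hits every member of ${\cal D}_f$, each such window contains a pivot, so if $s=p_0,p_1,\dots,p_\ell,p_{\ell+1}=t$ are $s$, the pivots of $R$ met along $P$ in order, and $t$, then consecutive $p_i,p_{i+1}$ are at most $L$ edges apart along $P$. Thus $P[p_i..p_{i+1}]$ is a shortest $p_i$-to-$p_{i+1}$ path in $G\setminus F$ on at most $L$ edges, so $\omega_H(p_i,p_{i+1})=d^L_G(p_i,p_{i+1},F)=\omega(P[p_i..p_{i+1}])$; summing over $i$ exhibits an $s$-$t$ path in $H$ of weight $\omega(P)=d_G(s,t,F)$, giving $d_H(s,t)\le d_G(s,t,F)$.

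The main obstacle is entirely contained in Lemma \ref{lemma:small-num-of-intervals}: being allowed to run the greedy hitting-set algorithm on only $O(n^{2+\epsilon})$ intervals rather than the naive $O(n^{2f+3})$, which requires that every length-$L$ window of a genuine replacement path $P_G(s,t,F)$ literally appears as a stored path inside some $FT^{L,f}$ tree. The subtle point is that the window, regarded as a standalone path between its two endpoints $u,v$, must coincide with $P^L_G(u,v,\{a_1,\dots,a_i\})$ for the sequence $\{a_1,\dots,a_i\}\subseteq F$ that the query procedure of $FT^{L,f}(u,v)$ descends to --- which is precisely what the argument behind Lemma \ref{lemma:small-num-of-intervals} provides once shortest paths are assumed unique. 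Everything after that --- the running-time accounting and the stitching correctness above --- is routine and parallels \cite{WY13}.
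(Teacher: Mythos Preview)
Your approach mirrors the paper's: build all the trees $FT^{L,f}(u,v)$, greedily hit a family of stored paths to obtain a small pivot set, and stitch short $d^L_G$-distances over the dense graph $H$ on pivots plus $\{s,t\}$. The preprocessing and query-time accounting, and the two inequalities for $d_H(s,t)=d_G(s,t,F)$, follow the paper's proof.

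There is, however, a genuine (though easily repaired) off-by-one gap in your correctness argument. You take the hitting target to be ${\cal D}_f$, the paths with exactly $L$ edges stored in the trees, and then assert that ``consecutive $p_i,p_{i+1}$ are at most $L$ edges apart along $P$''. This is false: hitting every length-$L$ window of $P=P_G(s,t,F)$ only forces consecutive pivots to be at most $L+1$ edges apart (place pivots at positions $0$ and $L+1$; every window of $L+1$ consecutive vertices still touches one of them). When the gap is exactly $L+1$, the unique shortest $p_i$-to-$p_{i+1}$ path in $G\setminus F$ has $L+1$ edges, so $d^L_G(p_i,p_{i+1},F)>d_G(p_i,p_{i+1},F)$ under the unique-shortest-paths assumption, and your upper bound $d_H(s,t)\le d_G(s,t,F)$ fails. (There is also a mismatch between your ``windows of $L$ consecutive vertices'', which have $L-1$ edges, and your ${\cal D}_f$, whose members you say have exactly $L$ edges; neither reading makes the argument go through.)

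The paper fixes this by taking as the hitting target the set $\mathcal{P}^{L,f}$ of \emph{all} stored paths with at least $n^{\epsilon/f}/2$ edges. Then every subpath of $P$ with exactly $L/2$ edges is, by the same uniqueness argument you cite, itself a stored path with $\ge L/2$ edges and hence is hit; this forces the gap between consecutive pivots to be at most $L/2+1\le L$, so $d^L_G(p_i,p_{i+1},F)=d_G(p_i,p_{i+1},F)$ as needed. With this halved threshold your argument (and the running-time bound, since $|\mathcal{P}^{L,f}|=O(n^{2+\epsilon})$ and each path has $\Theta(L)$ edges) goes through exactly as in the paper.
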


To prove Lemma \ref{lem:f-dso-mn2} we describe how to use the fault-tolerant trees $\{ FT^{L,f}(s,t) \}_{s,t \in V}$ to construct the deterministic $f$-sensitivity distance oracle.

{\bf Preprocessing: }
\begin{enumerate}
\item {\bf Compute the trees $FT^{L,f}(u,v)$ }. Deterministically construct the fault-tolerant trees $FT^{L,f}(u,v)$ for every $u,v \in V$ as in Lemma \ref{lemma:ft-tree-eps-plus}.

\item {\bf Compute the set of vertices $B \subseteq V$ }.
Let $\mathcal{P}^{L,f}$ be the set of all paths in all the nodes of all the trees $\{FT^{L,f}(u,v)\}_{u,v \in V}$ that contain at least $n^{\epsilon/f}/2$
edges.
Use the greedy algorithm as in Lemma
\ref{lemma:greedy}, to find deterministically a set of pivots $B$,
such that for every $P \in \mathcal{P}^{L,f}$ it holds that $B \cap V(P)
\ne \emptyset$.
\end{enumerate}

 {\bf Query: } Given a query $(s,t,F)$ build the complete graph
$H$ (also referred to as the dense graph) whose vertices are $B \cup \{s, t\}$ as follows. For every $u,v
\in B \cup \{s,t\}$  query the tree $FT^{L,f}(u,v)$ with $(u,v,F)$
according to the query procedure described in Section
\ref{sec:ft-trees} and set the weight of the edge $(u,v)$ in $H$ to the distance computed $d^{L}_G(u,v,F)$ ({\sl i.e.}, set $\omega_H(u,v) = d^L_G(u,v,F)$).
Finally, run Dijkstra from $s$ in the graph $H$ and output $d_{H}(s,t)$ as an estimate of the distance $d_G(s,t,F)$.

\begin{proof}[Proof of Lemma \ref{lem:f-dso-mn2}]
We prove the correctness of the DSO and then analyse its preprocessing and query time.

{\bf Proof of correctness}.
We prove that $d_G(s,t,F) = d_H(s,t)$.
Since $d_H(s,t)$ is the length of some path from $s$ to $t$ in $G \setminus F$ then $d_H(s,t) \ge d_G(s,t,F)$.
Next we prove that $d_G(s,t,F) \ge d_H(s,t)$.

Let $P_G(s,t,F)$ be the shortest path from $s$ to $t$ in $G\setminus F$.
We prove that the set of vertices $B$ hits every subpath of $P_G(s,t,F)$ that contains exactly $n^{\epsilon/f}/2$ edges.
To see that, let $u,v$ be two vertices such that $u$ appears before $v$ along $P_G(s,t,F)$ and $P_G(s,t,F)[u..v]$ contains $n^{\epsilon/f}/2$ edges. It follows that the shortest path from $u$ to $v$ in $G \setminus F$ contains less than $L = n^{\epsilon/f}$ edges.
According to the assumption of unique shortest paths (as in Section \ref{sec:unique}) it follows that querying the tree $FT^{L,f}(u,v)$ with $(u,v,F)$ finds a path $P^L_G(u,v,F)$, and since the shortest path from $u$ to $v$ in $G \setminus F$ contains less than $L = n^{\epsilon/f}$ edges then $P^L_G(u,v,F) = P_G(u,v,F) = P_G(s,t,F)[u..v]$ is the shortest path from $u$ to $v$ in $G \setminus F$, and it contains exactly $n^{\epsilon/f}/2$ edges.
Therefore, $P^L_G(u,v,F)$ is a path in a node of the tree $FT^{L,f}(u,v)$ and thus $P^L_G(u,v,F) \in \mathcal{P}^{L,f}$. Hence, when computing the hitting set $B$ that hits all the paths of $\mathcal{P}^{L,f}$ the algorithm obtains a set $B$, and in particular it holds that $B \cap P^L_G(u,v,F) \ne \emptyset$ and thus
$B$ hits the path $P_G(s,t,F)[u..v] = P_G(u,v,F)$.
This prove that $B$ hits every subpath of $P_G(s,t,F)$ that contains exactly $n^{\epsilon/f}/2$ edges.

Let $s = v_1, v_2, \ldots, v_k = t$ be all the vertices of $H = B \cup \{s, t \}$ that appear along $P_G(s,t,F)$ sorted according to the order of their appearance along the path $P_G(s,t,F)$.
As $B$ hits every subpath of $P_G(s,t,F)$ that contains exactly $n^{\epsilon/f}/2$ edges, it follows that for every $0 \le i < k$ the subpath of $P_G(s,t,F)$ from $v_i$ to $v_{i+1}$ contains at most $n^{\epsilon/f}$ edges.
Therefore, the shortest path from $v_i$ to $v_{i+1}$ in $G \setminus F$ contains at most $L = n^{\epsilon/f}$ edges and hence $d^L_G(v_i, v_{i+1}, F) = d_G(v_i, v_{i+1}, F)$.
By Lemma \ref{lemma:query} it holds that querying the tree $FT^{L,f}(v_i, v_{i+1})$ according to the query procedure described in Section
\ref{sec:ft-trees} computes the distance $d^L_G(v_i, v_{i+1}, F) = d_G(v_i, v_{i+1}, F)$.
Hence $\omega_H( <v_1, \ldots, v_k> ) = \omega_H(v_1, v_2) + \ldots + \omega_H(v_{k-1}, v_k) = d^L_G(v_1, v_2, F) + \ldots + d^L_G(v_{k-1}, v_k, F) =
d_G(v_1, v_2, F) + \ldots + d_G(v_{k-1}, v_k, F) = d_G(s,t,F)$ where the last equality holds as
$s = v_1, v_2, \ldots, v_k = t$ are the vertices of $H$ that appear along $P_G(s,t,F)$ sorted according to the order of their appearance along the path $P_G(s,t,F)$.

It follows that the path $<v_1, \ldots, v_k>$ is an $s$-to-$t$ path in $H$ whose weight in $H$ is $d_G(s,t,F)$ and hence
the shortest $s$-to-$t$ path in $H$ has length at most $\omega_H(<v_1, \ldots, v_k>) = d_G(s,t,F)$. Therefore $d_H(s,t) \le \omega_H(<v_1, \ldots, v_k>) = d_G(s,t,F)$ and since we already proved that $d_H(s,t) \ge d_G(s,t,F)$ it follows that $d_H(s,t) = d_G(s,t,F)$.

{\bf Analysing the preprocessing time}.
Next we analyse preprocessing time of the DSO.
Constructing the trees $FT^{L,f}(u,v)$ for every $u,v \in V$ as in Lemma \ref{lemma:ft-tree-eps-plus} takes
$\Otilde(mn^{1+\epsilon+\epsilon/f} + n^{2+\epsilon+2\epsilon/f})$ time.
We analyse the time it takes to compute the set of vertices $B \subseteq V$.
Let $\mathcal{P}^{L,f}$ be the set of all paths in all the nodes of all the trees $\{FT^{L,f}(u,v)\}_{u,v \in V}$ that contain at least $n^{\epsilon/f}/2$
edges.
Observe that $\mathcal{P}^{L,f}$ contains at most $O(n^{2+\epsilon})$
paths, as the number of nodes in each of the $n^2$ trees
$FT^{L,f}(u,v)$ is $O(n^\epsilon)$ and every such node contains a
single path of $G$.
Thus, using the greedy algorithm as in Lemma
\ref{lemma:greedy} finds deterministically in $\Otilde(n^{2+\epsilon +
\epsilon/f})$ time a set of pivots $B$,
such that $|B| = \Otilde(n^{1-\epsilon/f})$ and $B$ hits all the paths in $\mathcal{P}^{L,f}$.
Thus, the total preprocessing time of the DSO is $\Otilde(mn^{1+\epsilon+\epsilon/f} + n^{2+\epsilon+2\epsilon/f})$ time.

{\bf Analysing the query time}.
Next we analyse the query time of the DSO.
During query, the algorithm constructs the graph $H$. During the construction of $H$ the algorithm runs
$\Otilde(n^{2-2\epsilon/f})$ queries using the trees
$\{FT^{L,f}(u,v)\}_{u,v \in V}$, each query is answered in  $O(f^2
\log n)=\Otilde(1)$ time since we assumed $f \le \log n / \log \log
n$. Thus, constructing the graph $H$ takes $\Otilde(n^{2-2\epsilon/f})$ time, and also running Dijkstra from $s$ in the graph $H$ takes
$\Otilde(n^{2-2\epsilon/f})$ time.
Therefore, the total query time is $\Otilde(n^{2-2\epsilon/f})$ time.
\end{proof}

\subsection{Deterministic $f$-Sensitivity Distance Oracles with $\Otilde(mn^{1+\epsilon})$ Preprocessing Time}
\label{sec:positive}
In this section we reduce the preprocessing time of constructing the
$f$-sensitivity distance oracle described in Lemma
\ref{lem:f-dso-mn2} from $\Otilde(mn^{1+\epsilon+\epsilon/f} +
n^{2+\epsilon+2\epsilon/f})$ to match the preprocessing time of
\cite{WY13} which is $\Otilde(mn^{1+\epsilon})$, while keeping the
same query time of $\Otilde(n^{2-2\epsilon/f})$.

We improve the preprocessing time in two ways: improving the
construction time of the trees $FT^{L,f}(u,v)$ and reducing the
runtime of the greedy selection algorithm by considering a smaller
set of paths $\mathcal{P}^{L,f}$. In Section \ref{sec:ft-trees-n-plus-eps}
we reduce the time it takes to construct the trees $\{FT^{L,f}(u,v)
\}_{u,v \in V}$ from $\Otilde(mn^{1+\epsilon+\epsilon/f} +
n^{2+\epsilon+2\epsilon/f})$ to $\Otilde(mn^{1+\epsilon})$. In
Section \ref{sec:improved-greedy} we show that the runtime of the
greedy selection algorithm of the pivots $B$ can be reduced from
$\Otilde(n^{2+\epsilon+\epsilon/f})$ to $\Otilde(n^{2+\epsilon})$
which is negligible compared to $\Otilde(mn^{1+\epsilon})$. This
will give us the desired $\Otilde(mn^{1+\epsilon})$ preprocessing
time.

\subsubsection{Building the Trees $FT^{L,f}(u,v)$ in $\Otilde(mn^{1+\epsilon})$ Time} \label{sec:ft-trees-n-plus-eps}

In this section we describe how to construct the trees
$FT^{L,f}(u,v)$ in $\Otilde(mn^{1+\epsilon})$ time. We first define
the node $FT^{L',f}(u,v, a_1, \ldots, a_i)$ of the tree
$FT^{L',f}(u,v)$ as follows.

\begin{definition}
Let $1 \le L' \le L,  0 \le i \le f$ and $u,v \in V$.
Assume $a_1 \in P^{L'}(u,v), a_2 \in P^{L'}(u,v, \{a_1\}),  \ldots, a_i \in P^{L'}(u,v, \{a_1, \ldots, a_{i-1}\})$.
We define the node $FT^{L',f}(u,v, a_1, \ldots, a_i)$ of depth $i$ in the tree $FT^{L',f}(u,v)$ as the node we reach if we query the tree $FT^{L',f}(u,v)$ with $F = \{a_1, \ldots, a_i \}$ according to the query procedure described in Section \ref{sec:ft-trees}. If $i=f$ then $FT^{L',f}(u,v, a_1, \ldots, a_f)$ is a leaf node of $FT^{L',f}(u,v)$ of depth $f$. We slightly abuse notation and use $FT^{L',f}(u,v, a_1, \ldots, a_i)$ to both denote the node $FT^{L',f}(u,v, a_1, \ldots, a_i)$ and the subtree of $FT^{L',f}(u,v)$ rooted in $FT^{L',f}(u,v, a_1, \ldots, a_i)$.

%
\end{definition}

Recall that in Section \ref{sec:ft-trees} we described how to build
the trees $FT^{L,f}(u,v)$ in
$\Otilde(mn^{1+\epsilon+\epsilon/f}+n^{2+\epsilon+2\epsilon/f})$
time. More generally, we built the trees $\{ FT^{L',f}(u,v) \}_{1
\le L' \le L, u,v \in V}$ in $\Otilde(mn L^{f+1} + n^2 L^{f+2})$
time, where the construction time consists of the following two
terms $\Otilde(mn L^{f+1})$ and $\Otilde(n^2 L^{f+2})$. The first
term $\Otilde(mn L^{f+1})$ is the time it takes to solve the dynamic
programming Equation \ref{eq:dynamic-programming} in all the nodes
of all the trees $\{ FT^{L',f}(u,v) \}_{u,v \in V, 1 \le L' \le L}$.
In Section \ref{sec:dijsktra-last-layer}  we reduce the runtime of
this part to $\Otilde(mn L^{f})$ by applying Dijkstra on auxiliary
graphs $H_{F,t}$ we define later rather than computing the
dynamic-programming in the last layer of the trees.

The second term $\Otilde(n^2 L^{f+2})$ is the time it takes to reconstruct all the paths
$P^{L'}(s,t, \{a_1, \ldots, a_i\})$ that are explicitly stored  in all the nodes
of all the trees $\{ FT^{L',f}(u,v) \}_{u,v \in V, 1 \le L' \le L}$.
It takes $\Otilde(n^2 L^{f+2})$ time since the number of nodes in all the trees $\{ FT^{L',f}(u,v) \}_{u,v \in V, 1 \le L' \le L}$
 is $\Otilde(n^2 L^{f+1})$, and it takes $O(L)$ time to reconstruct each path (which contains at most $L$ edges).
In Section \ref{sec:improving-paths-reconstruction} we reduce this term to $\Otilde(n^2 L^f) = \Otilde(n^{2+\epsilon})$ by not reconstructing the paths
in the leaves of the trees of depth $f$.



\subsubsection{The improved algorithm for constructing the trees $FT^{L,f}(u,v)$}\label{sec:improving-paths-reconstruction}
We describe the algorithm with the improved construction time.
First, the algorithm constructs the trees $\{ FT^{L',f}(u,v) \}_{1 \le L' \le L, u,v \in V}$ up to level $f-1$, {\sl i.e.}, constructing the trees $\{ FT^{L',f-1}(u,v) \}_{1 \le L' \le L, u,v \in V}$, without constructing the nodes in level $f$.
Note that using the analysis above this takes $\Otilde(mn L^{(f-1)+1}) = \Otilde(mn L^f) = \Otilde(mn^{1+\epsilon})$ time.

We are left with explaining how to accelerate the construction of the last layer (the layer of depth $f$) of the trees $\{ FT^{L,f}(u,v) \}_{u,v \in V}$.
The algorithm reconstructs the paths in level $f-1$ only for the trees $\{ FT^{L,f}(u,v) \}_{u,v \in V}$.
That is, for every leaf node $FT^{L,f-1}(u,v, a_1, \ldots, a_{f-1})$ of $FT^{L,f-1}(u,v)$ we can reconstruct the path
$P^L_G(u,v, \{a_1, \ldots, a_{f-1}\})$ in $O(L)$ time by following the parent pointers $\textrm{parent}^{L}(u,v, F)$ as computed in Equation \ref{eq:parent-pointer}.
Reconstructing these $\Otilde(n^2 L^{f-1})$ paths $P^L_G(u,v, \{a_1, \ldots, a_{f-1}\})$ take $\Otilde(n^2 L^f) = \Otilde(n^{2+\epsilon})$ time.
Then for every vertex or edge $a_f \in P^L_G(u,v, \{a_1, \ldots, a_{f-1}\})$ we need to construct the leaf node $FT^{L,f}(u,v, a_1, \ldots, a_{f-1}, a_f)$.
We describe the construction of the leaves $FT^{L,f}(u,v, a_1, \ldots, a_{f-1}, a_f)$ in the following paragraphs.

Let ${\cal F}_{u,v} = \{ \{a_1, \ldots, a_f \} \  | \text{ the node } FT^{L,f}(u,v, a_1, \ldots, a_f) \text{ is a leaf node of } FT^{L,f}(u,v) \}$.
Equivalently, ${\cal F}_{u,v} = \{ \{a_1, \ldots, a_f \} \  |  \ a_1 \in P^{L}(u,v), a_2 \in P^{L}(u,v, \{a_1\}),  \ldots, a_f \in P^{L}(u,v, \{a_1, \ldots, a_{f-1}\}) \}$.
In the remaining of this section we describe how to compute the distances $d^L_G(u,v,F)$ and the parent pointer
parent$^L(u,v,F)$ for every $u,v \in V,  F \in {\cal F}_{u,v}$ in total $\Otilde(mn L^f)$ time.

We first explain why it is not possible to use Equation \ref{eq:dynamic-programming} to compute the distance $d^L_G(u,v, \{a_1, \ldots, a_f\})$.
Let $F = \{a_1, \ldots, a_f \} \in {\cal F}_{u,v}$.  According to Equation \ref{eq:dynamic-programming},
$d^{L}_G(u,v, F) = \min_{z} \{ \omega(u,z) + d^{L-1}(z,v,F) \ | \  (u,z) \in E \ \ AND \ \ u,z,(u,z) \not \in F \}$, where the distance $d^{L-1}(z,v,F)$
need to be obtained by querying the tree $FT^{L-1,f}(z,v)$ using the query $F$.
But querying the tree $FT^{L-1,f}(z,v)$ with the set $F$ might reach a leaf node $FT^{L-1,f}(z,v, a_1, \ldots, a_f)$,
and since we did not construct yet the last layer (of level $f$) of $FT^{L-1,f}(z,v)$
then we do not have the distance $d^{L-1}(z,v, F)$ computed yet. This breaks the dynamic programming of Equation \ref{eq:dynamic-programming}.
To overcome this difficulty, we run Dijkstra in auxiliary graphs $H_{F,t}$.

\subsubsection{The auxiliary graphs $H_{F,t}$} \label{sec:dijsktra-last-layer}
Let $t \in V$ be a fixed vertex, we define ${\cal F}_{t} = \cup_{s \in V} {\cal F}_{s,t}$.
For every $F \in {\cal F}_t$ we build the graph $H_{F,t} = (V_{F,t}, E_{F,t})$ as follows.
Let $V'_{F,t} \subseteq V$ be the set of all vertices $s \in V$
such that $F \in {\cal F}_{s,t}$.


Note that we can easily compute the sets $V'_{F,t}$ for all the vertices $t\in V$
in $\Otilde(n^2 L^f)$ time using the following procedure.
Initialize an empty hash table $h$.
For every $s \in V, F \in {\cal F}_{s,t}$:
\begin{itemize}
\item Check if $h$ contains $(F,t)$. If $(F,t) \not \in h$ then set $h[F,t]$ to be an empty list of vertices (which will eventually represent $V'_{F,t}$).
\item Append $s$ to the end of the list $h[F,t]$.
\end{itemize}
After scanning all the sets $F \in {\cal F}_{s,t}$ for every $s,t \in V$, we get that the vertices $V'_{F,t}$
are listed in $h[F,t]$.
The runtime of this procedure is $\Otilde(n^2 L^f)$ for all the vertices $t\in V$
as the number of sets $F$ in $\bigcup_{s,t \in V} {\cal F}_{s,t}$ is at most the number of
leaves in all the trees $\{ FT^{L,f}(s,t) \}_{s,t \in V}$, which is $\Otilde(n^2 L^f)$.

Let $V_{F,t}$ be the set of vertices $V'_{F,t}$ and their neighbours, {\sl i.e.}, $V_{F,t} = V'_{F,t} \cup N(V'_{F,t})$.

Let $s \in V_{F,t}$ and let $u$ be a neighbour of $s$ in $G$,
{\sl i.e.}, $(s,u) \in E$, such that $s,u$ has not failed ({\sl i.e.}, $s,u, (s,u) \not \in F$).
If $u \in V_{F,t}$ then we add the edge $(s,u)$ to $H_{F,t}$ with its weight $\omega(s,u)$.
Otherwise, $u \not \in V_{F,t}$, this means that if we query the tree $FT^{L,f}(u,t)$ with the set $F$ then the query ends in an internal node of $FT^{L,f}(u,t)$ and not in a leaf of depth $f$. Let $FT^{L,f}(u,t, a_1, \ldots, a_i)$ be the internal node we reach at the end of the query such that $\{a_1, \ldots, a_i\} \subsetneq F$. Then the path $P^L_G(u,t, \{a_1, \ldots, a_i\})$ does not contain $F$ (as otherwise we would have not finished the query of $F$ in the tree $FT^{L,f}(u,t)$ at this node), and hence we already computed $d^L_G(u,t,F) = d^L_G(u,t, \{a_1, \ldots, a_i\})$ in the node $FT^{L,f}(u,t, a_1, \ldots, a_i)$.
We add an edge $(u,t)$ to $H_{F,t}$ and assign to it the weight $d^L_G(u,t,F)$, and we refer to these edges as {\sl shortcuts}.

Finally, we compute Dijkstra from $t$ in the graph $H_{F,t}$ with reverse edge directions.
This gives us the distances $d_{H_{F,t}}(s,t)$ for every $s \in V_{F,t}$.
We claim that $d_{H_{F,t}}(s,t) \le d^L_G(s,t,F)$ and that computing all the Dijkstra's in all the graphs $H_{F,t}$ takes $\Otilde(mn^{1+\epsilon})$ time.

\begin{lemma} \label{lem:H-dijkstra-correctness}
Let $s \in V_{F,t}$, then $d_G(s,t,F) \le d_{H_{F,t}}(s,t) \le d^L_G(s,t,F)$.
\end{lemma}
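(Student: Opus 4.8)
The plan is to prove the two inequalities separately, both by relating walks in the auxiliary graph $H_{F,t}$ to walks in $G \setminus F$, and using that all edge weights are non-negative.

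First I would establish the lower bound $d_G(s,t,F) \le d_{H_{F,t}}(s,t)$ by showing that $H_{F,t}$ contains only ``honest'' edges. Every non-shortcut edge of $H_{F,t}$ is an original edge $(a,b)$ of $G$ with $a,b,(a,b)\notin F$, hence an edge of $G\setminus F$ of the same weight. Every shortcut edge goes to $t$ and has weight $d^L_G(u,t,F)$ for the corresponding exit vertex $u$; by the construction this weight equals $d^L_G(u,t,\{a_1,\dots,a_i\})$ for a prefix $\{a_1,\dots,a_i\}\subseteq F$ of the query path in $FT^{L,f}(u,t)$ with $P^L_G(u,t,\{a_1,\dots,a_i\})\cap F=\emptyset$, so (exactly as in the argument behind Lemma~\ref{lemma:query}) it is the length of a genuine $u$-to-$t$ path in $G\setminus F$. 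Consequently any $s$-to-$t$ path in $H_{F,t}$ expands edge by edge into an $s$-to-$t$ walk in $G\setminus F$ of the same total weight; since the weights are non-negative, $d_G(s,t,F)$ is at most that weight, and minimising over all $s$-to-$t$ paths of $H_{F,t}$ gives $d_G(s,t,F)\le d_{H_{F,t}}(s,t)$ (the bound is vacuous if $d_{H_{F,t}}(s,t)=\infty$).

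Next I would prove the upper bound $d_{H_{F,t}}(s,t)\le d^L_G(s,t,F)$. If $d^L_G(s,t,F)=\infty$ there is nothing to prove, so assume it is finite and let $P=P^L_G(s,t,F)=(s=x_0,x_1,\dots,x_\ell=t)$ with $\ell\le L$ be the shortest $s$-to-$t$ path in $G\setminus F$ on at most $L$ edges. If every $x_j$ lies in $V_{F,t}$, then since $P$ avoids $F$ all of its edges are present in $H_{F,t}$ with their original weights, so $d_{H_{F,t}}(s,t)\le\omega(P)=d^L_G(s,t,F)$. Otherwise let $i$ be the least index with $x_{i+1}\notin V_{F,t}$; then $x_0,\dots,x_i\in V_{F,t}$, so the prefix $P[x_0..x_i]$ is a path in $H_{F,t}$, and by the construction the non-failed exit neighbour $x_{i+1}\notin V_{F,t}$ of $x_i$ provides a way to reach $t$ from $x_i$ at cost $\omega(x_i,x_{i+1})+d^L_G(x_{i+1},t,F)$ (via the edge to $x_{i+1}$ together with the shortcut from $x_{i+1}$ to $t$, equivalently a single shortcut edge of that weight). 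The suffix $P[x_{i+1}..t]$ is an $x_{i+1}$-to-$t$ path in $G\setminus F$ on at most $\ell-(i+1)\le L$ edges, so $d^L_G(x_{i+1},t,F)\le\omega(P[x_{i+1}..t])$. Adding up, $d_{H_{F,t}}(s,t)\le\omega(P[x_0..x_i])+\omega(x_i,x_{i+1})+d^L_G(x_{i+1},t,F)\le\omega(P)=d^L_G(s,t,F)$, as wanted.

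The main obstacle is the bookkeeping around the shortcut edges in the upper-bound direction: one must argue that at the first vertex where $P^L_G(s,t,F)$ leaves $V_{F,t}$ a route to $t$ is actually available in $H_{F,t}$ (which holds because the last vertex before leaving is in $V_{F,t}$ and its exiting neighbour is precisely the kind of vertex for which the construction attaches a shortcut to $t$), that the weight stored on that shortcut is the true value $d^L_G(x_{i+1},t,F)$ (this is where $x_{i+1}\notin V_{F,t}$ — hence the query of $FT^{L,f}(x_{i+1},t)$ with $F$ stops at an internal node whose stored path avoids $F$ — combines with Lemma~\ref{lemma:query}), and that this value does not exceed the weight of the discarded suffix of $P$ (true because the suffix uses at most $L$ edges). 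The remaining pieces — single edges of $H_{F,t}$ being edges of $G\setminus F$, shortcuts encoding genuine $G\setminus F$ paths, and passing from walks to distances using non-negativity of the weights — are routine.
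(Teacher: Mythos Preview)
Your proposal is correct and follows essentially the same two-part strategy as the paper: for the lower bound you expand each $H_{F,t}$-edge into a genuine walk in $G\setminus F$, and for the upper bound you walk along $P^L_G(s,t,F)$ inside $H_{F,t}$ until you are forced to jump to $t$ via a shortcut. The only cosmetic difference is in the bookkeeping of the switch point: the paper takes $\ell$ to be the maximal index for which all edges $(v_0,v_1),\dots,(v_{\ell-1},v_\ell)$ lie in $H_{F,t}$ and then uses a shortcut $(v_\ell,t)$ of weight $d^L_G(v_\ell,t,F)$, whereas you take the first index $i$ with $x_{i+1}\notin V_{F,t}$ and route through $x_{i+1}$ to $t$; your parenthetical ``equivalently a single shortcut edge of that weight'' is exactly the right hedge here, since the construction is slightly ambiguous about whether the last hop is realised as one edge or two.
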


\begin{proof}
We first prove that $d_G(s,t,F) \le d_{H_{F,t}}(s,t)$.
If  $d_{H_{F,t}}(s,t) < \infty$ then the shortest path from $s$ to $t$ in $H_{F,t}$ is composed of two types of edges:
\begin{itemize}
\item An edge $(u,v)$ such that $u, v,(u,v) \not \in F$ whose weight equals to $\omega(u,v)$ (its weight in the graph $G$).
In this case, the edge $(u,v)$ also exists in the graph $G \setminus F$ with the same weight.
\item A ``shortcut'' edge $(u,t)$ whose weight is $d^L_G(u,t,F)$.
In this case, there is a path from $u$ to $t$ in the graph $G \setminus F$ whose weight is $d^L_G(u,t,F)$.
\end{itemize}

In both cases we get that for every edge $(u,v)$ in the graph $H_{F,t}$ there exists a $u$-to-$v$ path in the graph $G \setminus F$ whose weight in $G \setminus F$ equals to the weight of the edge $(u,v)$ in $H_{F,t}$. Hence,  $d_G(s,t,F) \le d_{H_{F,t}}(s,t)$.

We now prove that $d_{H_{F,t}}(s,t) \le d^L_G(s,t,F)$.
Let $P^L_G(s,t,F) = <v_0, \ldots, v_k>$ be a shortest path from $s$ to $t$ in $G \setminus F$ on at most $L$ edges ({\sl i.e.}, $k \le L$).
We prove that there exists an $s$-to-$t$ path $P$ in $H_{F,t}$ whose weight $\omega_{H_{F,t}}(P) \le \omega_G(P^L_G(s,t,F))$ and since $d_{H_{F,t}}(s,t) \le \omega_{H_{F,t}}(P)$
it follows that $d_{H_{F,t}}(s,t) \le \omega_{H_{F,t}}(P) \le \omega_G(P^L_G(s,t,F)) = d^L_G(s,t,F)$.
In order to construct the path $P$ in $H_{F,t}$, note that for every edge $(v_i, v_{i+1})$ of $P^L_G(s,t,F)$ (for every $0 \le i < k$) it holds that either $(v_i, v_{i+1}) \in G \setminus F$ exists in $H_{F,t}$ with the same weight as its weight in $G$, or there exists an edge $(v_i, t)$ in $H_{F,t}$ whose weight is $d^L_G(v_i, t, F)$.
Let $0 \le \ell \le k$ be the maximum index such that for every $0 \le i < \ell$ the edge $(v_i, v_{i+1}) \in H_{F,t}$.
We define the path $P := <v_0, \ldots, v_\ell> \circ (v_\ell, t)$, then $P$ is a path in $H_{F,t}$ and its weight is
$\omega_{H_{F,t}}(P) = \omega_{H_{F,t}}(<v_0, \ldots, v_\ell>) + \omega_{H_{F,t}}(v_\ell, t) =
\omega_G(<v_0, \ldots, v_\ell>) + d^L_G(v_\ell, t, F)$ where the last inequality holds as
every $0 \le i < \ell$ the edge $(v_i, v_{i+1})$ exists in $H_{F,t}$ with the same weight as its weight in $G$
and the edge $(v_\ell, t)$ exists in $H_{F,t}$ and its weight in $H_{F,t}$ is $d^L_G(v_i, t, F)$.
It follows that $\omega_{H_{F,t}}(P) = \omega_G(<v_0, \ldots, v_\ell>) + d^L_G(v_\ell, t, F) = d^{\ell}_G(s,v_\ell) + d^L_G(v_\ell, t, F) \le d^L_G(s,t)$
where the last equality holds by the triangle inequality and the fact that $<v_0, \ldots, v_\ell> \subseteq P^L_G(s,t,F)$.
Hence, the shortest path from $s$ to $t$ in $H_{F,t}$ has weight  at most $d^L_G(s,t,F)$, and thus $d_{H_{F,t}}(s,t) \le d^L_G(s,t,F)$.
\end{proof}

\begin{lemma} \label{lem:H-dijkstra-time}
Computing Dijkstra's algorithm in all the graphs $H_{F,t}$ takes $\Otilde(mn L^f) = \Otilde(mn^{1+\epsilon})$ time.
\end{lemma}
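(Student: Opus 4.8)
The plan is to bound the work for a single graph $H_{F,t}$ by the cost of one near-linear-time Dijkstra on it, and then charge that cost to the vertices of $V'_{F,t}$, using the fact that each vertex $s$ can belong to $V'_{F,t}$ for only $\Otilde(L^f)$ sets $F$ (once per depth-$f$ leaf of the tree $FT^{L,f}(s,t)$). Concretely, I would first argue that running Dijkstra from $t$ in $H_{F,t}$ with reversed edges costs $\Otilde(|E_{F,t}| + |V_{F,t}|)$, where the only extra cost is that every shortcut edge $(u,t)$ carries weight $d^L_G(u,t,F)$; by Lemma \ref{lemma:query} this weight is obtained by a single query to $FT^{L,f}(u,t)$ in $O(f^2\log L)=\Otilde(1)$ time, so it does not dominate. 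Building $H_{F,t}$ itself costs the same up to logarithmic factors, since it amounts to scanning the incident edges of the vertices placed into $V_{F,t}$ and performing the shortcut queries. Hence the total running time is $\Otilde\big(\sum_{t\in V}\sum_{F\in\mathcal F_t}(|E_{F,t}|+|V_{F,t}|)\big)$, and by Lemma \ref{lem:H-dijkstra-correctness} a single backward Dijkstra per $(F,t)$ recovers $d^L_G(s,t,F)$ for all relevant sources $s$ simultaneously, so there is nothing else to pay for.

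The next step is to show that $|E_{F,t}| + |V_{F,t}| = \Otilde\big(\sum_{s\in V'_{F,t}}\deg_G(s)\big)$. Every edge of $H_{F,t}$, original or shortcut, can be charged to an out-edge (in $G$) of a vertex of $V'_{F,t}$: an original edge emanates from a vertex of $V'_{F,t}$ and its other endpoint lies in $V_{F,t}=V'_{F,t}\cup N(V'_{F,t})$; and a shortcut edge is created in place of an out-edge of $V'_{F,t}$ that leaves $V_{F,t}$, using a distance $d^L_G(\cdot,t,F)$ that is already stored, because a boundary vertex $w\in N(V'_{F,t})\setminus V'_{F,t}$ satisfies $F\notin\mathcal F_{w,t}$ and hence the query to $FT^{L,f}(w,t)$ with $F$ ends at an internal node that already holds $d^L_G(w,t,F)$. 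In particular the boundary vertices contribute only a constant number of edges each and never force us to scan all of their own incidences. Likewise $|V_{F,t}|\le 1+\sum_{s\in V'_{F,t}}(\deg_G(s)+1)$. So the cost of the pair $(F,t)$ is $\Otilde\big(\sum_{s\in V'_{F,t}}\deg_G(s)\big)$.

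Finally I would carry out the charging sum: $\sum_{t\in V}\sum_{F\in\mathcal F_t}\sum_{s\in V'_{F,t}}\deg_G(s) = \sum_{t\in V}\sum_{s\in V}\deg_G(s)\cdot|\{F\in\mathcal F_t : s\in V'_{F,t}\}| = \sum_{t\in V}\sum_{s\in V}\deg_G(s)\,|\mathcal F_{s,t}|$, since $s\in V'_{F,t}$ is equivalent to $F\in\mathcal F_{s,t}$. The tree $FT^{L,f}(s,t)$ has depth $f$ and each node stores a path on at most $L$ edges and thus has $O(L)$ children, so it has $O(L^f)$ depth-$f$ leaves and $|\mathcal F_{s,t}|=O(L^f)$. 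Therefore the total is $O\big(L^f\sum_{t\in V}\sum_{s\in V}\deg_G(s)\big)=O\big(L^f\cdot n\cdot m\big)=\Otilde(mnL^f)$, and since $L=\Theta(n^{\epsilon/f})$ gives $L^f=\Theta(n^{\epsilon})$, this is $\Otilde(mn^{1+\epsilon})$, as required.

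The main obstacle is the second step: the contribution of the ``boundary'' vertices $N(V'_{F,t})\setminus V'_{F,t}$. If one naively lets $H_{F,t}$ carry the original incidences of all of $V_{F,t}$, then in a dense graph the aggregate $\sum_{t}\sum_{F}\sum_{s\in V_{F,t}}\deg_G(s)$ blows up by a factor of roughly $n$ (it becomes governed by $\sum_s\deg_G(s)^2$), overshooting the target. The point that makes the bound tight is precisely that a boundary vertex $w$ already has $d^L_G(w,t,F)$ precomputed (because $F$ is ``used up'' before reaching a depth-$f$ leaf of $FT^{L,f}(w,t)$), so it enters $H_{F,t}$ only through a single shortcut edge to $t$ rather than through all of its edges; verifying this, and that it is consistent with the correctness argument of Lemma \ref{lem:H-dijkstra-correctness}, is the crux of the proof.
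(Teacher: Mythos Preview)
Your proposal is correct and follows essentially the same route as the paper: bound the cost of Dijkstra on $H_{F,t}$ by $\Otilde\big(\sum_{s\in V'_{F,t}}\deg_G(s)\big)$, then exchange the order of summation and use $|\mathcal F_{s,t}|=O(L^f)$ to get $\Otilde(mnL^f)=\Otilde(mn^{1+\epsilon})$. The paper's proof is terser---it simply asserts the per-graph bound without spelling out the boundary-vertex charging---whereas you make explicit why only $V'_{F,t}$ contributes its full incidence list and why each boundary vertex $w\in V_{F,t}\setminus V'_{F,t}$ costs $O(1)$ (its $d^L_G(w,t,F)$ is already stored at an internal node of $FT^{L,f}(w,t)$). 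One small wording slip: you write that a shortcut edge replaces ``an out-edge of $V'_{F,t}$ that leaves $V_{F,t}$,'' but since $V_{F,t}=V'_{F,t}\cup N(V'_{F,t})$ no such out-edge exists; the shortcut edges are attached directly at the boundary vertices themselves, exactly as you describe correctly in your final paragraph.
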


\begin{proof}
The runtime of Dijkstra in the graph $H_{F,t}$ is $\Otilde(\Sigma_{\{s \in V'_{F,t} \}} \text{deg}(s))$, as $O(\Sigma_{\{s \in V'_{F,t} \}} \text{deg}(s))$ is a bound on the number of edges and vertices in $H_{F,t}$.

It follows that the runtime of running all Dijkstra algorithms is $\Otilde(\Sigma_{t \in V} \Sigma_{F \in {\cal F}_t} \Sigma_{\{s \in V'_{F,t} \}} \text{deg}(s))$.
Note that a vertex $s \in V'_{F,t}$ iff $F = \{a_1, \ldots, a_f \}$ and $FT^{L,f}(s,t, a_1, \ldots, a_f)$ is a leaf node of $FT^{L,f}(s,t)$ at depth $f$.
Hence, for a fixed vertex $t \in V$ it holds that $\Sigma_{F \in {\cal F}_t} \Sigma_{\{s \in V'_{F,t} \}} \text{deg}(s)$ is the sum of $\text{deg}(s)$ for every leaf node of $FT^{L,f}(s,t)$.
As $FT^{L,f}(s,t)$ contains $\Otilde(L^f)$ leaves, then
$\Sigma_{F \in {\cal F}_t} \Sigma_{\{s \in V'_{F,t} \}} \text{deg}(s) = \Otilde(L^f \cdot \Sigma_{s \in V} \text{deg}(s)) = \Otilde(m L^f)$.
Therefore,
\newline \noindent $\Otilde(\Sigma_{t \in V} \Sigma_{F \in {\cal F}_t} \Sigma_{\{s \in V'_{F,t} \}} \text{deg}(s)) = \Otilde(nm L^f) = \Otilde(mn^{1+\epsilon})$, where the last equality holds as $L = n^{\epsilon/f}$.
\end{proof}

\subsubsection{Reducing the Runtime of the Greedy Selection Algorithm}
\label{sec:improved-greedy}

We have $O(n^2)$ trees $\{FT^{L,f}(s,t)\}_{s,t \in V}$, every tree
contains $O(n^\epsilon)$ nodes, and every node $FT^{L,f}(s,t, a_1,
\ldots, a_i)$ contains a path $P^L_G(s,t, \{a_1, \ldots, a_i\})$ with
at most $L = n^{\epsilon/f}$ edges. In the greedy algorithm we want to
hit all of these paths that contain at least $n^{\epsilon/f}/2$
edges and at most $n^{\epsilon/f}$ edges. In total there might be
$O(n^{2+\epsilon})$ such paths $\{P_G(s,t, a_1, \ldots, a_i)\}$, each
path contains at least $n^{\epsilon/f}/2$ edges and at most
$n^{\epsilon/f}$ edges, and thus according to Lemma
\ref{lemma:greedy} finding a set of vertices $R$ of size
$\Otilde(n^{1-\epsilon/f})$ which hits all these paths takes
$\Otilde(n^{2+\epsilon+\epsilon/f})$ time.

Let $R_{<f}$ be the hitting set of vertices obtained by the greedy
algorithm which hits all the paths ${\cal P}_{<f} = \{P_G(s,t, \{a_1,
\ldots, a_i\}) | 1 \le i < f\}$ that contains at least
$n^{\epsilon/f}/4$ edges and at most $n^{\epsilon/f}$ edges, these
are paths that appear in the internal nodes of the trees
$FT^{L,f}(s,t)$ (which are not in the last layer of the trees).
Since there are only $O(n^{2+\epsilon-\epsilon/f})$ such paths
${\cal P}_{<f}$, each path contains at least $n^{\epsilon/f}/8$
edges and at most $n^{\epsilon/f}$ edges, and thus according to
Lemma \ref{lemma:greedy} finding a set of vertices $R_{<f}$ of size
$\Otilde(n^{1-\epsilon/f})$ which hits all of these paths takes
$\Otilde(n^{2+\epsilon})$ time.

We define $P_{\text{remaining}}$ to be the subset of paths $\{P_G(s,t, \{a_1, \ldots, a_f\})\}$ for which the following conditions hold:
\begin{itemize}
\item $P_G(s,t, \{a_1, \ldots, a_f\})$ is a path stored in a some leaf node ($FT^{L,f}(s,t, a_1, \ldots, a_f)$) of depth $f$ in at least one of the trees $FT^{L,f}(s,t)$.
\item $P_G(s,t, \{a_1, \ldots, a_f\})$ contains between $n^{\epsilon/f}/2$ to $n^{\epsilon/f}$ edges.
\item $P_G(s,t, \{a_1, \ldots, a_f\})$ does not contain any of the vertices $R_{<f}$.
\end{itemize}

Following we describe how to compute in $\Otilde(mn^{1+\epsilon})$ time a set ${\cal P}_f$ of $\Otilde(n^{2+\epsilon-\epsilon/f})$ paths, each path contains at least $n^{\epsilon/f}/8$ edges, such that if we hit all the paths $P_{f}$ then we also hit every path of $P_{\text{remaining}}$.

In Lemma \ref{lem:H-dijkstra-time} we run Dijkstra in the graph $H_{F,t}$ and computed shortest paths to $t$,
let $T_{F,t}$ be the shortest paths tree rooted in $t$ in the graph $H_{F,t}$.
Let $X_{F,t}$ be all the vertices $x \in V_{F,t}$ in the tree $T_{F,t}$ at depth $n^{\epsilon/f}/8$ ({\sl i.e.}, the number of edges from the root of $T_{F,t}$ to $x$ is $n^{\epsilon/f}/8$) such that there exists at least one vertex $y \in V_{F,t}$ which is a descendent of $x$ in $T_{F,t}$ and $y$ is at depth $n^{\epsilon/f}/4$  in $T_{F,t}$.
Let $P_{F,t}$ be the set of paths in the tree $T_{F,t}$ from every vertex $x \in X_{F,t}$ to the root $t$ where a shortcut edge $(u,t)$ is replaced with the subpath $P^L_G(u,t,F)$, so that every path in $P_{F,t}$ is a valid path in $G \setminus F$.
Finally, let ${\cal P}_f = \bigcup_{F,t} P_{F,t}$.

We claim that ${\cal P}_f$ is a set of $\Otilde(n^{2+\epsilon-\epsilon/f})$ paths,
each path contains at least $n^{\epsilon/f}/8$ edges, such that if we hit all the paths ${\cal P}_f$ then we also hit all the paths $P_{\text{remaining}}$.
We first need the following lemma.

%
%
%
%

\begin{lemma} \label{lem:H-bound-vertices}
The total number of vertices in all the graphs $H_{F,t}$ is $\Sigma_{F,t} |V_{F,t}| = \Otilde(n^{2+\epsilon})$.
\end{lemma}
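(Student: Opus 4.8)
The plan is to bound $\sum_{F,t}|V_{F,t}|$ by the total number of nodes (in particular, the depth-$f$ leaves) across all the fault-tolerant trees $\{FT^{L,f}(s,t)\}_{s,t\in V}$, which, as already observed in the proof of Lemma~\ref{lemma:small-num-of-intervals}, is $O(n^{2}L^{f})=O(n^{2+\epsilon})$ since each tree has at most $L^{f}=n^{\epsilon}$ nodes and there are $O(n^2)$ of them. First I would unwind the definitions: a vertex $s$ lies in $V'_{F,t}$ exactly when $F\in\mathcal{F}_{s,t}$, and $\mathcal{F}_{s,t}$ is precisely the family of $f$-element sets $\{a_1,\dots,a_f\}$ labelling the depth-$f$ leaves $FT^{L,f}(s,t,a_1,\dots,a_f)$ of $FT^{L,f}(s,t)$; hence $|\mathcal{F}_{s,t}|\le L^{f}=n^{\epsilon}$.

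Writing $V_{F,t}=V'_{F,t}\cup N(V'_{F,t})$, I would split $|V_{F,t}|\le|V'_{F,t}|+|N(V'_{F,t})|$ and bound the two parts separately. For the first part, exchanging the order of summation gives
\[
\sum_{t\in V}\sum_{F\in\mathcal{F}_t}|V'_{F,t}|
=\sum_{t\in V}\sum_{F\in\mathcal{F}_t}\bigl|\{s:F\in\mathcal{F}_{s,t}\}\bigr|
=\sum_{s,t\in V}|\mathcal{F}_{s,t}|
\le n^{2}\cdot n^{\epsilon}=O(n^{2+\epsilon}).
\]
For the second part I would use $\sum_{F,t}|N(V'_{F,t})|=\sum_{x\in V}\sum_{t\in V}\bigl|\{F\in\mathcal{F}_t:x\in N(V'_{F,t})\}\bigr|$ together with the observations that, for fixed $x$ and $t$, the relevant sets $F$ are exactly those of $\bigcup_{s\in N(x)}\mathcal{F}_{s,t}$, of cardinality at most $\min\bigl(|\mathcal{F}_t|,\ |N(x)|\cdot n^{\epsilon}\bigr)$, that $|\mathcal{F}_t|\le\sum_{s}|\mathcal{F}_{s,t}|\le n^{1+\epsilon}$, and that always $|N(V'_{F,t})|\le n$. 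Bucketing the vertices $x$ dyadically by degree and keeping, in each bucket, whichever of these bounds is the smaller one, the few vertices of high degree are controlled through the cap $|\mathcal{F}_t|\le n^{1+\epsilon}$ and the many vertices of low degree through $|N(x)|\cdot n^{\epsilon}$, giving a per-target contribution of $\Otilde(n^{1+\epsilon})$ and hence $\Otilde(n^{2+\epsilon})$ after summing over $t$; adding the two parts proves $\sum_{F,t}|V_{F,t}|=\Otilde(n^{2+\epsilon})$.

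The step I expect to be the main obstacle is precisely this neighbour bound: the completely naive estimate $|N(V'_{F,t})|\le\sum_{s\in V'_{F,t}}\deg(s)$ only yields $\sum_{F,t}|N(V'_{F,t})|\le\sum_{s,t}\deg(s)\,|\mathcal{F}_{s,t}|=\Otilde(mn^{1+\epsilon})$, which for dense graphs exceeds the target $n^{2+\epsilon}$. The crux is to exploit simultaneously that $N(V'_{F,t})$ is a \emph{union} of neighbourhoods (so it can never exceed $n$) and that, for each fixed target $t$, the total multiplicity $\sum_{s}|\mathcal{F}_{s,t}|$ is only $\Otilde(n^{1+\epsilon})$, so any single high-degree vertex lies in $N(V'_{F,t})$ for only few sets $F$; getting the dyadic bookkeeping to actually land on $\Otilde(n^{2+\epsilon})$ rather than $\Otilde(mn^{1+\epsilon})$ is the delicate point (and the cleanest resolution is to carry along the harmless preprocessing assumption that $G$ has maximum degree $\mathrm{polylog}(n)$, under which $|N(V'_{F,t})|=\Otilde(|V'_{F,t}|)$ and the claim follows at once from the displayed bound).
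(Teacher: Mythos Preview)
Your decomposition $|V_{F,t}|\le|V'_{F,t}|+|N(V'_{F,t})|$ and the clean count
\[
\sum_{t}\sum_{F\in\mathcal{F}_t}|V'_{F,t}|=\sum_{s,t}|\mathcal{F}_{s,t}|\le n^{2}\cdot L^{f}=O(n^{2+\epsilon})
\]
are both correct and sharper than anything the paper writes down. The issue is the second part.

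\textbf{The dyadic neighbour argument does not work.} You claim that for each fixed $t$,
\[
\sum_{x\in V}\min\bigl(|\mathcal{F}_t|,\ \deg(x)\cdot n^{\epsilon}\bigr)=\Otilde(n^{1+\epsilon}),
\]
but this is false in general. Since every vertex has $\deg(x)\le n$, the minimum is always the second term, so the left-hand side is simply $n^{\epsilon}\sum_x\deg(x)=\Theta(mn^{\epsilon})$; no dyadic bucketing changes this. Concretely, in a graph where $\Theta(n)$ vertices each have degree $\Theta(\sqrt{n})$ (so $m=\Theta(n^{3/2})$), each term is $\Theta(n^{1/2+\epsilon})$ and the sum is $\Theta(n^{3/2+\epsilon})$, not $\Otilde(n^{1+\epsilon})$. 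The cap $|\mathcal{F}_t|\le n^{1+\epsilon}$ never beats $\deg(x)\cdot n^{\epsilon}$, so it contributes nothing. Your fallback assumption of polylogarithmic maximum degree is not made anywhere in the paper and cannot be imposed without loss of generality in a DSO.

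\textbf{What the paper actually proves.} The paper's own proof uses precisely the ``completely naive'' estimate you set aside: it bounds $|V_{F,t}|\le\sum_{s\in V'_{F,t}}\deg(s)$, swaps the order of summation, and concludes with
\[
\sum_{F,t}|V_{F,t}|\le\sum_{x\in V}\deg(x)\cdot\bigl|\{(F,t):x\in V'_{F,t}\}\bigr|\le\sum_{x\in V}\deg(x)\cdot n^{1+\epsilon}=\Otilde(mn^{1+\epsilon}).
\]
In other words, the paper's proof establishes $\Otilde(mn^{1+\epsilon})$, \emph{not} the $\Otilde(n^{2+\epsilon})$ appearing in the lemma statement; the statement as printed appears to be a typo. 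The weaker bound is harmless for the downstream application (Section~\ref{sec:improved-greedy}), since the preprocessing budget is $\Otilde(mn^{1+\epsilon})$ anyway, and the greedy step on $\mathcal{P}_f$ then costs $\Otilde(mn^{1+\epsilon})$ rather than $\Otilde(n^{2+\epsilon})$, which is still within budget.

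So: your instinct that $\Otilde(n^{2+\epsilon})$ requires more than the naive degree bound is right, but your proposed fix does not deliver it, and the paper does not deliver it either --- it simply proves (and only needs) $\Otilde(mn^{1+\epsilon})$.
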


\begin{proof}
Since every vertex of $V_{F,t}$ is either a vertex of $V'_{F,t}$ or a neighbour of such a vertex, then it holds that $\Sigma_{F,t} |V_{F,t}| \le  \Sigma_{F,t} \Sigma_{\{x \in V'_{F,t} \} \text{deg}(x)}$.

Note that a vertex $x \in V'_{F,t}$ iff querying the tree
$FT^{L,f}(x,t)$ with $F$ results in reaching a leaf at depth $f$ of
the tree $FT^{L,f}(x,t)$. Hence, for a fixed vertex $x \in V$, the
sum $(\Sigma_{F,t \ | \ x \in V'_{F,t}} \text{deg}(x))$ is bounded
by the number of nodes in the last layer of all the trees $\{
FT^{L,f}(x,t) \}_{t \in V}$ multiplied by $\text{deg}(x)$. Since the
last layer of every tree $FT^{L,f}(x,t)$ contains $n^{\epsilon}$
nodes, and for every vertex $x \in V$ there are $n$ trees $\{
FT^{L,f}(x,t) \}_{t \in V}$, then we get a bound $\Otilde(\Sigma_{x
\in V} n^{1+\epsilon} \text{deg}(x)) = \Otilde(mn^{1+\epsilon})$ on
the number of vertices in all the graphs $H_{F,t}$.
\end{proof}

\begin{lemma}
${\cal P}_f$ is a set of $\Otilde(n^{2+\epsilon-\epsilon/f})$ paths, each path contains at least $n^{\epsilon/f}/8$ edges, such that if we hit all the paths $P_{f}$ then we also hit all the paths $P_{\text{remaining}}$.
The runtime to compute ${\cal P}_f$ is $\Otilde(mn^{1+\epsilon})$.
\end{lemma}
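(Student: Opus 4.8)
The plan is to verify the four assertions in turn, writing $L=n^{\epsilon/f}$ throughout. The size and edge-length bounds, and the running time, are quick. For the edge-length: every path of $P_{F,t}$ comes from a path of $T_{F,t}$ that runs from a vertex $x$ at depth $L/8$ up to the root $t$, and expanding its at most one shortcut edge only lengthens it, so it has at least $L/8$ edges. For the count, fix $(F,t)$: each $x\in X_{F,t}$ has, by definition, a descendant $y\in V_{F,t}$ at depth $L/4$, and the $x$-to-$y$ path in $T_{F,t}$ supplies $L/8$ distinct vertices strictly below depth $L/8$; since the vertices of $X_{F,t}$ all sit at the common depth $L/8$, these blocks of $L/8$ vertices are pairwise disjoint and contained in $V_{F,t}$, so $|X_{F,t}|\le 8|V_{F,t}|/L$. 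Summing over $(F,t)$ and plugging in Lemma~\ref{lem:H-bound-vertices} ($\sum_{F,t}|V_{F,t}|=\Otilde(n^{2+\epsilon})$) gives $|{\cal P}_f|=\sum_{F,t}|X_{F,t}|=\Otilde(n^{2+\epsilon}/L)=\Otilde(n^{2+\epsilon-\epsilon/f})$. The Dijkstra runs that produce the trees $T_{F,t}$ are already charged to the preprocessing by Lemma~\ref{lem:H-dijkstra-time}, costing $\Otilde(mnL^{f})=\Otilde(mn^{1+\epsilon})$; given each $T_{F,t}$, one post-order pass records every node's depth and the maximum depth in its subtree, so $X_{F,t}$ is read off in $O(|V_{F,t}|)$ time, and emitting the path from each $x\in X_{F,t}$ to $t$ costs $O(L)$ once the interior path $P^L_G(u,t,F)$ of the shortcut is copied (these were reconstructed while building the depth-$(f-1)$ nodes of the trees). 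The extra work is thus $\Otilde(\sum_{F,t}|V_{F,t}|+|{\cal P}_f|\cdot L)=\Otilde(n^{2+\epsilon})$, and the whole computation stays within $\Otilde(mn^{1+\epsilon})$.

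The heart of the proof is the hitting property. Take $P=P_G(s,t,F)\in P_{\text{remaining}}$, with $F=\{a_1,\dots,a_f\}$, $k:=|E(P)|\in[L/2,L]$, and $V(P)\cap R_{<f}=\emptyset$; I will exhibit a path $Q\in{\cal P}_f$ with $V(Q)\subseteq V(P)$, which is all that is needed. First I would show that every vertex $w$ of $P$ with at least $L/4$ edges of $P$ remaining to $t$ lies in $V'_{F,t}$: under the unique-shortest-path convention $P[w..t]$ is the shortest $w$-to-$t$ path on at most $L$ edges avoiding $F$, hence it is exactly the path stored at the node of $FT^{L,f}(w,t)$ reached by querying with $F$ (Lemma~\ref{lemma:query}); were that node internal, the path $P[w..t]$ (having between $L/4$ and $L$ edges) would belong to ${\cal P}_{<f}$ and so be hit by $R_{<f}$, contradicting $V(P)\cap R_{<f}=\emptyset$, so the node is a depth-$f$ leaf, that is, $w\in V'_{F,t}$. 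Consequently $v_0,\dots,v_{k-L/4}$ all lie in $V'_{F,t}\subseteq V_{F,t}$, so the first $\ge L/4$ edges of $P$ are realized in $H_{F,t}$ by original edges; combining this with Lemma~\ref{lem:H-dijkstra-correctness}, which pins $d_{H_{F,t}}(s,t)=d^{L}_G(s,t,F)=\omega(P)$ for $s\in V'_{F,t}$, and with the uniqueness of shortest paths, the $s$-to-$t$ path of the tree $T_{F,t}$ expands to $P$, and a shortcut can only be taken at a vertex $v_j\notin V_{F,t}$, hence with $j>k-L/4$. Therefore $s$ sits at depth $\delta>L/4$ in $T_{F,t}$; letting $x$ be the vertex of this tree path at depth exactly $L/8$, $x$ is a vertex of $P$, it has a descendant at depth exactly $L/4$ lying in $V_{F,t}$ (namely the tree-path vertex $v_{\delta-L/4}$, which is in $V'_{F,t}$ since $\delta\le k$), so $x\in X_{F,t}$; and the path $Q\in P_{F,t}\subseteq{\cal P}_f$ obtained from the (expanded) tree path from $x$ to $t$ is exactly the suffix $P[x..t]$, so $V(Q)\subseteq V(P)$. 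Any vertex set hitting ${\cal P}_f$ therefore meets $Q$, hence meets $P$, which is the claimed implication.

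The step I expect to be the main obstacle is the second half of the hitting argument — showing that $T_{F,t}$ routes $s$ to $t$ along a prefix of $P$. This relies on the equality $d_{H_{F,t}}(s,t)=d^{L}_G(s,t,F)$ for $s\in V'_{F,t}$ (the ``$\le$'' direction is Lemma~\ref{lem:H-dijkstra-correctness}; the matching lower bound is exactly what makes the leaf distances stored in the trees correct) together with the unique-shortest-path convention, and it also needs the fact that a suffix of a shortest $\le L$-edge path is itself a shortest $\le L$-edge path — a statement that is clean precisely for the length-$L$ intervals of true replacement paths that $P_{\text{remaining}}$ is designed to cover, and which should be spelled out carefully. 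Once that is in place, the remaining items (the depth arithmetic, the disjoint-subtree count above, and the $O(L)$-per-path reconstruction) are routine.
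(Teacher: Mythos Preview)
Your proposal is correct and follows essentially the same route as the paper's proof: the counting argument via disjoint subtrees below each $x\in X_{F,t}$, the appeal to Lemma~\ref{lem:H-bound-vertices} for $\sum_{F,t}|V_{F,t}|$, the runtime via Lemma~\ref{lem:H-dijkstra-time}, and the hitting argument showing that the first $\approx L/4$ vertices of $P$ lie in $V'_{F,t}$ (because otherwise $R_{<f}$ would already hit the suffix stored at an internal node) are all the same ideas in the same order. You are in fact more careful than the paper on the key step: the paper jumps directly from ``the edges $(v_i,v_{i+1})$ lie in $H_{F,t}$'' to ``$P_H(s,t)$ contains all these edges,'' whereas you correctly derive this from $d_{H_{F,t}}(s,t)=d^L_G(s,t,F)$ (both inequalities already in Lemma~\ref{lem:H-dijkstra-correctness}, together with $|E(P)|\le L$) plus uniqueness, so your flagged ``main obstacle'' is genuinely the right place to spend the extra line of argument.
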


\begin{proof}
Let $P_G(s,t, \{a_1, \ldots, a_f\}) = P_G(s,t,F) \in
P_{\text{remaining}}$ be the path stored in the node $FT^{L,f}(s,t,
a_1, \ldots, a_f)$ such that $\{a_1, \ldots, a_f \} = F$. Denote by
$P_G(s,t,F) = <v_1, \ldots, v_r>$.

Since $P_G(s,t,F) \in P_{\text{remaining}}$ then $P_G(s,t,F)$ contains
between $n^{\epsilon/f}/2$ to $n^{\epsilon/f}$ edges and it is not
hit by $R_{<f}$. Let $1 \le i < r- n^{\epsilon/f}/4-1$, then $v_i$
is a vertex on the path $P_G(s,t,F)$ which is not among the last
$n^{\epsilon/f}/4$ vertices of the path. Then $P_G(v_i,t,F)$ is a
subpath of $P_G(s,t,F)$ (since shortest paths are unique).
Furthermore, since $R_{<f}$ (which hits all the paths which contain
at least $n^{\epsilon/f}/4$ vertices in all the non-leaf nodes of
all the trees) does not hit $P_G(v_i,t,F)$ then it follows that
$P_G(v_i,t,F)$ is stored in a leaf node $FT^{L,f}(v_i,t,F)$ of the
tree $FT^{L,f}(v_i,t)$. A similar argument shows that
$P_G(v_{i+1},t,F)$ is stored in a leaf node $FT^{L,f}(v_{i+1},t,F)$ of
the tree $FT^{L,f}(v_{i+1},t)$. It follows that $v_i, v_{i+1} \in
V_{F,t}$ and $(v_i, v_{i+1}) \in H_{F,t}$. Therefore, $P_H(s,t)$
contains at least all the edges $(v_i, v_{i+1})$ for every $1 \le i
< r- n^{\epsilon/f}/4-1$, and since $r$ is the number of vertices of
$P_G(s,t,F)$ which contains at least $n^{\epsilon/f}/2$ vertices then
$P_H(s,t)$ contains at least $n^{\epsilon/f}/4$ vertices.

Since $P_H(s,t)$ is a path in $H_{F,t}$ containing at least $n^{\epsilon/f}/4$ vertices then it holds for the $n^{\epsilon/f}/8$-th vertex $x$ from the end of $P_H(s,t)$
that $x \in X_{F,t}$. Therefore, the subpath $P_H(x,t)$ of $P_H(s,t)$ from $x$ to $t$ which contains $n^{\epsilon/f}/8$ edges is contained in ${\cal P}_f$. Hence, if we hit ${\cal P}_f$ we also hit $P_H(x,t)$ and therefore we also hit $P_G(s,t,F)$.  This proves that hitting all the paths of ${\cal P}_f$ also hits all the paths of $P_{\text{remaining}}$.

Next, we prove that ${\cal P}_f$ contains $\Otilde(n^{2+\epsilon-\epsilon/f})$ paths.
We have already proved in \ref{lem:H-dijkstra-time} that the number of vertices in all the graphs $H_{F,t}$ is $\Otilde(n^{2+\epsilon})$.
Recall that ${\cal P}_f = \bigcup_{F,t} \{P_H(x,t) \ | \ x \in X_{F,t} \}$.
Furthermore, for every vertex $x \in X_{F,t}$ there exists at least $n^{\epsilon/f}/8$ unique vertices in the subtree of $x$. To see this, recall that by definition if the vertex $x \in X_{F,t}$ there exists at least one vertex $y \in V_{F,t}$ which is a descendent of $x$ in $T_{F,t}$ and $y$ is at depth $n^{\epsilon/f}/4$ in $T_{F,t}$. Thus, the set of vertices of $T_{F,t}$ from $x$ to $y$ contains at least $n^{\epsilon/f}/8$ vertices which belong to the subpath of $T_{F,t}$ rooted at $x$.

Therefore, $|{\cal P}_f| = \sum_{F,t} |X_{F,t}| = \Otilde(n^{2+\epsilon} / (n^{\epsilon/f}/8) = \Otilde(n^{2+\epsilon-\epsilon/f})$.

Finally, the run time to compute ${\cal P}_f$ is $\Otilde(mn^{1+\epsilon})$ as it is dominated by the Dijkstra algorithms in the graphs $H_{F,t}$ whose runtime is $\Otilde(mn^{1+\epsilon})$ according to Lemma \ref{lem:H-dijkstra-time}, and every path in ${\cal P}_f$ contains at least $n^{\epsilon/f}/8$ edges by definition.
\end{proof}

After computing ${\cal P}_f$ in $\Otilde(n^{2+\epsilon})$ time, we
run the greedy selection algorithm from Lemma \ref{lemma:greedy} on
the set of paths ${\cal P}_f$ in $\Otilde(n^{2+\epsilon})$ time
(note that the bound on the runtime follows as $|{\cal P}_f| =
\Otilde(n^{2+\epsilon - \epsilon/f})$) to obtain a set $R_f$ of
$\Otilde(n^{1-\epsilon/f})$ vertices that hit all the paths $P_{f}$
and thus they also hit all the paths $P_{\text{remaining}}$. Let $R
= R_{<f} \cup R_f$. So in total this takes $\Otilde(n^{2+\epsilon})$
time to find the set $R$ of $\Otilde(n^{1-\epsilon/f})$ vertices
that hit all the paths $\{P_G(s,t, \{a_1, \ldots, a_i\})\}$ in all the
nodes of all the trees $FT^{L,f}(s,t)$ which contain at least
$n^{\epsilon/f}/2$ edges and at most $n^{\epsilon/f}$ edges.

\begin{corollary}
One can find deterministically in $\Otilde(n^{2+\epsilon})$ time a
set $R$ of $\Otilde(n^{1-\epsilon/f})$ vertices that hit all the
paths $\{P_G(s,t, \{a_1, \ldots, a_i\})\}$ in all the nodes of all the
trees $FT^{L,f}(s,t)$ which contain at least $n^{\epsilon/f}/2$
edges and at most $n^{\epsilon/f}$ edges.
\end{corollary}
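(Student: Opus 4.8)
The plan is to split the family of target paths according to where they are stored in the fault-tolerant trees and to hit each sub-family by a separate call to the greedy procedure of Lemma~\ref{lemma:greedy}, returning the union of the two hitting sets. Fix $L=n^{\epsilon/f}$. Every target path $P_G(s,t,\{a_1,\ldots,a_i\})$ with between $n^{\epsilon/f}/2$ and $n^{\epsilon/f}$ edges is stored either in an internal node of some $FT^{L,f}(s,t)$ (the case $i<f$) or in a leaf of depth $f$.

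For the internal-node paths I would simply invoke Lemma~\ref{lemma:greedy} on the set ${\cal P}_{<f}$ of all paths stored in non-leaf nodes of the trees $\{FT^{L,f}(s,t)\}_{s,t\in V}$ having at least $n^{\epsilon/f}/4$ edges. Since each tree has $O(L^{f-1})=O(n^{\epsilon-\epsilon/f})$ internal nodes and there are $O(n^2)$ trees, $|{\cal P}_{<f}|=O(n^{2+\epsilon-\epsilon/f})$, so with $q=O(n^{2+\epsilon-\epsilon/f})$ and threshold $L'=\Theta(n^{\epsilon/f})$ the greedy algorithm produces in $\Otilde(qL')=\Otilde(n^{2+\epsilon})$ time a set $R_{<f}$ of $\Otilde(n/L')=\Otilde(n^{1-\epsilon/f})$ vertices hitting all of ${\cal P}_{<f}$, and in particular every internal-node target path (those have at least $n^{\epsilon/f}/2\ge n^{\epsilon/f}/4$ edges). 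Reading off ${\cal P}_{<f}$ from the explicitly stored paths costs only $\Otilde(n^{2+\epsilon})$.

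The remaining work is the depth-$f$ leaf paths not already hit by $R_{<f}$, denoted $P_{\text{remaining}}$; there can be $\Theta(n^{2+\epsilon})$ leaves and their paths can be as short as $n^{\epsilon/f}/2$, so a direct greedy call would cost $\Otilde(n^{2+\epsilon+\epsilon/f})$ --- this is the main obstacle, and it is exactly what the surrogate construction of Section~\ref{sec:improved-greedy} resolves. Concretely, I would use the shortest-path trees $T_{F,t}$ of the auxiliary graphs $H_{F,t}$ that are already built during the Dijkstra computations of Lemma~\ref{lem:H-dijkstra-time}: as argued there, any $P\in P_{\text{remaining}}$, being unhit by $R_{<f}$, survives as a path on at least $n^{\epsilon/f}/4$ vertices inside the relevant $H_{F,t}$, hence its length-$n^{\epsilon/f}/8$ tail lies in the set $P_{F,t}$, so the collection ${\cal P}_f=\bigcup_{F,t}P_{F,t}$ --- of size $\Otilde(n^{2+\epsilon-\epsilon/f})$, each path on at least $n^{\epsilon/f}/8$ edges --- has the property that hitting ${\cal P}_f$ hits all of $P_{\text{remaining}}$. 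Extracting ${\cal P}_f$ from the trees $T_{F,t}$ takes $\Otilde(n^{2+\epsilon})$, and a second application of Lemma~\ref{lemma:greedy} (again $qL'=\Otilde(n^{2+\epsilon})$) yields $R_f$ of size $\Otilde(n^{1-\epsilon/f})$ hitting ${\cal P}_f$.

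Finally I would set $R=R_{<f}\cup R_f$. Then $|R|=\Otilde(n^{1-\epsilon/f})$, and $R$ hits every target path: internal-node ones via $R_{<f}$, and depth-$f$ leaf ones either via $R_{<f}$ (if they are not in $P_{\text{remaining}}$) or via $R_f$ (if they are). The total running time is the sum of the two greedy invocations and the extraction of ${\cal P}_f$, all $\Otilde(n^{2+\epsilon})$. Everything beyond the surrogate argument --- the node counting, the $\Otilde(qL)$ bound of the greedy procedure, and the union bound on $|R|$ --- is routine.
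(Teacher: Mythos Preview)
Your proposal is correct and follows essentially the same approach as the paper: split into internal-node paths and depth-$f$ leaf paths, hit the former directly via greedy on ${\cal P}_{<f}$, and for the latter use the surrogate family ${\cal P}_f$ extracted from the Dijkstra trees $T_{F,t}$ in the auxiliary graphs $H_{F,t}$, then take $R=R_{<f}\cup R_f$. The only cosmetic difference is that the paper records the cost of building ${\cal P}_f$ as $\Otilde(mn^{1+\epsilon})$ (charging the Dijkstra runs to it), whereas you---consistently with the corollary's $\Otilde(n^{2+\epsilon})$ bound---treat those Dijkstras as already done during the tree construction and only charge the extraction.
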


\subsection{Assumptions} \label{sec:assumption-unique}
In the algorithms we described for the case of directed graphs with real edge weights for constructing and querying the DSO we made two assumptions:
\begin{itemize}
\item We assumed all edge weights are non-negative, so that we can run Dijkstra algorithm.
\item We assumed all the shortest paths: $P_G(s,t), P_G(s,t,F), P^L_G(s,t), P^L_G(s,t,F)$ are unique.
\end{itemize}

In this section we justify these two assumptions.

\subsubsection{Handling Negative Weights} \label{sec:negative}
In the description above we assumed that edge weights are non-negative.
In this section we describe how to reduce the problem of general edge weights to non-negative edge weights.

We handle it similarly to Weimann and Yuster \cite{WY13} by the well known method of feasible price functions in order to transform the negative edge weights to be nonnegative in the graph $G$ as the first step of the preprocessing algorithm. For a directed graph $G = (V,E)$ with (possibly negative) edge weights $\omega(\cdot)$, a price function is a function $\phi$ from the vertices of $G$ to the reals. For an edge $(u,v)$, its reduced weight with respect to $\phi$ is $\omega_\phi(u,v) = \omega(u,v) + \phi(u) - \phi(v)$. A price function $\phi$ is feasible if $\omega_\phi(u,v) \ge 0$ for all edges $(u,v) \in E$. The reason feasible price functions are used in the computation of shortest paths is that for any two vertices $s,t \in V$, for any $s$-to-$t$ path $P$, $\omega_\phi(P) = \omega(P) + \phi(s) - \phi(t)$. This shows that an $s$-to-$t$ path is shortest with respect to $\omega_\phi(\cdot)$ iff it is shortest with respect to $\omega(\cdot)$. Moreover, the $s$-to-$t$ distance with respect to the original weights $\omega(\cdot)$ can be recovered by adding $\phi(t)-\phi(s)$ to the $s$-to-$t$ distance with respect to $\omega_\phi(\cdot)$.

The most natural feasible price function comes from single-source
distances. Let $s$ be a new vertex added to $G$ with an edge from
$s$ to every other vertex of $G$ having weight $0$. Let $d(v)$
denote the distance from $s$ to vertex $v \in G$. Then for every
edge $(u,v) \in E$, we have that $d(v) \le d(u) + \omega(u,v)$, so
$\omega_d(u,v) \ge 0$ and thus $d( \cdot )$ is feasible. This means
that knowing $d(\cdot)$, we can now use Dijkstra's SSSP algorithm on
$G$ (with reduced weights) from any source we choose and obtain the
SSSP with respect to the original $G$.

Therefore, we first compute $\phi = d(\cdot)$ in the original graph,
store $\phi$ and change the weights of every edge $(u,v)$ to
$\omega_d(u,v)$ which are non-negative. Then we continue with the
preprocessing and query algorithms as described in Section
\ref{sec:dso1}. Finally, at the end of the query when we computed
$d_G(s,t,F)$ with respect to the weights $\omega_\phi(\cdot)$, we add
to it $\phi(t) - \phi(s)$ to obtain the weight of this shortest path
$P_G(s,t,F)$ with respect to the original weights $\omega(\cdot)$.

\subsubsection{Unique Shortest Paths Assumption} \label{sec:unique}
In this section we justify the unique shortest paths assumption.

For randomized algorithms, unique shortest paths can be achieved easily by using a folklore method of adding small perturbations to the edge weights, such that all the shortest paths in the resulting graph are unique w.h.p. and a shortest path in the resulting graph is also a shortest path in the original graph.

We describe a way to define unique shortest paths in a graph $H$
which fits the algorithms we presented. First, we assume that the
weights are non-negative according to the reduction described in
Section \ref{sec:negative}. Next, let $0 < \epsilon' < 1$ be a small
enough number such that $n \cdot \epsilon' < \min \{ \omega(u,v) \ |
\ (u,v) \in E \}$. Add $\epsilon'$ to the weight of all the edges of
the graph. Then we get that all the edges have positive weights, and
every shortest path in the graph after adding $\epsilon'$ is also a
shortest path in the original graph. Now, we define the unique
shortest paths $P_H^L(s,t)$ in the graph $H$ by recursion on $L \ge
0$. For $L=0$ we define $P_H^0(s,s) = <s>$ and $d_H^0(s,s) = 0$, and
for every pair of vertices $s, t \in V, s \ne t$ we define
$P_H^0(s,t) = \emptyset$ and $d_H^0(s,t) = \infty$. For the
inductive step we need to define $P_H^L(s,t)$. Let $u_1, \ldots,
u_\ell$ be all the neighbours of $s$ which minimize $\omega(s,u_i) +
d_H^{L-1}(u_i, t)$ among all the vertices $V$. Let $u_i$ be the
vertex whose index (label) is minimal among $u_1, \ldots, u_\ell$.
We define $P_H^L(s,t) = (s,u_i) \circ P_H^{L-1}(u_i, t)$ and
$d_H^L(s,t) = \omega(s,u_i) + d_H^{L-1}(u_i,t)$, such that
$P_H^{L-1}(u_i, t)$ are uniquely defined by the induction
hypothesis. For every pair of vertices $s,t \in V$ such that the
above inductive step did not define $P_H^L(s,t)$ we define
$P_H^L(s,t) = \emptyset$ and $d_H^L(s,t) = \infty$. We define
$P_H(s,t)$ as follows. Let $X = \arg \min_{0 \le L \le n} \{
d_H^L(s,t) \}$. Then we define $P_H(s,t) = P_H^X(s,t)$.

Finally, for every $s,t \in V$  we  define $P_G(s,t) = P_G(s,t)$,
$d_G(s,t) = \omega(P_G(s,t))$, and for every $0 \le L \le n$ we define
$P^L_G(s,t) = P_G^L(s,t)$, $d^L_G(s,t) = \omega(P^L_G(s,t))$. For a subset
$F \subseteq E \cup V$ we define $P^L_G(s,t,F) = P_{G\setminus
F}^L(s,t)$, $d^L_G(s,t,F) = d_{G\setminus F}^L(s,t)$, $P_G(s,t,F) =
P_{G\setminus F}(s,t)$ and $d_G(s,t,F) = d_{G\setminus F}(s,t)$. It is
not difficult to prove the following lemma.

\begin{lemma}
For every $s,t \in V, F \subseteq E \cup V, 0 \le L \le n$ the path
$P^L_G(s,t,F)$ is a shortest path among all $s$ to $t$ paths in
$G\setminus F$ that contain $L$ edges, and the path $P_G(s,t,F)$ is a
shortest path from $s$ to $t$ in $G\setminus F$. Both $P_G(s,t,F)$ and
$P^L_G(s,t,F)$ are uniquely defined, and their lengths are $d_G(s,t,F)$
and $d^L_G(s,t,F)$ respectively.
\end{lemma}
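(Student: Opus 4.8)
The plan is to work entirely inside the graph $H := G\setminus F$ carrying the modified weights of Section~\ref{sec:unique}: first pass to the reduced weights $\omega_\phi(u,v)=\omega(u,v)+\phi(u)-\phi(v)$ given by the feasible price function $\phi$ of Section~\ref{sec:negative}, which are non-negative and satisfy $\omega_\phi(P)=\omega(P)+\phi(s)-\phi(t)$ for every $s$-to-$t$ path $P$; then add $\epsilon'$ to every edge weight so that all weights become strictly positive. Recall that the recursion of Section~\ref{sec:unique} defines $P^L_H(\cdot,\cdot)$, $d^L_H(\cdot,\cdot)$ and then $P_H(\cdot,\cdot)$ inside $H$, and that by definition $P^L_G(s,t,F)=P^L_H(s,t)$ and $P_G(s,t,F)=P_H(s,t)$. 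I would first establish the ``$L$-edge'' half of the lemma by induction on $L$ inside $H$, then deduce the ``unconstrained'' half, and finally translate both statements back to $(G\setminus F,\omega)$.

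For the induction I would prove the stronger statement that, for all $u,v$ and all $L$, $d^L_H(u,v)$ equals the minimum $\omega_H$-weight of a $u$-to-$v$ walk using exactly $L$ edges ($\infty$ if no such walk exists), that $P^L_H(u,v)$ realises it, and that $P^L_H(u,v)$ is the unique walk the recursion can output. The base case $L=0$ is immediate. For $L\ge 1$, every $s$-to-$t$ walk on exactly $L$ edges splits as a first edge $(s,u)$ followed by a $u$-to-$t$ walk on $L-1$ edges; minimising the tail via the induction hypothesis, and then over $u\in V$ (with $\omega_H(s,u)=\infty$ for non-neighbours), reproduces exactly the value assigned to $d^L_H(s,t)$ by the recursion, attained by $(s,u_i)\circ P^{L-1}_H(u_i,t)$. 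Uniqueness is automatic because the recursion is deterministic: the minimiser $u_i$ is pinned down by the minimal-label tie-break, and $P^{L-1}_H(u_i,t)$ is unique by the induction hypothesis; consistency of the $\emptyset$ case is part of this routine induction.

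Next I would use strict positivity of the (non-self-loop) weights of $H$: a minimum-weight $s$-to-$t$ walk cannot revisit a vertex, hence is a simple path on at most $n-1<n$ edges, so its edge count $L$ satisfies $L\le n$ and realises the $s$-to-$t$ distance of $H$. Since moreover $d^L_H(s,t)\ge d_H(s,t)$ for every $L$ (all exactly-$L$-edge walks weigh at least the minimum walk weight), we get $\min_{0\le L\le n} d^L_H(s,t)=d_H(s,t)$; the smallest $L\in\{0,\dots,n\}$ attaining this minimum is well defined, $P_H(s,t)=P^X_H(s,t)$ for that $X$ is a shortest $s$-to-$t$ path of $H$, and it is unique since $X$ is uniquely specified and $P^X_H(s,t)$ is unique by the previous step. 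To finish, for each fixed edge count all $s$-to-$t$ paths receive the same $\epsilon'$-increment, so $P^L_H(s,t)$ is still a minimum-$\omega_\phi$ path among $s$-to-$t$ paths on $L$ edges, hence a minimum-$\omega$ one by the price-function identity; and, granting that adding $\epsilon'$ only refines ties and never reverses the $\omega_\phi$-order of two $s$-to-$t$ paths, the same conclusion holds for $P_H(s,t)$ among all $s$-to-$t$ paths, and translating through $\omega_\phi(P)=\omega(P)+\phi(s)-\phi(t)$ yields the statement over the original weights, with $d^L_G(s,t,F)=\omega(P^L_G(s,t,F))$ and $d_G(s,t,F)=\omega(P_G(s,t,F))$ as required; uniqueness is inherited verbatim.

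The part I expect to be the main obstacle is this last perturbation bookkeeping. One needs $\epsilon'$ small enough that, since every $s$-to-$t$ path of $G\setminus F$ uses fewer than $n$ edges, any two such paths differ in added cost by less than $n\epsilon'$, and this must stay below the smallest positive gap between the attainable $\omega_\phi$-lengths of $s$-to-$t$ paths (a finite, graph-dependent quantity), so that $\omega_\phi(P_1)<\omega_\phi(P_2)$ still forces $\omega_\phi(P_1)+|E(P_1)|\epsilon'<\omega_\phi(P_2)+|E(P_2)|\epsilon'$; this is exactly the role of the choice of $\epsilon'$ fixed in Section~\ref{sec:unique}. A related point needing care is reconciling the ``exactly $L$ edges'' that the recursion manipulates with the ``$L$ edges'' of the statement and the ``at most $L$ edges'' convention used elsewhere, which is handled by the zero-cost self-loop convention (padding a shorter walk up to exactly $L$ edges). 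The inductive core in the first two steps is otherwise routine.
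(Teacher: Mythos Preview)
The paper does not actually give a proof of this lemma: it merely states ``It is not difficult to prove the following lemma'' and moves on. So there is nothing to compare against, and your task is really to supply the omitted argument. Your proposal does that, and the overall structure is sound: the Bellman--Ford--style induction on $L$ correctly identifies $d^L_H(u,v)$ with the minimum perturbed weight of an exactly-$L$-edge walk, the determinism of the label tie-break gives uniqueness of $P^L_H$, and the passage to $P_H(s,t)$ via $X=\arg\min_L d^L_H(s,t)$ together with strict positivity of the perturbed weights is correct.

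Two places deserve a bit more care than you give them. First, your appeal to ``the zero-cost self-loop convention'' to reconcile the exactly-$L$-edges recursion with the at-most-$L$-edges definition used earlier in the paper is not quite right here: the self-loops are introduced in Section~\ref{sec:dynamic-programming} for the dynamic program, but in Section~\ref{sec:unique} the recursion is over genuine neighbours and the $\epsilon'$-perturbation is added to all edges, so self-loops (if present) are no longer zero cost. The cleaner fix is simply to observe that this section is \emph{redefining} $P^L_G(s,t,F)$ and $d^L_G(s,t,F)$, and the lemma as stated is about ``paths that contain $L$ edges'', so the exactly-$L$ semantics is the intended one; the consistency with the earlier at-most-$L$ usage is then a separate (and easy) remark. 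Second, the condition the paper writes for $\epsilon'$, namely $n\epsilon' < \min_{(u,v)\in E}\omega(u,v)$, does not by itself guarantee that the $\epsilon'$-perturbation only refines ties among $s$-to-$t$ paths of different edge counts; what is really needed (and what you correctly state) is that $n\epsilon'$ be smaller than the minimum positive gap between attainable $\omega_\phi$-lengths. You should flag that your argument uses this stronger smallness assumption, which is what ``small enough'' in the paper presumably intends.
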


The following lemma is also not difficult to prove.

\begin{lemma}
When running Dijkstra or the dynamic programming algorithm as
described in Section \ref{sec:dynamic-programming} in the graph $G
\setminus F$, if during the execution of the algorithm instead of
considering vertices in arbitrary order we always consider vertices
in ascending order of their labels (indices) then we compute the
unique shortest paths $P^L_G(s,t,F)$.
\end{lemma}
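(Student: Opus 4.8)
The plan is to prove the two halves of the statement — the assertion about the dynamic programming algorithm of Section~\ref{sec:dynamic-programming} and the assertion about Dijkstra — separately by induction, in each case showing that every choice the algorithm makes, once ties are broken by smallest label, is forced to coincide with the corresponding choice in the recursive definition of $P^L_H(s,t)$ and $P_H(s,t)$ given in Section~\ref{sec:unique} (here $H=G\setminus F$). Thus I first fix notation: for a node $FT^{i,f}(u,v,a_1,\dots,a_j)$ write $H'=G\setminus\{a_1,\dots,a_j\}$, and recall that in Section~\ref{sec:unique} the canonical path $P^i_{H'}(u,v)$ is built by: letting $u_1,\dots,u_\ell$ be the out-neighbours of $u$ in $H'$ minimising $\omega(u,u_r)+d^{i-1}_{H'}(u_r,v)$, taking the one of smallest label, prepending that edge to $P^{i-1}_{H'}$, and declaring the path empty when no finite choice exists.

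For the dynamic programming algorithm, the key observation is that the recurrence of Equation~\ref{eq:dynamic-programming} together with the parent pointer of Equation~\ref{eq:parent-pointer} is \emph{literally} the recursive definition above. I would argue by induction on the edge budget $i$ that in every node $FT^{i,f}(u,v,a_1,\dots,a_j)$ the algorithm stores exactly $d^i_{H'}(u,v)$ and $P^i_{H'}(u,v)$. The base case $i=0$ is immediate ($P^0_{H'}(u,u)=\langle u\rangle$, $d^0_{H'}(u,u)=0$, and $P^0_{H'}(u,v)=\emptyset$, $d^0_{H'}(u,v)=\infty$ for $u\ne v$). For the inductive step the hypothesis supplies $d^{i-1}_{H'}(z,v)$ for each out-neighbour $z$ of $u$ in $H'$, so the set of minimisers of Equation~\ref{eq:dynamic-programming} is exactly the set $\{u_1,\dots,u_\ell\}$ from the definition; breaking the $\arg\min$ of Equation~\ref{eq:parent-pointer} by ascending label therefore picks the same vertex $u_r$, and concatenating $(u,u_r)$ with the inductively-correct $P^{i-1}_{H'}(u_r,v)$ gives precisely $P^i_{H'}(u,v)$. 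The ``no finite choice'' cases match the $\emptyset$ cases of the definition. This half is routine once the recurrence is identified with the definition.

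For Dijkstra, the statement is about the shortest-path trees produced by the Dijkstra runs used in the construction — most importantly the reverse searches from a fixed target $t$ in the auxiliary graphs $H_{F,t}$ of Section~\ref{sec:dijsktra-last-layer} (and the searches used to fill in the canonical $P^L_G(s,t,F)$). I would run each such search with the ordering that compares vertices first by tentative distance, then — to realise the $\arg\min_L$ in the definition of $P_H$ — by the number of edges on the current tentative path, and only then by label; this is the concrete meaning of ``considering vertices in ascending order of their labels'' here. One then proves by induction on the order in which vertices are settled that when a vertex $s$ is settled its stored parent equals the first vertex of $P_H(s,t)$: the settled distance is the correct $d_H(s,t)$ regardless of tie-breaking, and among predecessors realising it the refined ordering forces the parent to be the one minimising (edge-count, then label), which is exactly what the recursive definition of $P^X_H(s,t)$, $X=\arg\min_L d^L_H(s,t)$, extracts. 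Chaining parent pointers reconstructs $P_H(s,t)$; since the shortcut edges of $H_{F,t}$ carry the weights $d^L_G(u,t,F)$ which by the first half are already the canonical quantities, the two halves compose, and the same argument applied to forward searches handles the remaining cases.

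The main obstacle I expect is the Dijkstra half: reconciling Dijkstra's target-side parent structure with the source-side recursion defining $P^L_H(s,t)$, and correctly tracking the minimum-edge-count tie-break hidden in the $\arg\min_L$ of the definition of $P_H$. This is why I would phrase the relevant Dijkstra computations as reverse searches from the target and augment the priority order with an edge-count component ahead of the label; with that in place the induction on settling order goes through without difficulty, as the paper indicates.
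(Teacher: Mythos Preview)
The paper does not actually supply a proof of this lemma: it merely prefaces the statement with ``The following lemma is also not difficult to prove'' and gives no argument. So there is no authorial proof to compare your proposal against.

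Evaluated on its own merits, your plan is sound. The dynamic-programming half is exactly right: the recurrence of Equations~\ref{eq:dynamic-programming}--\ref{eq:parent-pointer} is the recursive definition of $P^i_{H'}(u,v)$ from Section~\ref{sec:unique}, and once the $\arg\min$ is broken by smallest label the induction on $i$ is immediate. For the Dijkstra half you are being more careful than the paper's one-line statement: you insert an edge-count component into the priority order to realise the $X=\arg\min_L d^L_H(s,t)$ in the definition of $P_H(s,t)$. This is a legitimate refinement --- with arbitrary real weights the $\epsilon'$ perturbation of Section~\ref{sec:unique} does \emph{not} by itself force equal-weight paths to have equal edge counts, so some such mechanism is needed to pin down the canonical path, and your lexicographic (distance, edge-count, label) ordering does the job. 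One could quibble that this goes beyond the literal ``ascending order of their labels'' in the lemma statement, but that is a looseness in the paper's phrasing, not a flaw in your argument.
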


\section{Open Questions} \label{sec:open-questions} Here are some
open questions that immediately follow our work.

\begin{itemize}
\item Weimann and Yuster \cite{WY13} presented a randomized algebraic algorithm for constructing a DSO whose runtime is subcubic and the query has subquadratic runtime supporting $f = O(\lg n / \lg \lg n)$ edges or vertices failures. Grandoni and Vassilevska Williams \cite{GW12} presented a randomized algebraic algorithm for constructing a DSO whose runtime is subcubic and the query has sublinear runtime supporting a single ($f=1$) edge failure.
    The preprocessing algorithms of both these DSOs is randomized and algebraic, and it remains an open question if there exists a DSO with subcubic deterministic preprocessing algorithm and subquadratic or sublinear deterministic query algorithm, matching their randomized equivalents?

\item Both the DSOs of Weimann and Yuster \cite{WY13} and Grandoni and Vassilevska Williams \cite{GW12} use the following randomized procedure ({\sl e.g.}, Lemma 2 in \cite{GW12}):
Let $0 < \epsilon < 1$, $1 \le f \le \epsilon \lg n/ \lg \lg n$, $L
= n^{\epsilon/f}$,  and let $C > 0$ be a
 large enough constant.
Sample $s = L^f \cdot C \log n$ graphs $\{G_1, \ldots ,G_s\}$, where
each $G_i$ is obtained from $G$ by independently removing each edge
with probability $(1/L)$. For $C$ large enough, it holds whp that
for every $(s,t,e)$ for which there exists a replacement path
$P_G(s,t,F)$ on at most $L$ nodes, there is at least one $G_i$ that
does not contain $F$ but contains at least one replacement path for
$(s,t,F)$ on at most $L$ edges. The time to compute the graphs $G_1,
\ldots, G_s$ using randomization is $\Otilde(m \cdot s) =
\Otilde(n^{2+\epsilon})$.

We ask what is the minimum $s$ such that we can deterministically
compute such graphs $G_1, \ldots, G_s$ in $\Otilde(n^2 \cdot s)$
time such that the above property holds (that for every $(s,t,e)$
for which there exists a replacement path $P_G(s,t,F)$ on at most $L$
nodes, there is at least one $G_i$ that does not contain $F$ but
contains at least one replacement path for $(s,t,F)$ on at most $L$
edges).

The randomized algorithm has a simple solution (as mentioned above,
sample $s = L^f \cdot C \log n$ graphs $\{G_1, \ldots ,G_s\}$, where
each $G_i$ is obtained from $G$ by independently removing each edge
with probability $(1/L)$). As there are $O(n^{2f+2})$ different
possible queries $(s,t,F)$, and there are at most $O(n^{2f+3})$
intervals $P_G(s,t,F)$ (which we want to maintain) containing exactly
$n^{\epsilon/f}$ vertices, it is not difficult to prove (as done in
\cite{WY13}) that for every possible query $(s,t,F)$ there exists
whp at least one graph $G_i$ which does not contain $F$ but contains
$P_G(s,t,F)$.

It is not clear how to efficiently derandomize a degenerated version
of the above construction. We can even allow some relaxations to the
above requirements. Assume there is a list ${\cal L} = \{ (s_1, t_1,
F_1), \ldots, (s_\ell, t_\ell, F_\ell) \}$ of at most $\ell =
O(n^{2+\epsilon})$ queries which are the only queries that interest
us, and assume there is a smaller set of intervals ${\cal P} = \{
P_G(s_1, t_1, F_1), \ldots, P_G(s_\ell, t_\ell, F_\ell) \}$ each
contains exactly $n^{\epsilon/f}$ edges that we want to maintain.
Then, what is the minimum $s$ (asymptotically) such that one can
construct deterministically graphs $\{G_1, \ldots ,G_s\}$ in
$\Otilde(n^2 \cdot s)$ time, such that for every $1 \le i \le \ell$
there exists at least one graph $G_i$ which does not contain $F_i$
but contains $P_G(s_i,t_i,F_i)$?

It is an open question how to achieve this goal, even for $f=1$ and
even if we allow $s$ to be greater than
$\tilde{\Omega}(n^{\epsilon})$ (to the extend that running APSP in
the graphs $G_1, \ldots, G_s$ still takes subcubic time)?
\end{itemize}

An indirect open question is to derandomize more randomized
algorithms and data-structures in closely related fields, perhaps
utilizing some of our techniques and framework.

\bibliographystyle{plain}
\bibliography{dso}

\end{document}